\documentclass{article}

\usepackage{amsmath, amsthm, amssymb}
\usepackage{graphicx}
\usepackage{verbatim}
\usepackage{caption}
\usepackage{subcaption}
\usepackage{fancyvrb}
\usepackage{enumerate}
\usepackage{setspace}

\usepackage{refcount}

\usepackage{relsize}
\usepackage{float}

\usepackage{multirow}
\usepackage{rotating}

\RequirePackage{natbib}

\theoremstyle{plain}

\newtheorem*{theo*}{Theorem}

\newtheorem{prop}{Proposition}

\newtheorem{coro}[prop]{Corollary}
\newtheorem{lemm}[prop]{Lemma}
\newtheorem{theo}[prop]{Theorem}

\theoremstyle{definition}

\newtheorem{defi}{Definition}

\newtheorem{deficmp}{Definition}

\newtheorem{rema}{Remark}

\usepackage{stefan_tex}

\theoremstyle{definition}
\newfloat{algbox}{tbp}{lop}
\newtheorem{alg}{Procedure}
\newcommand{\myalg}[3]{
\begin{center}
\fbox{
\parbox{0.95\textwidth}{
\begin{alg}\label{#1}{\textsc{ #2}}
\vspace{.1cm}\\ #3 
\end{alg}
}}
\end{center}
}

\usepackage{hyperref}
\usepackage[margin=1.49in]{geometry}
\hypersetup{colorlinks,citecolor=blue,urlcolor=blue,linkcolor=blue}

\newcommand{\rf}{\operatorname{RF}}
\newcommand{\cf}{\operatorname{CF}}

\newcommand{\hVIJ}{\widehat{V}_{IJ}}

\newcommand{\bY}{\widebar{Y}}

\newcommand{\diam}{\operatorname{diam}}

\newcommand{\proj}[1]{\mathring{#1}}

\newcommand{\figw}{0.6\textwidth}


\title{Estimation and Inference of Heterogeneous Treatment Effects using Random Forests\footnote{Part of the results developed in this paper were made available as an earlier technical report \emph{``Asymptotic Theory for Random Forests''}, available at \texttt{http://arxiv.org/abs/1405.0352}.}}

\author{Stefan Wager \\ Department of Statistics \\ Stanford University \\ \texttt{swager@stanford.edu}
\and
Susan Athey \\ Graduate School of Business \\ Stanford University \\ \texttt{athey@stanford.edu}}

\begin{document}

\maketitle

\begin{abstract}
Many scientific and engineering challenges---ranging from personalized medicine to customized marketing recommendations---require an understanding of treatment effect heterogeneity. In this paper, we develop a non-parametric \emph{causal forest} for estimating heterogeneous treatment effects that extends Breiman's widely used random forest algorithm. In the potential outcomes framework with unconfoundedness, we show that causal forests are pointwise consistent for the true treatment effect, and have an asymptotically Gaussian and centered sampling distribution. We also discuss a practical method for constructing asymptotic confidence intervals for the true treatment effect that are centered at the causal forest estimates. Our theoretical results rely on a generic Gaussian theory for a large family of random forest algorithms. To our knowledge, this is the first set of results that allows any type of random forest, including classification and regression forests, to be used for provably valid statistical inference. In experiments, we find causal forests to be substantially more powerful than classical methods based on nearest-neighbor matching, especially in the presence of irrelevant covariates.

\vspace{\baselineskip}
\noindent
\textsc{Keywords:} Adaptive nearest neighbors matching; asymptotic normality; potential outcomes; unconfoundedness.
\end{abstract}


\section{Introduction}

In many applications, we want to use data to draw inferences about the causal effect of a treatment: Examples include medical studies about the effect of a drug on health outcomes, studies of the impact of advertising or marketing offers on consumer purchases, evaluations of the effectiveness of government programs or public policies, and ``A/B tests'' (large-scale randomized experiments) commonly used by technology firms to select algorithms for ranking search results or making recommendations.
Historically, most datasets have been too small to meaningfully explore heterogeneity of treatment effects beyond dividing the sample into a few subgroups. Recently, however, there has been an explosion of empirical settings where it is potentially feasible to customize estimates for individuals.    

An impediment to exploring heterogeneous treatment effects is the fear that researchers will iteratively
search for subgroups with high treatment levels, and then report only the results for subgroups with extreme effects, thus 
highlighting heterogeneity that may be purely spurious \citep{assmann2000subgroup, cook2004subgroup}.
For this reason, protocols for clinical trials must specify in advance which subgroups will be analyzed, and other disciplines such
as economics have instituted protocols for registering pre-analysis plans for randomized experiments or surveys.
However, such procedural restrictions can make it difficult to discover strong but unexpected treatment effect heterogeneity.
In this paper, we seek to address this challenge by developing a powerful, nonparametric method for heterogeneous treatment effect estimation that yields valid asymptotic confidence intervals for the true underlying treatment effect.

Classical approaches to nonparametric estimation of heterogeneous treatment effects include nearest-neighbor matching, kernel methods, and series estimation; see, e.g., \citet{crump2008nonparametric}, \citet{lee2009non}, and \citet{willke2012concepts}. These methods perform well in applications with a small number of covariates, but quickly break down as the number of covariates increases.
In this paper, we explore the use of ideas from the machine learning literature to improve the performance of these classical methods with many covariates.
We focus on the family of random forest algorithms introduced by \citet{breiman2001random}, which allow for flexible modeling of interactions in high dimensions by building a large number of regression trees and averaging their predictions.
Random forests are related to kernels and nearest-neighbor methods in that they make predictions using a weighted average of ``nearby" observations;
however, random forests differ in that they have a 
data-driven way to determine which nearby observations receive more weight, something that
is especially important in environments with many covariates or complex interactions among covariates. 

Despite their widespread success at prediction and classification, there are important hurdles that need to be cleared before random forests are directly useful to causal inference. Ideally, an estimator should be consistent with a well-understood 
asymptotic sampling distribution, so that a researcher can use it to test hypotheses and establish confidence intervals.   
For example, when deciding to use a drug for an individual, we may wish to test the hypothesis 
that the expected benefit from the treatment is less than the
treatment cost.  Asymptotic normality results are especially important in the causal inference setting, both
because many policy applications require confidence intervals for decision-making, and because
it can be difficult to directly evaluate the model's performance using, e.g., cross validation, when estimating causal effects.  Yet, the asymptotics of random forests have been largely left open, even in the standard regression or classification contexts.

This paper addresses these limitations, developing a forest-based method for treatment effect estimation that allows for a tractable asymptotic theory and valid statistical inference.
Following \citet{athey2015machine}, our proposed forest is composed of \emph{causal trees} that estimate the
effect of the treatment at the leaves of the trees; we thus refer to our algorithm as a \emph{causal forest}.

In the interest of generality, we begin our theoretical analysis by developing the desired consistency and asymptotic normality results in the context of regression forests.
We prove these results for a particular variant of regression forests that uses subsampling to generate a variety of different trees, while it relies on
deeply grown trees that satisfy a condition we call ``honesty'' to reduce bias. An example of an honest tree is one where the tree is grown using one subsample, while the predictions at the leaves of the tree are 
estimated using a different subsample.  We also show that the heuristically motivated infinitesimal jackknife for random forests developed by \citet{efron2013estimation} and \citet{wager2014confidence} is consistent for the asymptotic variance of random forests in this setting.  Our proof builds on classical ideas from \citet{efron1981jackknife}, \citet{hajek1968asymptotic}, and \citet{hoeffding1948class}, as well as the adaptive nearest neighbors interpretation of random forests of \citet{lin2006random}.
Given these general results, we next show that our consistency and asymptotic normality results extend from the regression setting to estimating heterogeneous treatment effects in the potential outcomes framework with unconfoundedness
\citep{neyman1923applications,rubin1974estimating}.

Although our main focus in this paper is causal inference, we note that 
there are a variety of important applications of the asymptotic normality result in a pure prediction context.  For example,
\citet{kleinberg2015prediction} seek to improve the allocation of medicare
funding for hip or knee replacement surgery by detecting patients who had been prescribed such a surgery, but were
in fact likely to die of other causes before the surgery would have been useful to them. Here we need
predictions for the probability that a given patient will survive for more than, say, one year that come with rigorous
confidence statements; our results are the first that enable the use of random forests for this purpose.

Finally, we compare the performance of the causal forest algorithm against classical $k$-nearest neighbor
matching using simulations, finding that the causal forest dominates in terms of both bias and variance in a variety of
settings, and that its advantage increases with the number of covariates.  We also examine coverage
rates of our confidence intervals for heterogeneous treatment effects.

\subsection{Related Work}

There has been a longstanding understanding in the machine learning literature that prediction methods such as random forests ought to be validated empirically \citep{breiman2001statistical}: if the goal is prediction, then we should hold out a test set, and the method will be considered as good as its error rate is on this test set. 
However, there are fundamental challenges with applying
a test set approach in the setting of causal inference.  In the widely used potential outcomes framework we use to formalize our results \citep{neyman1923applications,rubin1974estimating}, a treatment effect is understood as a difference between two potential outcomes,
e.g., would the patient have died if they received the drug vs. if they didn't receive it. Only one of these potential outcomes can ever be observed in practice, and so direct test-set evaluation is in general impossible.\footnote{\citet{athey2015machine} have proposed indirect approaches to mimic test-set evaluation for causal inference. However, these approaches require an estimate of the true treatment effects and/or treatment propensities for all the observations in the test set, which creates a new set of challenges. In the absence of an observable ground truth in a test set, statistical theory plays a more central role in evaluating the noise in estimates of causal effects.} Thus, when evaluating estimators of causal effects, asymptotic theory plays a much more important role than in the standard prediction context.

From a technical point of view, the main contribution of this paper is an asymptotic normality theory enabling us to do statistical inference using random forest predictions. Recent results by \citet{biau2012analysis, meinshausen2006quantile, mentch2014ensemble, scornet2015consistency}, and others have established asymptotic properties of particular variants and simplifications of the random
forest algorithm.  To our knowledge, however, we provide the first set of conditions under which predictions made by random forests are both asymptotically unbiased and Gaussian, thus allowing for classical statistical inference; the extension to the causal forests 
proposed in this paper is also new. We review the existing theoretical literature on random forests in more detail in Section \ref{sec:theor_relworks}.

A small but growing literature, including \citet{green2012modeling}, \citet{hill2011bayesian} and \citet{hill2013assessing}, has considered the use of forest-based algorithms for estimating heterogeneous treatment effects. These papers use the Bayesian Additive Regression Tree (BART) method of \citet{chipman2010bart}, and report posterior credible intervals obtained by Markov-chain Monte Carlo (MCMC) sampling based on a convenience prior.
Meanwhile, \citet{foster2011subgroup} use regression forests to estimate the effect of covariates on outcomes in treated and control groups separately, and then take the difference in predictions as data and project treatment effects onto units' attributes using regression or classification trees (in contrast, we modify the standard random forest algorithm to focus on directly estimating heterogeneity in causal effects).
A limitation of this line of work is that, until now, it has lacked formal statistical inference results.

We view our contribution as complementary to this literature, by showing that forest-based methods need not only be viewed as black-box heuristics, and can instead be used for rigorous asymptotic analysis. We believe that the theoretical tools developed here will be useful beyond the specific class of algorithms studied in our paper. In particular, our tools allow for a fairly direct analysis of variants of the method of \citet{foster2011subgroup}. Using BART for rigorous statistical analysis may prove more challenging since, although BART is often successful in practice, there are currently no results guaranteeing posterior concentration around the true conditional mean function, or convergence of the MCMC sampler in polynomial time. Advances of this type would be of considerable interest.

Several papers use tree-based methods for estimating heterogeneous treatment effects.
In growing trees to build our forest, we follow most closely the approach of \citet{athey2015machine},
who propose honest, causal trees, and obtain valid confidence intervals for average treatment effects for
each of the subpopulations (leaves) identified by the algorithm. (Instead of personalizing predictions for
each individual, this approach only provides treatment effect estimates for leaf-wise subgroups whose
size must grow to infinity.)
Other related approaches  include those of  \citet{su2009subgroup} and \citet{zeileis2008model}, which
build a tree for treatment effects in subgroups and use statistical tests to determine splits; however,
these papers do not analyze bias or consistency properties.  
 
Finally, we note a growing literature on estimating heterogeneous treatment effects using different machine learning methods. \citet{imai2013estimating}, \citet{signorovitch2007identifying}, \citet{tian2014simple} and \citet{weisberg2015post} develop lasso-like methods for causal inference in a sparse high-dimensional linear setting.
\citet{beygelzimer2009offset}, \citet{langford2011doubly}, and others discuss procedures for transforming outcomes that enable off-the-shelf loss minimization methods to be used for optimal treatment policy estimation.
In the econometrics literature,
\citet{bhattacharya2012inferring,dehejia2005program,hirano2009asymptotics,manski2004statistical}
estimate parametric or semi-parametric models for optimal policies, relying on regularization for covariate selection in the case of \citet{bhattacharya2012inferring}.  
\citet{taddy2014heterogeneous} use Bayesian nonparametric methods with Dirichlet priors to flexibly estimate the data-generating process, and then project the estimates of heterogeneous treatment effects down onto the feature space using regularization methods or regression trees to get low-dimensional summaries of the heterogeneity; but again, there are no guarantees about asymptotic properties.

\section{Causal Forests}
\label{sec:main}

\subsection{Treatment Estimation with Unconfoundedness}
\label{sec:unconfoundedness}

Suppose we have access to $n$ independent and identically distributed training examples labeled
$i = 1, \, ..., \, n$, each of which consists of a feature vector $X_i \in [0, \, 1]^d$, a response
$Y_i \in \RR$, and a treatment indicator $W_i \in \cb{0, \, 1}$. Following the potential outcomes framework of
\citet{neyman1923applications} and \citet{rubin1974estimating} (see \citet{imbens2015causal} for a review),
we then posit the existence of potential outcomes $\smash{Y_i^{(1)}}$ and $\smash{Y_i^{(0)}}$ corresponding
respectively to the response the $i$-th subject would have experienced with and without the treatment, and define
the treatment effect at $x$ as
\begin{equation}
\label{eq:tau}
\tau\p{x} = \EE{Y_i^{(1)} - Y_i^{(0)} \cond X_i = x}.
\end{equation}
Our goal is to estimate this function $\tau(x)$. The main difficulty is that we can only ever observe one of the two potential outcomes $\smash{Y_i^{(0)}, \, Y_i^{(1)}}$ for a given training example, and so cannot directly train machine learning methods on differences of the form \smash{$Y_i^{(1)} - Y_i^{(0)}$}.

In general, we cannot estimate $\tau(x)$ simply from the observed data $(X_i, \, Y_i, \, W_i)$ without further restrictions on the data generating distribution. A standard way to make progress is to assume unconfoundedness \citep{rosenbaum1983central}, i.e., that the treatment assignment $W_i$ is independent of the potential outcomes for $Y_i$ conditional on $X_i$:
\begin{equation}
\label{eq:unconf}
\cb{Y_i^{(0)}, \, Y_i^{(1)}} \  \mathlarger{\mathlarger{\raisebox{-0.5mm}{$\indep$}}} \ W_i \ \ \big| \ \ X_i.
\end{equation}
The motivation behind unconfoundedness is that, given continuity assumptions, it effectively implies that we can treat nearby observations in $x$-space as having come from a randomized experiment; thus, nearest-neighbor matching and other local methods will in general be consistent for $\tau(x)$.

An immediate consequence of unconfoundedness is that
\begin{equation}
\label{eq:propensity}
\EE{Y_i \p{\frac{W_i}{e\p{x}} - \frac{1 - W_i}{1 - e\p{x}}} \cond X_i = x} = \tau\p{x}, \ \where \ e\p{x} = \EE{W_i \cond X_i = x}
\end{equation}
is the propensity of receiving treatment at $x$. Thus, if we knew $e(x)$, we would have access to a simple unbiased estimator for $\tau(x)$; this observation lies at the heart of methods based on propensity weighting \citep[e.g.,][]{hirano2003efficient}. Many early applications of machine learning to causal inference effectively reduce to estimating $e(x)$ using, e.g., boosting, a neural network, or even random forests, and then transforming this into an estimate for $\tau(x)$ using \eqref{eq:propensity} \citep[e.g.,][]{mccaffrey2004propensity,westreich2010propensity}.
In this paper, we take a more indirect approach: We show that, under regularity assumptions, causal forests can use the unconfoundedness assumption \eqref{eq:unconf} to achieve consistency without needing to explicitly estimate the propensity $e(x)$.

\subsection{From Regression Trees to Causal Trees and Forests}

At a high level, trees and forests can be thought of as nearest neighbor methods with an adaptive neighborhood metric. Given a test point $x$, classical methods such as $k$-nearest neighbors seek the $k$ closest points to $x$ according to some pre-specified distance measure, e.g., Euclidean distance. In contrast, tree-based methods also seek to find training examples that are close to $x$, but now closeness is defined with respect to a decision tree, and the closest points to $x$ are those that fall in the same leaf as it. The advantage of trees is that their leaves can be narrower along the directions where the signal is changing fast and wider along the other directions, potentially leading a to a substantial increase in power when the dimension of the feature space is even moderately large.

In this section, we seek to build causal trees that resemble their regression analogues as closely as possible. Suppose first that we only observe independent samples $(X_i, \, Y_i)$, and want to build a CART regression tree. We start by recursively splitting the feature space until we have partitioned it into a set of leaves $L$, each of which only contains a few training samples. Then, given a test point $x$, we evaluate the prediction $\hmu(x)$ by identifying the leaf $L(x)$ containing $x$ and setting
\begin{equation}
\label{eq:pred_tree}
\hmu\p{x} = \frac{1}{\abs{\cb{i : X_i \in L(x)}}} \sum_{\cb{i : X_i \in L(x)}} Y_i.
\end{equation}
Heuristically, this strategy is well-motivated if we believe the leaf $L(x)$ to be small enough that the responses $Y_i$ inside the leaf are roughly identically distributed. There are several procedures for how to place the splits in the decision tree; see, e.g., \citet{hastie2009elements}. 

In the context of causal trees, we analogously want to think of the leaves as small enough that the $(Y_i, \, W_i)$ pairs corresponding to the indices $i$ for which $i \in L(x)$ act as though they had come from a randomized experiment. Then, it is natural to estimate the treatment effect for any $x \in L$ as
\begin{equation}
\label{eq:local_balance}
\htau\p{x} = \frac{1}{\abs{\cb{i : W_i = 1, \, X_i \in L}}} \sum_{\cb{i : W_i = 1, \, X_i \in L}} \!\!\!\!\!\!\!\!\!\! Y_i \ \ - \ \  \frac{1}{\abs{\cb{i : W_i = 0, \, X_i \in L}}} \sum_{\cb{i : W_i = 0, \, X_i \in L}} \!\!\!\!\!\!\!\!\!\! Y_i. 
\end{equation}
In the following sections, we will establish that such trees can be used to grow causal forests that are consistent for $\tau(x)$.\footnote{The causal tree algorithm presented above is a simplification of the method of \citet{athey2015machine}. The main difference between our approach and that of \citet{athey2015machine} is that they seek to build a single well-tuned tree; to this end, they use fairly large leaves and apply a form propensity weighting based on \eqref{eq:propensity} within each leaf to correct for variations in $e(x)$ inside the leaf. In contrast, we follow \citet{breiman2001random} and build our causal forest using deep trees. Since our leaves are small, we are not required to apply any additional corrections inside them. However, if reliable propensity estimates are available, using them as weights for our method may improve performance (and would not conflict with the theoretical results).}

Finally, given a procedure for generating a single causal tree, a causal forest generates an ensemble of $B$ such trees, each of which outputs an estimate $\htau_b(x)$. The forest then aggregates their predictions by averaging them: \smash{$\htau(x) = B^{-1} \sum_{b = 1}^B \htau_b(x)$}. We always assume that the individual causal trees in the forest are built using random subsamples of $s$ training examples, where $s/n \ll 1$; for our theoretical results, we will assume that \smash{$s \asymp n^\beta$} for some $\beta < 1$.
The advantage of a forest over a single tree is that it is not always clear what the ``best'' causal tree is. In this case, as shown by \citet{breiman2001random}, it is often better to generate many different decent-looking trees and average their predictions, instead of seeking a single highly-optimized tree. In practice, this aggregation scheme helps reduce variance and smooths sharp decision boundaries \citep{buhlmann2002analyzing}.

\subsection{Asymptotic Inference with Causal Forests}
\label{sec:theory_overview}

Our results require some conditions on the forest-growing scheme: The trees used to build the forest must be grown on subsamples of the training data, and the splitting rule must not ``inappropriately'' incorporate information about the outcomes $Y_i$ as discussed formally in Section \ref{sec:split}. However, given these high level conditions, we obtain a widely applicable consistency result that applies to several different interesting causal forest algorithms.

Our first result is that causal forests are consistent for the true treatment effect $\tau(x)$. 
To achieve pointwise consistency, we need to assume that the conditional mean functions \smash{$\EE{Y^{(0)} \cond X = x}$} and \smash{$\EE{Y^{(1)} \cond X = x}$} are both Lipschitz continuous.
To our knowledge, all existing results on pointwise consistency of regression forests \citep[e.g.,][]{biau2012analysis,meinshausen2006quantile} require an analogous condition on $\EE{Y \cond X = x}$.
This is not particularly surprising, as forests generally have smooth response surfaces \citep{buhlmann2002analyzing}.
In addition to continuity assumptions, we also need to assume that we have overlap, i.e., for some $\varepsilon > 0$ and all $x \in [0, \, 1]^d$,
\begin{equation}
\label{eq:overlap}
\varepsilon < \PP{W = 1 \cond X = x} < 1 - \varepsilon.
\end{equation}
This condition effectively guarantees that, for large enough $n$, there will be enough treatment and control units near any test point $x$ for local methods to work.

Beyond consistency, in order to do statistical inference on the basis of the estimated treatment effects $\htau(x)$, we need to understand their asymptotic sampling distribution.
Using the potential nearest neighbors construction of \citet{lin2006random} and classical analysis tools going back to \citet{hoeffding1948class} and \citet{hajek1968asymptotic}, we show that---provided the sub-sample size $s$ scales appropriately with $n$---the predictions made by a causal forest are asymptotically Gaussian and unbiased.
Specifically, we show that 
\begin{equation}
\label{eq:causal_gauss}
\p{\htau\p{x} - \tau\p{x}} \, \big/ \, {\sqrt{\Var{\htau(x)}}} \Rightarrow \nn\p{0, \, 1}
\end{equation}
under the conditions required for consistency, provided the subsample size $s$ scales
as $s \asymp n^\beta$ for some $\beta_{\min} < \beta < 1$

Moreover, we show that the asymptotic variance of causal forests can be accurately estimated. To do so, we use the infinitesimal jackknife for random forests developed by \citet{efron2013estimation} and \citet{wager2014confidence}, based on the original infinitesimal jackknife procedure of \citet{jaeckel1972infinitesimal}. This method assumes that we have taken the number of trees $B$ to be large enough that the Monte Carlo variability of the forest does not matter; and only measures the randomness in $\htau(x)$ due to the training sample.

To define the variance estimates, let $\htau_b^*(x)$ be the treatment effect estimate given by the $b$-th tree, and let $N_{ib}^* \in \cb{0, \, 1}$ indicate whether or not the $i$-th training example was used for the $b$-th tree.\footnote{For double-sample trees defined in Procedure \ref{alg:split}, $N_{ib}^* = 1$ if the $i$-th example appears in either the $\ii$-sample or the $\jj$-sample.}
Then, we set
\begin{equation}
\label{eq:hvij}
\hVIJ\p{x}  = \frac{n - 1}{n} \p{\frac{n}{n - s}}^2 \sum_{i = 1}^n \Cov[*]{\htau^*_b\p{x}, \, N^*_{ib}}^2,
\end{equation}
where the covariance is taken with respect to the set of all the trees $b=1,..,B$ used in the forest. The term $n (n - 1)/(n - s)^2$ is a finite-sample correction for forests grown by subsampling without replacement; see Proposition \ref{prop:finite_sample}.
We show that this variance estimate is consistent, in the sense that $\hVIJ\p{x}/\Var{\htau(x)} \rightarrow_p 1$.

\subsection{Honest Trees and Forests}
\label{sec:split}

In our discussion so far, we have emphasized the flexible nature of our results: for a wide variety of causal forests that can be tailored to the application area, we achieve both consistency and centered asymptotic normality, provided the sub-sample size $s$ scales at an appropriate rate.
Our results do, however, require the individual trees to satisfy a fairly strong condition, which we call honesty: a tree is honest if, for each training example $i$, it only uses the response $Y_i$ to estimate the within-leaf treatment effect $\tau$ using \eqref{eq:local_balance} or to decide where to place the splits, but not both. We discuss two causal forest algorithms that satisfy this condition.

Our first algorithm, which we call a double-sample tree, achieves honesty by dividing its training subsample into two halves $\ii$ and $\jj$. Then, it uses the $\jj$-sample to place the splits, while holding out the $\ii$-sample to do within-leaf estimation; see Procedure \ref{alg:split} for details.
In our experiments, we set the minimum leaf size to $k = 1$.
A similar family of algorithms was discussed in detail by \citet{denil2014narrowing}, who showed that such forests could achieve competitive performance relative to standard tree algorithms that do not divide their training samples.
In the semiparametric inference literature, related ideas go back at least to the work of \citet{schick1986asymptotically}.

We note that sample splitting procedures are sometimes criticized as inefficient because
they ``waste'' half of the training data at each step of the estimation procedure. However, in our case, the forest
subampling mechanism enables us to achieve honesty without wasting any data in this sense,
because we re-randomize the $\ii/\jj$-data splits over each subsample.
Thus, although no data point can be used for
split selection and leaf estimation in a single tree, each data point will participate in both
$\ii$ and $\jj$ samples of some trees, and so will be used for both specifying the
structure and treatment effect estimates of the forest.  Although our original motivation for considering double-sample trees was to eliminate bias
and thus enable centered confidence intervals, we find that in practice, double-sample trees can improve upon standard random forests in terms of mean-squared error as well.

\begin{algbox}[t]
\myalg{alg:split}{Double-Sample Trees}{

\vspace{-0.8\baselineskip}

Double-sample trees split the available training data into two parts: one half for estimating the desired response inside each leaf, and another half for placing splits.

\vspace{0.5\baselineskip}

Input: $n$ training examples of the form $(X_i, \, Y_i)$ for regression trees or $(X_i, \, Y_i, \, W_i)$ for causal trees, where $X_i$ are features, $Y_i$ is the response, and $W_i$ is the treatment assignment. A minimum leaf size $k$.

\begin{enumerate}
\item  Draw a random subsample of size $s$ from $\cb{1, \, ..., \, n}$ without replacement, and then divide it into two disjoint sets of size $\abs{\ii} = \lfloor s/2 \rfloor$ and $\abs{\jj} = \lceil s/2 \rceil$.
\item Grow a tree via recursive partitioning. The splits are chosen using any data from the $\jj$ sample and $X$- or $W$-observations from the $\ii$ sample, but without using $Y$-observations from the $\ii$-sample.
\item Estimate leaf-wise responses using only the $\ii$-sample observations.
\end{enumerate}

\vspace{0.5\baselineskip}

Double-sample \emph{regression} trees make predictions $\hmu(x)$ using \eqref{eq:pred_tree} on the leaf containing $x$, only using the $\ii$-sample observations. The splitting criteria is the standard for CART regression trees (minimizing mean-squared error of predictions). Splits are restricted so that each leaf of the tree must contain $k$ or more $\ii$-sample observations.

\vspace{0.5\baselineskip}
Double-sample \emph{causal} trees are defined similarly, except that for prediction we estimate $\htau(x)$ using \eqref{eq:local_balance} on the $\ii$ sample.   Following \citet{athey2015machine}, the splits of the tree are chosen by maximizing the variance of $\htau(X_i)$ for $i \in \jj$; see Remark \ref{rema:CT} for details.  In addition, each leaf of the tree must contain $k$ or more $\ii$-sample observations of \emph{each} treatment class.
}
\end{algbox}

Another way to build honest trees is to ignore the outcome data $Y_i$ when placing splits, and instead first train a classification tree for the treatment assignments $W_i$ (Procedure \ref{alg:prop}). Such propensity trees can be particularly useful in observational studies, where we want to minimize bias due to variation in $e(x)$. Seeking estimators that match training examples based on estimated propensity is a longstanding idea in causal inference, going back to \citet{rosenbaum1983central}.

\begin{algbox}[t]
\myalg{alg:prop}{Propensity Trees}{

\vspace{-0.8\baselineskip}

Propensity trees use only the treatment assignment indicator $W_i$ to place splits, and save the responses $Y_i$ for estimating $\tau$.

\vspace{0.5\baselineskip}

Input: $n$ training examples $(X_i, \, Y_i, \, W_i)$, where $X_i$ are features, $Y_i$ is the response, and $W_i$ is the treatment assignment. A minimum leaf size $k$.
\begin{enumerate}
\item Draw a random subsample $\ii \in \cb{1, \, ..., \, n}$ of size $\abs{\ii} = s$ (no replacement).
\item Train a classification tree using sample $\ii$ where the outcome is the treatment assignment, i.e., on the $(X_i, \, W_i)$ pairs with $i \in \ii$. Each leaf of the tree must have $k$ or more observations of \emph{each} treatment class.
\item Estimate $\tau(x)$ using \eqref{eq:local_balance} on the leaf containing $x$.
\end{enumerate}
In step 2, the splits are chosen by optimizing, e.g., the Gini criterion used by CART for classification \citep{breiman1984classification}.
}
\end{algbox}

\begin{rema}
\label{rema:CT}
For completeness, we briefly outline the motivation for the splitting rule of \citet{athey2015machine} we use for our double-sample trees. This method is motivated by an algorithm for minimizing the squared-error loss in regression trees. Because regression trees compute predictions $\hmu$ by averaging training responses over leaves, we can verify that
\begin{equation}
\sum_{i \in \jj} \p{\hmu\p{X_i} - Y_i}^2 = \sum_{i \in \jj} Y_i^2 - \sum_{i \in \jj} \hmu\p{X_i}^2.
\end{equation}
Thus, finding the squared-error minimizing split is equivalent to maximizing the variance of $\hmu(X_i)$ for $i \in \jj$; note that \smash{$\sum_{i \in \jj} \hmu(X_i) = \sum_{i \in \jj} Y_i$} for all trees, and so maximizing variance is equivalent to maximizing the sum of the $\hmu(X_i)^2$. In Procedure \ref{alg:split}, we emulate this algorithm by picking splits that maximize the variance of $\htau(X_i)$ for $i \in \jj$.\footnote{\citet{athey2015machine} also consider ``honest splitting rules'' that
anticipate honest estimation, and correct for the additional sampling variance in small leaves using an
idea closely related to the $C_p$ penalty of \citet{mallows1973some}. Although it could be of interest
for further work, we do not study the effect of such splitting rules here.}
\end{rema}

\begin{rema}
\label{rema:honesty}
In Appendix \ref{sec:honesty}, we present evidence that adaptive forests with small leaves
can overfit to outliers in ways that make them inconsistent near the edges of sample space.
Thus, the forests of \citet{breiman2001random} need to be modified in some way to get
pointwise consistency results; here, we use honesty following, e.g., \citet{wasserman2009high}.
We note that there have been some recent theoretical investigations of non-honest forests,
including \citet{scornet2015consistency} and \citet{wager2015uniform}. However,
\citet{scornet2015consistency} do not consider pointwise properties of forests; whereas
\citet{wager2015uniform} show consistency of adaptive forests with larger leaves, but
their bias bounds decay slower than the sampling variance of the forests and so cannot
be used to establish centered asymptotic normality.
\end{rema}

\section{Asymptotic Theory for Random Forests}
\label{sec:theor}

In order to use random forests to provide formally valid statistical inference, we need an asymptotic normality theory for random forests. In the interest of generality, we first develop such a theory in the context of classical regression forests, as originally introduced by \citet{breiman2001random}.  In this section, we assume that we have training examples $Z_i = \p{X_i, \, Y_i}$ for $i = 1, \, ..., \, n$, a test point $x$, and we want to estimate true conditional mean function
\begin{equation}
\label{eq:cond_mean}
\mu\p{x} = \EE{Y \cond X = x}.
\end{equation}
We also have access to a regression tree $T$ which can be used to get estimates of the conditional mean function at $x$ of the form $T\p{x; \, \xi, \, Z_1, \, ..., \, Z_n}$, where $\xi \sim \Xi$ is a source of auxiliary randomness. Our goal is to use this tree-growing scheme to build a random forest that can be used for valid statistical inference about $\mu(x)$.

We begin by precisely describing how we aggregate individual trees into a forest. For us, a random forest is an average of trees trained over all possible size-$s$ subsamples of the training data, marginalizing over the auxiliary noise $\xi$.
In practice, we compute such a random forest by Monte Carlo averaging, and set
\begin{equation}
\label{eq:rfm}
\rf \p{x; \, Z_1, \, ..., \, Z_n} \approx \frac{1}{B} \sum_{b = 1}^B  T\p{x; \, \xi^*_b, \, Z^*_{b1}, \, ..., \, Z^*_{bs}},
\end{equation}
where $\{Z^*_{b1}, \, ..., \, Z^*_{bs}\}$ is drawn without replacement from $\{Z_1, \, ..., \, Z_n\}$, $\xi^*_b$ is a random draw from $\Xi$, and $B$ is the number of Monte Carlo replicates we can afford to perform. The formulation \eqref{eq:rf_defi} arises as the $B \rightarrow \infty$ limit of \eqref{eq:rfm}; thus, our theory effectively assumes that $B$ is large enough for Monte Carlo effects not to matter. The effects of using a finite $B$ are studied in detail by \citet{mentch2014ensemble}; see also \citet{wager2014confidence}, who recommend taking $B$ on the order of $n$.

\begin{defi}
\label{defi:rf}
The \emph{random forest} with base learner $T$ and subsample size $s$ is
\begin{equation}
\label{eq:rf_defi}
\rf\p{x; \, Z_1, \, ..., \, Z_n} = \binom{n}{s}^{-1} \sum_{1 \leq i_1 < i_2 < ... < i_s \leq n} \EE[\xi \sim \Xi]{T\p{x; \xi, \,  \, Z_{i_1}, \, ..., \, Z_{i_s}}}.
\end{equation}
\end{defi}

Next, as described in Section \ref{sec:main}, we require that the trees $T$ in our forest be honest. Double-sample trees, as defined in Procedure \ref{alg:split}, can always be used to build honest trees with respect to the $\ii$-sample. In the context of causal trees for observational studies, propensity trees (Procedure \ref{alg:prop}) provide a simple recipe for building honest trees without sample splitting.

\begin{defi}
\label{defi:honest}
A tree grown on a training sample $\smash{\p{Z_1 = \p{X_1, \, Y_1}, \, ..., \, Z_s = \p{X_s, \, Y_s}}}$ is \emph{honest} if
(a) (\emph{standard case}) the tree does not use the responses $Y_1, \, ..., \, Y_s$ in choosing where to place its splits; or
(b) (\emph{double sample case}) the tree does not use the $\ii$-sample responses for placing splits.
\end{defi}

In order to guarantee consistency, we also need to enforce that the leaves of the trees become small in \emph{all} dimensions of the feature space as $n$ gets large.\footnote{\citet{biau2012analysis} and \citet{wager2015uniform} consider the estimation of low-dimensional signals embedded in a high-dimensional ambient space using random forests; in this case, the variable selection properties of trees also become important. We leave a study of asymptotic normality of random forests in high dimensions to future work.} Here, we follow \citet{meinshausen2006quantile}, and achieve this effect by enforcing some randomness in the way trees choose the variables they split on: At each step, each variable is selected with probability at least $\pi/d$ for some $0 < \pi \leq 1$ (for example, we could satisfy this condition by completely randomizing the splitting variable with probability $\pi$). Formally, the randomness in how to pick the splitting features is contained in the auxiliary random variable $\xi$.

\begin{defi}
\label{defi:randomsplit}
A tree is a \emph{random-split} tree if at every step of the tree-growing procedure, marginalizing over $\xi$, the probability that the next split occurs along the $j$-th feature is bounded below by $\pi/d$ for some $0 < \pi \leq 1$, for all $j = 1, \, ..., \, d$.
\end{defi}

The remaining definitions are more technical. We use regularity to control the shape of the tree leaves, while symmetry is used to apply classical tools in establishing asymptotic normality.

\begin{defi}
\label{defi:regular}
A tree predictor grown by recursive partitioning is \emph{$\alpha$-regular} for some $\alpha > 0$ if either (a) (\emph{standard case}) each split leaves at least a fraction $\alpha$ of the available training examples on each side of the split and, moreover, the trees are fully grown to depth $k$ for some $k \in \NN$, i.e., there are between $k$ and $2k - 1$ observations in each terminal node of the tree; or (b) (\emph{double sample case}) if the predictor is a double-sample tree as in Procedure 1, the tree satisfies part (a) for the $\ii$ sample.
\end{defi}

\begin{defi}
\label{defi:symmetric}
A predictor is \emph{symmetric} if the (possibly randomized) output of the predictor does not depend on the order ($i = 1, \, 2, \, ...$) in which the training examples are indexed.
\end{defi}

Finally, in the context of classification and regression forests, we estimate the asymptotic variance of random forests using the original infinitesimal jackknife of \citet{wager2014confidence}, i.e.,
\begin{equation}
\label{eq:hvij_plain}
\hVIJ\p{x}  =  \frac{n - 1}{n} \p{\frac{n}{n - s}}^2 \sum_{i = 1}^n \Cov[*]{\hmu^*_b\p{x}, \, N^*_{ib}}^2,
\end{equation}
where $\hmu^*_b(x)$ is the estimate for $\mu(x)$ given by a single regression tree. We note that the finite-sample correction $n (n - 1)/(n - s)^2$ did not appear in \citet{wager2014confidence}, as their paper focused on subsampling with replacement, whereas this correction is only appropriate for subsampling without replacement.

Given these preliminaries, we can state our main result on the asymptotic normality of random forests. As discussed in Section \ref{sec:theory_overview}, we require that the conditional mean function $\mu\p{x} = \EE{Y \cond X = x}$ be Lipschitz continuous. The asymptotic normality result requires for the subsample size $s$ to scale within the bounds given in \eqref{eq:intro_cond}. If the subsample size grows slower than this, the forest will still be asymptotically normal, but the forest may be asymptotically biased. For clarity, we state the following result with notation that makes the dependence of \smash{$\hmu_n(x)$} and $s_n$ on $n$ explicit; in most of the paper, however, we drop the subscripts to \smash{$\hmu_n(x)$} and $s_n$ when there is no risk of confusion.

\begin{theo}
\label{theo:intro}
Suppose that we have $n$ independent and identically distributed training examples $Z_i = \p{X_i, \, Y_i} \in [0, \, 1]^d \times \RR$. Suppose moreover that the features are independently and uniformly distributed\,\footnote{The result also holds with a density that is bounded away from 0 and infinity; however, we assume uniformity for simpler exposition.} $X_i \sim U([0, \, 1]^d)$, that $\mu(x) = \EE{Y \cond X = x}$ and $\mu_2(x) = \EE{Y^2 \cond X = x}$ are Lipschitz-continuous,
\sloppy{and finally that $\Var{Y \cond X = x} > 0$ and
 \smash{$\mathbb{E}[\lvert Y - \mathbb{E}[Y \cond X = x]\rvert^{2 + \delta} \cond X = x] \leq M$}
for some constants $\delta, \, M > 0$, uniformly over all $x \in [0, \, 1]^d$.}
Given this data-generating process, let $T$ be an honest, $\alpha$-regular with $\alpha \leq 0.2$, and symmetric random-split tree in the sense of Definitions \ref{defi:honest}, \ref{defi:randomsplit}, \ref{defi:regular}, and \ref{defi:symmetric}, and let $\hmu_n(x)$ be the estimate for $\mu(x)$ given by a random forest with base learner $T$ and a subsample size $s_n$. Finally, suppose that the subsample size $s_n$ scales as
\begin{equation}
\label{eq:intro_cond}
s_n \asymp n^\beta \ \ \text{for some} \ \ \beta_{\min} := 1 - \p{1 + \frac{d}{\pi} \, \frac{\log\p{\alpha^{-1}}}{\log\p{\p{1 - \alpha}^{-1}}}}^{-1} < \beta < 1.
\end{equation}
Then, random forest predictions are asymptotically Gaussian:
\begin{equation}
\label{eq:intro_gauss}
\frac{\hmu_n(x) -  \mu\p{x}}{\sigma_n(x)}  \Rightarrow \nn\p{0, \, 1} \; \text{for a sequence} \; \sigma_n(x) \rightarrow 0.
\end{equation}
Moreover, the asymptotic variance $\sigma_n$ can be consistently estimated using the infinitesimal jackknife \eqref{eq:hvij}:
\begin{equation}
\label{eq:intro_ij}
\hVIJ\p{x}  \big/ \sigma_n^2(x) \rightarrow_p 1.
\end{equation}
\end{theo}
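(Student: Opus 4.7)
\textbf{Proof plan for Theorem \ref{theo:intro}.}

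The plan is to decompose $\hmu_n(x) - \mu(x)$ into a bias piece $\mathbb{E}[\hmu_n(x)] - \mu(x)$ and a centered stochastic piece $\hmu_n(x) - \mathbb{E}[\hmu_n(x)]$, handle each separately, and then argue that the stochastic piece is well-approximated by its H\'ajek projection so that its variance can be consistently estimated by the infinitesimal jackknife. First, to control the bias, I would use the random-split and $\alpha$-regular conditions to bound the diameter of the leaf $L(x)$ containing $x$. In expectation, each split along coordinate $j$ is taken with probability at least $\pi/d$, so after the tree has grown to depth $k$ the expected length of $L(x)$ along any coordinate is at most $(1-\alpha)^{k\pi/d}$. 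Because $\alpha$-regularity forces the tree to have at least $\log_{1/\alpha}(s/(2k_{\min}))$ levels of splits before any leaf contains only $k_{\min}$ honest-sample points, one obtains $\mathbb{E}[\mathrm{diam}(L(x))] \lesssim s^{-\pi\log((1-\alpha)^{-1})/(d\log(\alpha^{-1}))}$; combined with Lipschitz continuity of $\mu$, this yields a bias of the same polynomial order in $s$, and the threshold $\beta_{\min}$ in \eqref{eq:intro_cond} is precisely what is needed to make this bias asymptotically dominated by the $\sqrt{s/n}$ standard deviation scale identified below.

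Second, for the centered piece I would write the forest as a complete infinite-order U-statistic of kernel $T$ with subsample size $s$, and apply the Hoeffding / ANOVA decomposition $T = \mu(x) + T_1 + T_2 + \cdots + T_s$ where $T_k$ denotes the $k$-th order canonical term. Summing over the $\binom{n}{s}$ subsamples gives $\rf(x) - \mu(x) = \sum_{k\ge 1} U_k$ with $\mathrm{Var}(U_k) = \binom{s}{k}^2\binom{n}{k}^{-1} \zeta_{k,s}$ in the usual H-decomposition sense, so that $U_1$, the H\'ajek projection $\frac{s}{n}\sum_{i=1}^n \mathbb{E}[T - \mu(x)\mid Z_i]$, has variance of order $s^2\zeta_{1,s}/n$ while the remainder $\sum_{k\ge 2} U_k$ has variance of order at most $s^2\zeta_s/n^2$ (after Efron–Stein–type bounds on $\zeta_{k,s}$). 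I would then give a matching lower bound $\zeta_{1,s} \gtrsim \mathrm{Var}(Y\mid X=x)/s$ by using honesty together with the random-split and regularity conditions: honesty ensures that a single training observation $Z_i$ affects a leaf's prediction only through its in-leaf response contribution, and the probability that $Z_i$ lands in the leaf containing $x$ is of order $1/s$, making the conditional expectation $\mathbb{E}[T\mid Z_i]-\mu(x)$ a nondegenerate perturbation of size $1/s$ times a quantity with positive variance. This gives $U_1 \gg \sum_{k\ge 2} U_k$, and a Lindeberg CLT applied to the i.i.d. sum $U_1$, under the $(2+\delta)$-moment assumption, yields $U_1/\sigma_n(x) \Rightarrow \mathcal{N}(0,1)$, hence the first conclusion.

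For the infinitesimal jackknife, I would show $\hVIJ(x) / (s^2\zeta_{1,s}/n) \to_p 1$, which combined with the preceding paragraph gives $\hVIJ/\sigma_n^2 \to_p 1$. The key identity is that, in expectation over the random subsample and $\xi$, $\mathrm{Cov}_*(\hmu_b^*(x), N_{ib}^*)$ equals $(s/n)\,\mathbb{E}[T\mid Z_i = Z_i] - (s/n)\,\mu(x)$ up to a finite-sample correction that the factor $n(n-1)/(n-s)^2$ exactly absorbs (this is where Proposition~\ref{prop:finite_sample} enters). Squaring and summing over $i$ then reproduces $s^2\zeta_{1,s}/n$ in the population. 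The remaining task is to show that, for the Monte Carlo approximation computed over finitely many trees, the empirical covariances concentrate around their population counterparts; this follows from a straightforward variance bound combined with the $(2+\delta)$-moment hypothesis on $Y$.

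The main obstacle I anticipate is the lower bound on $\zeta_{1,s}$: establishing that the H\'ajek projection does not degenerate requires genuinely using honesty plus random-split regularity in a quantitative way, since without honesty an observation's influence on the forest is mediated in a complicated adaptive fashion that is hard to bound below. Secondarily, the bias calculation via the random-split rate $\pi/d$ is delicate because the depth at which a leaf reaches size $k_{\min}$ is itself random; I would handle this by passing to expectations over $\xi$ and using the concentration of the number of splits before the $\alpha$-regularity termination condition fires.
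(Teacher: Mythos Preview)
Your high-level plan---decompose into bias plus centered fluctuation, control bias via leaf diameter, control the fluctuation via the Hoeffding/ANOVA decomposition and show the H\'ajek projection dominates, then apply a Lyapunov CLT and identify $\hVIJ$ with the H\'ajek variance---is exactly the paper's route. The genuine gap is in the step you yourself flag as the main obstacle: the lower bound on $\zeta_{1,s}$. Your heuristic (``the probability that $Z_i$ lands in the leaf containing $x$ is of order $1/s$'') only gives $\EE{S_1} = 1/s$; it does not give the needed second-moment bound $\EE{\EE{S_1 \cond X_1}^2} \gtrsim 1/s$. Saying the perturbation is ``of size $1/s$ times a quantity with positive variance'' would yield $\zeta_{1,s}$ of order $1/s^2$, not $1/s$, and then the H\'ajek projection would \emph{not} dominate the remainder. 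What is actually required is that $\EE{S_1 \cond X_1}$ be far from concentrated near its mean: it must take values of order $1/k$ on a set of $X_1$'s of probability roughly $k/(s\log(s)^d)$. The paper obtains this by embedding $\alpha$-regular trees into the class of \emph{potential nearest neighbor} ($k$-PNN) predictors of \citet{lin2006random}: any regular tree with leaves of size in $[k, 2k-1]$ can only select $X_1$ if $X_1$ is a $k$-PNN of $x$, and a direct combinatorial calculation (Lemma~\ref{lemm:pnn}) then upper-bounds $\PP{\EE{S_1 \cond Z_1} \geq 1/(ks^2)}$ by $\oo(\log(s)^d/s)$, which via Jensen converts to the desired lower bound on $\Var{\EE{S_1 \cond Z_1}}$. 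Nothing in ``honesty plus random-split regularity'' alone rules out the tree spreading its selection probability so evenly across training points that $\Var{\EE{S_1 \cond Z_1}} = o(1/s)$; the PNN device is what pins this down.

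A secondary point: your diameter rate $s^{-\pi\log((1-\alpha)^{-1})/(d\log(\alpha^{-1}))}$ is the exponent one gets by plugging the \emph{mean} number of coordinate-$j$ splits into $(1-\alpha)^{c_j}$. The paper carries an extra factor $1/2$ in the exponent (Theorem~\ref{theo:bias}) because (i) $\alpha$-regularity is stated in terms of sample counts, not Lebesgue measure, so $\diam_j(L(x)) \leq (1-\alpha)^{c_j}$ only holds up to a concentration argument, and (ii) the proof uses a Chernoff bound on $c_j$ to get a high-probability diameter bound and then splits into good/bad events. This factor of $1/2$ feeds directly into the threshold $\beta_{\min}$ in \eqref{eq:intro_cond}, so to recover exactly the stated condition you need to follow that more conservative route rather than bounding $\EE{\diam(L(x))}$ directly.
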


\begin{rema}[binary classification]
We note that Theorem \ref{theo:intro} also holds for binary classification forests with leaf size $k = 1$, as is default in the \texttt{R}-package \texttt{randomForest} \citep{liaw2002classification}. Here, we treat the output $\rf(x)$ of the random forests as an estimate for the probability $\PP{Y = 1 \cond X = x}$; Theorem \ref{theo:intro} then lets us construct valid confidence intervals for this probability. For classification forests with $k > 1$, the proof of Theorem \ref{theo:intro} still holds if the individual classification trees are built by \emph{averaging} observations within a leaf, but not if they are built by \emph{voting}. Extending our results to voting trees is left as further work.
\end{rema}

The proof of this result is organized as follows. In Section \ref{sec:bias}, we provide bounds for the bias \smash{$\EE{\hmu_n(x) - \mu\p{x}}$} of random forests, while Section \ref{sec:gauss} studies the sampling distributions of \smash{$\hmu_n(x) - \EE{\hmu_n(x)}$} and establishes Gaussianity. Given a subsampling rate satisfying \eqref{eq:intro_cond}, the bias decays faster than the variance, thus allowing for \eqref{eq:intro_gauss}. Before beginning the proof, however, we relate our result to existing results about random forests in Section \ref{sec:theor_relworks}.

\subsection{Theoretical Background}
\label{sec:theor_relworks}

There has been considerable work in understanding the theoretical properties of random forests. The convergence and consistency properties of trees and random forests have been studied by, among others, \citet{biau2012analysis}, \citet{biau2008consistency}, \citet{breiman2004consistency}, \citet{breiman1984classification}, \citet{meinshausen2006quantile}, \citet{scornet2015consistency}, \citet{wager2015uniform}, and \citet{zhu2015reinforcement}. Meanwhile, their sampling variability has been analyzed by \citet{duan2011bootstrap}, \citet{lin2006random}, \citet{mentch2014ensemble}, \citet{sexton2009standard}, and \citet{wager2014confidence}. However, to our knowledge, our Theorem \ref{theo:intro} is the first result establishing conditions under which predictions made by random forests are asymptotically unbiased and normal.

Probably the closest existing result is that of \citet{mentch2014ensemble}, who showed that random forests based on subsampling are asymptotically normal under substantially stronger conditions than us: they require that the subsample size $s$ grows slower than $\sqrt{n}$, i.e., that $s_n/\sqrt{n} \rightarrow 0$. However, under these conditions, random forests will not in general be asymptotically unbiased. As a simple example, suppose that $d = 2$, that $\mu(x) = \Norm{x}_1$, and that we evaluate an honest random forest at $x = 0$.
A quick calculation shows that the bias of the random forest decays as $1/\sqrt{s_n}$, while its variance decays as $s_n/n$. If $s_n/\sqrt{n} \rightarrow 0$, the squared bias decays slower than the variance, and so confidence intervals built using the resulting Gaussian limit distribution will not cover $\mu(x)$.
Thus, although the result of \citet{mentch2014ensemble} may appear qualitatively similar to ours, it cannot be used for valid asymptotic statistical inference about $\mu(x)$.

The variance estimator $\smash{\hVIJ}$ was studied in the context of random forests by \citet{wager2014confidence}, who showed empirically that the method worked well for many problems of interest. \citet{wager2014confidence} also emphasized that, when using $\smash{\hVIJ}$ in practice, it is important to account for Monte Carlo bias.
Our analysis provides theoretical backing to these results, by showing that $\smash{\hVIJ}$ is in fact a consistent estimate for the variance $\sigma_n^2(x)$ of random forest predictions. The earlier work on this topic \citep{efron2013estimation,wager2014confidence} had only motivated the estimator $\smash{\hVIJ}$ by highlighting connections to classical statistical ideas, but did not establish any formal justification for it.

Instead of using subsampling, Breiman originally described random forests in terms of bootstrap sampling, or bagging \citep{breiman1996bagging}.
Random forests with bagging, however, have proven to be remarkably resistant to classical statistical analysis. As observed by \citet{buja2006observations}, \citet{chen2003effects}, \citet{friedman2007bagging} and others, estimators of this form can exhibit surprising properties even in simple situations; meanwhile, using subsampling rather than bootstrap sampling has been found to avoid several pitfalls \citep[e.g.,][]{politis1999subsampling}.
Although they are less common in the literature, random forests based on subsampling have also been occasionally studied and found to have good practical and theoretical properties \citep[e.g.,][]{buhlmann2002analyzing,mentch2014ensemble,scornet2015consistency,strobl2007bias}.

Finally, an interesting question for further theoretical study is to understand the optimal scaling of the subsample size $s_n$ for minimizing the mean-squared error of random forests. For subsampled nearest-neighbors estimation, the optimal rate for $s_n$ is \smash{$s_n \asymp n^{1 - \p{1 + d/4}^{-1}}$} \citep{biau2010rate,samworth2012optimal}. Here, our specific value for $\beta_{\min}$ depends on the upper bounds for bias developed in the following section. Now, as shown by \citet{biau2012analysis}, under some sparsity assumptions on $\mu(x)$, it is possible to get substantially stronger bounds for the bias of random forests; thus, it is plausible that under similar conditions we could push back the lower bound $\beta_{\min}$ on the growth rate of the subsample size.

\subsection{Bias and Honesty}
\label{sec:bias}

We start by bounding the bias of regression trees. Our approach relies on showing that as the sample size $s$ available to the tree gets large, its leaves get small; Lipschitz-continuity of the conditional mean function and honesty then let us bound the bias. In order to state a formal result, define the \emph{diameter} $\diam(L(x))$ of a leaf $L(x)$ as the length of the longest segment contained inside $L(x)$, and similarly let $\diam_j(L(x))$ denote the length of the longest such segment that is parallel to the $j$-th axis. The following lemma is a refinement of a result of \citet{meinshausen2006quantile}, who showed that $\diam(L(x)) \rightarrow_p 0$ for regular trees.

\begin{lemm}
\label{lemm:diameter}
Let $T$ be a regular, random-split tree and let $L(x)$ denote its leaf containing $x$. Suppose that $X_1, \, ..., \, X_s \sim U\p{[0, \, 1]^d}$ independently. Then, for any $0 < \eta < 1$, and for large enough $s$,
$$\PP{\diam_j \p{L(x)} \geq \p{\frac{s}{2k - 1}}^{-\frac{0.99 \, \p{1 - \eta} \log\p{\p{1 - \alpha}^{-1}}}{\log\p{\alpha^{-1}}} \, \frac{\pi}{d}}} \leq \p{\frac{s}{2k - 1}}^{-\frac{\eta^2}{2} \frac{1}{\log\p{\alpha^{-1}}} \, \frac{\pi}{d}}.$$
\end{lemm}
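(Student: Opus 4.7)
The plan is to decompose the diameter bound into three pieces: a lower bound on the length $T$ of the path from the root to $L(x)$, a lower bound on the number $T_j$ of those splits that occur along coordinate $j$, and a per-split geometric shrinkage of the $j$-th side of the leaf's rectangle. Steps one and two are purely combinatorial/probabilistic, while step three is the main technical obstacle.

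For the first step, I would invoke $\alpha$-regularity: a split at a node of size $n$ produces children of sizes in $[\alpha n, (1 - \alpha) n]$, so following the path from the root to $L(x)$ the ratio of child-to-parent sample sizes is at least $\alpha$ at every level. Since $L(x)$ contains at most $2k - 1$ observations, $\alpha^T \le (2k - 1)/s$, giving the deterministic bound $T \ge T_{\min} := \log(s/(2k-1))/\log(\alpha^{-1})$. For the second step, the random-split condition says that at every split on the path, the splitting coordinate equals $j$ with probability at least $\pi/d$, independently over the auxiliary randomness $\xi$. Thus $T_j$ stochastically dominates a $\mathrm{Binomial}(T_{\min},\pi/d)$ variable, and a standard Chernoff bound yields
\[ \PP{T_j \le (1-\eta)\frac{\pi}{d} T_{\min}} \le \exp\p{-\frac{\eta^2 \pi T_{\min}}{2d}} = \p{\frac{s}{2k-1}}^{-\frac{\eta^2}{2}\frac{\pi}{d \log(\alpha^{-1})}}, \]
which is precisely the failure probability stated in the lemma.

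The third step is where the real work lies. At each split along coordinate $j$, $\alpha$-regularity controls the sample fractions on each side, but the lemma concerns the geometric length of the projection of the leaf rectangle. Since the $X_i$'s are i.i.d.\ uniform on $[0,1]^d$, the sample fraction concentrates around the interval fraction, so each split along $j$ should shrink the $j$-length of the current rectangle by a factor at most roughly $1-\alpha$. The subtlety is that split locations are data-adaptive, so a naive empirical-process inequality at a single node is not enough; I would control the deviation uniformly over \emph{all} nodes reachable in any tree built from at most $s$ samples, via a Dvoretzky--Kiefer--Wolfowitz or VC-style union bound on uniform order statistics. The constant $0.99 < 1$ in the exponent absorbs this concentration slack, so that for $s$ large enough one may replace the naive shrinkage factor $1-\alpha$ by $(1-\alpha)^{0.99}$ uniformly at every split.

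Combining the three steps, on the intersection of the two good events,
\[ \diam_j\p{L(x)} \le (1-\alpha)^{0.99 \, T_j} \le (1-\alpha)^{0.99(1-\eta) \pi T_{\min}/d} = \p{\frac{s}{2k-1}}^{-0.99(1-\eta)\frac{\log((1-\alpha)^{-1})}{\log(\alpha^{-1})}\frac{\pi}{d}}, \]
matching the bound in the statement. The dominant contribution to the failure probability comes from the Chernoff bound of step two; the step three concentration failure is negligible for $s$ large, which is where the ``for large enough $s$'' qualifier enters. The main obstacle, as flagged, is the uniform concentration in step three, since data-adaptive splits couple all the nodes of the tree and require a careful union-bound argument across all possible split configurations.
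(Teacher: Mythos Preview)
Your proposal is correct and follows essentially the same approach as the paper's proof: the same deterministic lower bound on the path length from $\alpha$-regularity, the same binomial stochastic domination plus Chernoff bound for $T_j$, and the same idea of using uniform concentration of the empirical measure on rectangles to convert sample-fraction regularity into Lebesgue-measure shrinkage (with the $0.99$ absorbing the slack). The only cosmetic difference is that for step three the paper invokes a specific external result (Lemma~12 of \citet{wager2015uniform}) rather than sketching the DKW/VC union bound directly, and it notes that the concentration may fail on the last $\oo(\log\log n)$ levels near the leaf, which is harmless since only $o(c_j(x))$ splits are affected.
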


This lemma then directly translates into a bound on the bias of a single regression tree. Since a forest is an average of independently-generated trees, the bias of the forest is the same as the bias of a single tree.

\begin{theo}
\label{theo:bias}
Under the conditions of Lemma \ref{lemm:diameter}, suppose moreover that $\mu\p{x}$ is Lipschitz continuous and that the trees $T$ in the random forest are honest. Then, provided that $\alpha \leq 0.2$, the bias of the random forest at $x$ is bounded by
$$ \abs{\EE{\hmu\p{x}} - \mu\p{x}} = \oo\p{s^{-\frac{1}{2} \frac{\log\p{\p{1 - \alpha}^{-1}}}{\log\p{\alpha^{-1}}} \frac{\pi}{d}}}; $$
the constant in the $\oo$-bound is given in the proof.
\end{theo}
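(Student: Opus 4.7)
The plan is to reduce the bias of the forest to the bias of a single honest tree (this is immediate from Definition \ref{defi:rf} and linearity of expectation, since every subsample induces an identically-distributed tree), then to bound the bias of a single tree in three conceptual steps: honesty lets me replace the $Y_i$ averages by $\mu(X_i)$ averages on the leaf; Lipschitz continuity of $\mu$ converts that into the coordinate-wise leaf diameters; and Lemma \ref{lemm:diameter} controls the expected diameters.

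For the honesty step, fix a tree $T$ in the forest. Condition on the splits of $T$ and on the features $X_i$ that fall in the leaf $L(x)$. By Definition \ref{defi:honest}, the splits are independent of the $Y_i$ used to form the leaf estimate, so
\begin{equation*}
\mathbb{E}\bigl[\hat\mu_T(x)\bigr] - \mu(x) = \mathbb{E}\!\left[\frac{1}{|\{i : X_i \in L(x)\}|}\sum_{i:\, X_i \in L(x)} \bigl(\mu(X_i) - \mu(x)\bigr)\right].
\end{equation*}
Since $\mu$ is Lipschitz with some constant $L$, and any point $X_i \in L(x)$ satisfies $\|X_i - x\|_2 \le \sqrt{d}\,\max_j \diam_j(L(x)) \le \sum_j \diam_j(L(x))$ (after an obvious crude bound), I obtain
\begin{equation*}
\bigl|\mathbb{E}[\hat\mu_T(x)] - \mu(x)\bigr| \;\le\; L\sqrt{d}\, \sum_{j=1}^d \mathbb{E}\!\left[\diam_j\bigl(L(x)\bigr)\right].
\end{equation*}
Averaging over trees leaves this inequality unchanged, so it suffices to control $\mathbb{E}[\diam_j(L(x))]$ for each coordinate.

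For the diameter step, set $r := \log((1-\alpha)^{-1})/\log(\alpha^{-1})$ and fix $\eta \in (0,1)$. Lemma \ref{lemm:diameter} gives
\begin{equation*}
\Pr\!\bigl[\diam_j(L(x)) \ge t_s(\eta)\bigr] \;\le\; p_s(\eta), \qquad t_s(\eta) = \Bigl(\tfrac{s}{2k-1}\Bigr)^{-0.99(1-\eta)\, r\, \pi/d}, \quad p_s(\eta) = \Bigl(\tfrac{s}{2k-1}\Bigr)^{-\eta^2\pi/(2d\log(\alpha^{-1}))}.
\end{equation*}
Because $\diam_j(L(x)) \le 1$, I get $\mathbb{E}[\diam_j(L(x))] \le t_s(\eta) + p_s(\eta)$. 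The claimed exponent $\gamma^\ast := \tfrac{1}{2} r \pi/d$ is achieved provided I can pick $\eta$ with $0.99(1-\eta) \ge 1/2$ (so that $t_s(\eta) = O(s^{-\gamma^\ast})$) and simultaneously $\eta^2 \ge \log((1-\alpha)^{-1})$ (so that $p_s(\eta) = O(s^{-\gamma^\ast})$). The first inequality forces $\eta \le 1 - 1/1.98 \approx 0.4949$; the second forces $\eta \ge \sqrt{\log((1-\alpha)^{-1})}$. Both can be met because $\alpha \le 0.2$ gives $\sqrt{\log((1-\alpha)^{-1})} \le \sqrt{\log(5/4)} \approx 0.4724 < 0.4949$, so the admissible interval for $\eta$ is nonempty. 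Fixing any such $\eta$ and summing over the $d$ coordinates produces the desired $O(s^{-\gamma^\ast})$ bound, with an explicit constant depending on $L$, $d$, $k$, $\alpha$, and the chosen $\eta$.

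The main obstacle is the delicate interplay in the last step: Lemma \ref{lemm:diameter} does not directly give the exponent $\gamma^\ast$, and converting its tail bound into an expectation requires balancing two competing constraints on $\eta$. The hypothesis $\alpha \le 0.2$ turns out to be exactly what makes the two constraints compatible, so verifying that the admissible window for $\eta$ is nonempty and using it is the only quantitative subtlety; everything else is a straightforward combination of honesty with Lipschitz continuity.
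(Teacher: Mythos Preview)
Your argument is correct and mirrors the paper's proof closely: both use honesty to reduce to $\mu(X_i)-\mu(x)$, Lipschitz continuity to reduce to leaf diameters, and then balance the two terms from Lemma~\ref{lemm:diameter} via a choice of $\eta$ (the paper simply fixes $\eta=\sqrt{\log((1-\alpha)^{-1})}$, which is exactly the left endpoint of your admissible interval). One harmless slip: the chain $\sqrt{d}\,\max_j \diam_j(L(x)) \le \sum_j \diam_j(L(x))$ is false in general, but your displayed bound $L\sqrt{d}\sum_j \mathbb{E}[\diam_j(L(x))]$ is valid anyway via $\sqrt{d}\max_j \le \sqrt{d}\sum_j$, so the proof stands.
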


\subsection{Asymptotic Normality of Random Forests}
\label{sec:gauss}

Our analysis of the asymptotic normality of random forests builds on ideas developed by \citet{hoeffding1948class} and \citet{hajek1968asymptotic} for understanding classical statistical estimators such as $U$-statistics. We begin by briefly reviewing their results to give some context to our proof. Given a predictor $T$ and independent training examples $Z_1$, ..., $Z_n$, the H\'ajek projection of $T$ is defined as
\begin{equation}
\label{eq:hajek}
\proj{T} = \EE{T} + \sum_{i = 1}^n \p{\EE{T \cond Z_i} - \EE{T}}.
\end{equation}
In other words, the H\'ajek projection of $T$ captures the first-order effects in $T$.
Classical results imply that $\smash{\Var{\proj{T}} \leq \Var{T}}$, and further:
\begin{equation}
\label{eq:hajek_l2}
\limn {\Var{\proj{T}}} \big/ {\Var{T}} = 1 \text{ implies that } \limn {\EE{\Norm{\proj{T} - T}_2^2}} \big/ {\Var{T}} = 0. 
\end{equation}
Since the H\'ajek projection $\proj{T}$ is a sum of independent random variables, we should expect it to be asymptotically normal under weak conditions. Thus whenever the ratio of the variance of $\proj{T}$ to that of $T$ tends to 1, the theory of H\'ajek projections almost automatically guarantees that $T$ will be asymptotically normal.\footnote{The moments defined in \eqref{eq:hajek} depend on the data-generating process for the $Z_i$, and so cannot be observed in practice. Thus, the H\'ajek projection is mostly useful as an abstract theoretical tool. For a review of classical projection arguments, see Chapter 11 of \citet{van2000asymptotic}.}

If $T$ is a regression tree, however, the condition from \eqref{eq:hajek_l2} does not apply, and we cannot use the classical theory of H\'ajek projections directly. Our analysis is centered around a weaker form of this condition, which we call $\nu$-incrementality. With our definition, predictors $T$ to which we can apply the argument \eqref{eq:hajek_l2} directly are 1-incremental.

\begin{defi}
\label{defi:incr}
The predictor $T$ is $\nu(s)$-incremental at $x$ if
$$ {\Var{\proj{T}\p{x; \, Z_1,\, ..., \, Z_s}}} \big / {\Var{T\p{x; \, Z_1,\, ..., \, Z_s}}} \gtrsim \nu(s), $$
where $\proj{T}$ is the H\'ajek projection of $T$ \eqref{eq:hajek}. In our notation,
$$f(s) \gtrsim g(s) \text{ means that } \liminf_{s \rightarrow \infty} {f(s)} \big/ {g(s)} \geq 1. $$
\end{defi}

Our argument proceeds in two steps. First we establish lower bounds for the incrementality of regression trees in Section \ref{sec:incremental}. Then, in Section \ref{sec:hajek} we show how we can turn weakly incremental predictors $T$ into 1-incremental ensembles by subsampling (Lemma \ref{lemm:hajek}), thus bringing us back into the realm of classical theory. We also establish the consistency of the infinitesimal jackknife for random forests.
Our analysis of regression trees is motivated by the ``potential nearest neighbors'' model for random forests introduced by \citet{lin2006random}; the key technical device used in Section \ref{sec:hajek} is the ANOVA decomposition of \citet{efron1981jackknife}. The discussion of the infinitesimal jackknife for random forest builds on results of \citet{efron2013estimation} and \citet{wager2014confidence}.

\subsubsection{Regression Trees and Incremental Predictors}
\label{sec:incremental}

Analyzing specific greedy tree models such as CART trees can be challenging. We thus follow the lead of \citet{lin2006random}, and analyze a more general class of predictors---potential nearest neighbors predictors---that operate by doing a nearest-neighbor search over rectangles; see also \citet{biau2010layered}. The study of potential (or layered) nearest neighbors goes back at least to \citet{barndorff1966distribution}.

\begin{defi}
Consider a set of points $X_1, \, ..., \, X_s \in \RR^d$ and a fixed $x \in \RR^d$. A point $X_i$ is a \emph{potential nearest neighbor} (PNN) of $x$ if the smallest axis-aligned hyperrectangle with vertices $x$ and $X_i$ contains no other points $X_j$. Extending this notion, a \emph{PNN $k$-set} of $x$ is a set of points $\Lambda \subseteq \cb{X_1, \, ..., \, X_s}$ of size $k \leq |L| < 2k-1$ such that there exists an axis aligned hyperrectangle $L$ containing $x$, $\Lambda$, and no other training points. A training example $X_i$ is called a $k$-PNN of $x$ if there exists a PNN $k$-set of $x$ containing $X_i$.
Finally, a predictor $T$ is a \emph{$k$-PNN predictor} over $\cb{Z}$ if, given a training set
$$\cb{Z} = \cb{\p{X_1, \, Y_1}, \, ..., \, \p{X_s, \, Y_s}} \in \cb{\RR^d \times \yy}^s $$
and a test point $x \in \RR^d$, $T$ always outputs the average of the responses $Y_i$ over a $k$-PNN set of $x$.
\end{defi}

This formalism allows us to describe a wide variety of tree predictors. For example, as shown by \citet{lin2006random},
any decision tree $T$ that makes axis-aligned splits and has leaves of size between $k$ and $2k - 1$ is a $k$-PNN predictor. In particular, the base learners originally used by \citet{breiman2001random}, namely CART trees grown up to a leaf size $k$ \citep{breiman1984classification}, are $k$-PNN predictors.
Predictions made by $k$-PNN predictors can always be written as
\begin{equation}
\label{eq:pnn_sum}
T\p{x; \, \xi, \, Z_1, \, ..., \, Z_s} = \sum_{i = 1}^s S_i Y_i,
\end{equation}
where $S_i$ is a selection variable that takes the value $1/|\{i: X_i \in L(x)\}|$ for indices $i$ in the selected leaf-set $L(x)$ and 0 for all other indices. If the tree is honest, we know in addition that, for each $i$, $S_i$ is independent of $Y_i$ conditional on $X_i$.

An important property of $k$-PNN predictors is that we can often get a good idea about whether $S_i$ is non-zero even if we only get to see $Z_i$; more formally, as we show below, the quantity $s\Var{\EE{S_1\cond Z_1}}$ cannot get too small. Establishing this fact is a key step in showing that $k$-PNNs are incremental.
In the following result, $T$ can be an arbitrary symmetric $k$-PNN predictor.

\begin{lemm}
\label{lemm:pnn}
Suppose that the observations $X_1, \, X_2, \, \ldots$ are independent and identically distributed on $[0, \, 1]^d$ with a density $f$ that is bounded away from infinity, and let $T$ be any symmetric $k$-PNN predictor.
Then, there is a constant $C_{f, \, d}$ depending only on $f$ and $d$ such that, as $s$ gets large,
\begin{equation}
\label{eq:pnn_bound}
s\Var{\EE{S_1 \cond Z_1}} \gtrsim \frac{1}{k} \, C_{f, \, d} \big/ \log\p{s}^{d},
\end{equation}
where $S_i$ is the indicator for whether the observation is selected in the subsample.
When $f$ is uniform over $[0, \, 1]^d$, the bound holds with $C_{f, \, d} = 2^{-(d + 1)}\p{d - 1}!$.
\end{lemm}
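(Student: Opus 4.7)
The plan is to reduce via the tower property to $\Var{\EE{S_1 \cond X_1}}$, then use Cauchy-Schwarz on a carefully chosen subset $B$ of $[0, 1]^d$ on which $p(x_1) := \EE{S_1 \cond X_1 = x_1}$ carries essentially all of its mass.

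First I would set up the reduction. Because $T$ is symmetric, the weights $S_1, \, \ldots, \, S_s$ are exchangeable and satisfy $\sum_i S_i = 1$ on the selected leaf, so $\EE{S_1} = 1/s$; the tower property (conditional Jensen) then gives $\Var{\EE{S_1 \cond Z_1}} \geq \Var{\EE{S_1 \cond X_1}} = \EE{p(X_1)^2} - s^{-2}$. Let $A_1$ denote the event that $X_1$ is a $k$-PNN of $x$ among $X_1, \, \ldots, \, X_s$. The defining structure of $k$-PNN predictors gives $S_1 \leq \mathbf{1}_{A_1}/k$ (only $k$-PNNs can be selected, and the selection weight is at most $1/k$), hence $p(x_1) \leq \PP{A_1 \cond X_1 = x_1}/k$. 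Writing $q(x, x_1) = \int_{R(x, x_1)} f$ for the probability mass in the minimal axis-aligned rectangle spanned by $x$ and $x_1$, the number of $X_j$, $j \geq 2$, in $R(x, X_1)$ is $\mathrm{Binomial}(s - 1, q(x, X_1))$; since $X_1$ is a $k$-PNN only if this count is at most $2k - 2$ (every witnessing hyperrectangle contains $R(x, X_1)$), a Chernoff bound gives $\PP{A_1 \cond X_1 = x_1} \leq s^{-Ck/8}$ whenever $(s - 1)\, q(x, x_1) \geq Ck \log s$, for $C$ a suitably large constant.

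Now I would identify the concentration set and compute its volume. Set $B = \cb{x_1 \in [0, 1]^d : q(x, x_1) \leq Ck \log s / s}$; the Chernoff bound combined with $p \leq 1/k$ yields $\EE{p(X_1) \mathbf{1}_{B^c}} = o(1/s)$ for $C$ large, hence $\EE{p(X_1) \mathbf{1}_B} = (1 - o(1))/s$. For uniform $f$ and $x$ at a corner of $[0, 1]^d$, the substitution $u_j = -\log x_{1, j}$ converts the Lebesgue measure of $B$ into the tail integral $\int_{\log(s / (Ck \log s))}^\infty t^{d-1} e^{-t}/(d-1)!\, dt$, which is asymptotically bounded above by $Ck\, (\log s)^d / ((d-1)!\, s)$; for $x$ interior, the same bound holds up to a factor $2^d$ accounting for the $2^d$ orthants around $x$, and for general bounded $f$ up to a factor depending on $\sup f$.

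Finally, Cauchy-Schwarz applied to $\EE{p(X_1) \mathbf{1}_B} = \EE{(p(X_1) \mathbf{1}_B) \cdot \mathbf{1}_B}$ gives
\[
\EE{p(X_1)^2} \;\geq\; \EE{p(X_1)^2 \mathbf{1}_B} \;\geq\; \EE{p(X_1) \mathbf{1}_B}^2 \big/ \PP{B} \;\gtrsim\; (1/s)^2 \big/ \p{k\, (\log s)^d / s} \;=\; 1 \big/ \p{s\, k\, (\log s)^d},
\]
and since $s^{-2}$ is negligible compared with the right-hand side, multiplying by $s$ yields $s \Var{\EE{S_1 \cond Z_1}} \gtrsim C_{f, d}/(k \log(s)^d)$, with the explicit constant $C_{f, d} = 2^{-(d + 1)}(d - 1)!$ in the uniform case after careful bookkeeping of the orthant factor and the Chernoff constant $C$. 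The main obstacle is balancing the threshold $C$: it must be large enough that $\EE{p(X_1) \mathbf{1}_{B^c}}$ is genuinely $o(1/s)$ uniformly in $k \geq 1$, yet small enough that $\PP{B}$ contributes only the stated power of $\log s$ and the claimed $2^{-(d+1)}(d-1)!$ constant; once that balancing is fixed, the Cauchy-Schwarz step and the tower reduction are essentially mechanical.
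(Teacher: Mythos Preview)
Your argument is essentially the paper's: both bound $S_1$ by $k^{-1}$ times the $k$-PNN indicator, identify a small ``concentration set'' on which $\EE{S_1 \cond Z_1}$ carries almost all of its mass $1/s$, compute that set's probability via the exponential substitution $u_j = -\log x_{1j}$ and a gamma tail, and finish with conditional Jensen (equivalently, your Cauchy--Schwarz step). The only substantive difference is how the concentration set is cut: you threshold on $q(x,X_1)$ and invoke Chernoff, whereas the paper thresholds directly on $\EE{P_1 \cond Z_1} \geq s^{-2}$ using the cruder union bound $\PP{\mathrm{Bin}(s-1,q)\leq 2k-2}\leq s^{2k-2}(1-q)^{s-2k+1}$.

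The consequence is purely in the constant. Your Chernoff route needs $C$ large enough (e.g.\ $C>8$ when $k=1$) to make $\EE{p(X_1)\mathbf{1}_{B^c}}=o(1/s)$, which feeds through to $\PP{B}\lesssim 2^d C k(\log s)^d/((d-1)!\,s)$ and hence a final constant $(d-1)!/(2^d C)$---off from the claimed $2^{-(d+1)}(d-1)!$ by a factor $C/2>1$. The paper's threshold $s^{-2}$ is tuned so that the gamma tail contributes exactly $2k(\log s)^d/((d-1)!\,s)$ at a corner, and the orthant union then gives the stated $2^{d+1}$. So your sketch proves the lemma with \emph{some} $C_{f,d}$, but to recover the specific uniform-case constant you should either replace the Chernoff step with the paper's sharper binomial bound or, equivalently, define $B$ via a threshold on $\EE{P_1\cond X_1}$ rather than on $q$.
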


When $k = 1$ we see that, marginally, $S_1 \sim \text{Bernoulli}(1/s)$ and so $s\Var{S_1} \sim 1$;
more generally, a similar calculation shows that $1/(2k - 1) \lesssim s\Var{S_1} \lesssim 1/k$.
Thus, \eqref{eq:pnn_bound} can be interpreted as a lower bound on how much
information $Z_1$ contains about the selection event $S_1$.

Thanks to this result, we are now ready to show that all honest and regular random-split trees are incremental. Notice that any symmetric $k$-regular tree following Definition \ref{defi:regular} is also a symmetric $k$-PNN predictor.

\begin{theo}
\label{theo:pnn}
Suppose that the conditions of Lemma \ref{lemm:pnn} hold and that $T$ is an honest $k$-regular symmetric tree in the sense of Definitions \ref{defi:honest} (part a), \ref{defi:regular} (part a), and \ref{defi:symmetric}. 
Suppose moreover that the conditional moments $\mu\p{x}$ and $\mu_2(x)$ are both Lipschitz continuous at $x$.
Finally, suppose that $\Var{Y \cond X = x} > 0$.
Then $T$ is $\nu\p{s}$-incremental at $x$ with
\begin{equation}
\label{eq:pnn_inc}
\nu\p{s} = {C_{f, \, d}} \big / {\log\p{s}^{d}},
\end{equation}
where $C_{f, \, d}$ is the constant from Lemma \ref{lemm:pnn}.
\end{theo}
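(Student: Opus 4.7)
The plan is to exploit honesty to write $T$ as a weighted sum $\sum_i S_i Y_i$ in which the weights are measurable with respect to the covariates alone, reduce the variance of the H\'ajek summand to a geometric quantity controlled by Lemma \ref{lemm:pnn}, and then divide by an upper bound on $\Var{T}$ of order $1/k$. Since any symmetric $k$-regular tree is in particular a symmetric $k$-PNN predictor, Lemma \ref{lemm:pnn} is directly available.

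The first step is to compute $\EE{T\mid Z_1}$ explicitly. Writing $T = \sum_{i=1}^s S_i Y_i$ as in \eqref{eq:pnn_sum}, honesty implies that $(S_1, \dots, S_s)$ is a function of $(X_1,\dots,X_s,\xi)$ only. For $i \ne 1$, the $i.i.d.$ structure integrates $Y_i$ out to $\mu(X_i)$, leaving
\[ \EE{T \mid Z_1} \;=\; Y_1\, p(X_1) \;+\; h(X_1), \qquad p(X_1) \;:=\; \EE{S_1 \mid X_1}, \]
for some function $h$ of $X_1$ only. Conditioning on $X_1$ and noting that $(Y_1 - \mu(X_1)) p(X_1)$ has conditional mean zero, the law of total variance gives
\[ \Var{\EE{T \mid Z_1}} \;=\; \EE{p(X_1)^2 \sigma^2(X_1)} \;+\; \Var{\mu(X_1) p(X_1) + h(X_1)} \;\ge\; \EE{p(X_1)^2 \sigma^2(X_1)}, \]
where $\sigma^2(y) := \mu_2(y) - \mu(y)^2$ is Lipschitz at $x$ with $\sigma^2(x) > 0$. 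I then pick $\epsilon > 0$ so that $\sigma^2(y) \ge \sigma^2(x)/2$ for $|y-x| \le \epsilon$ and localize to obtain $\EE{p(X_1)^2 \sigma^2(X_1)} \ge (\sigma^2(x)/2)\, \EE{p(X_1)^2 \mathbf{1}\{|X_1-x| \le \epsilon\}}$.

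The second step combines Lemmas \ref{lemm:diameter} and \ref{lemm:pnn}. Because $S_1 \le \mathbf{1}\{X_1 \in L(x)\}/k$ and since $X_1 \in L(x)$ together with $|X_1 - x| > \epsilon$ forces $\diam(L(x)) > \epsilon$, the tail contribution $\EE{p(X_1)^2 \mathbf{1}\{|X_1-x|>\epsilon\}}$ is dominated by $\PP{\diam(L(x)) > \epsilon}/k^2$, which is polynomially small in $s$ by Lemma \ref{lemm:diameter}. On the other hand, Lemma \ref{lemm:pnn} yields $\EE{p(X_1)^2} \ge \Var{p(X_1)} \gtrsim C_{f,d}/(s k \log(s)^d)$, so the localized and full second moments are asymptotically equivalent. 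Multiplying by $s$ and using $\Var{\proj{T}} = s\, \Var{\EE{T \mid Z_1}}$, this yields $\Var{\proj{T}} \gtrsim \sigma^2(x)\, C_{f,d}\,/(k \log(s)^d)$.

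The third step bounds $\Var{T}$ from above. Conditionally on $(X_1,\dots,X_s)$, the variance of $T$ equals $\sum_i S_i^2 \sigma^2(X_i) \le (\sigma^2(x) + L_{\sigma^2} \diam(L(x)))/k$, whose expectation is $\sigma^2(x)/k + o(1/k)$ by Lemma \ref{lemm:diameter}; meanwhile $\Var{\EE{T \mid X_1,\dots,X_s}} \le L_\mu^2\, \EE{\diam(L(x))^2}$ is of strictly smaller order by the Lipschitz hypothesis on $\mu$ and the polynomial decay of $\diam(L(x))$. Combining, $\Var{T} \le \sigma^2(x)/k \cdot (1 + o(1))$; taking the ratio with the bound of the previous step proves the incrementality claim. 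The main obstacle is the localization in step two: one must verify that the polynomial tail $\PP{\diam(L(x)) > \epsilon}$ from Lemma \ref{lemm:diameter}, multiplied by $1/k^2$, is asymptotically smaller than the $C_{f,d}/(s k \log(s)^d)$ scale produced by Lemma \ref{lemm:pnn}. If the margin is ever tight, one can sharpen the crude inequality $p(X_1) \le 1/k$ to a bound on a shrinking annulus and integrate directly.
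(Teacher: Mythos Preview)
Your plan coincides with the paper's: use honesty to write $\EE{T\mid Z_1}=Y_1\,p(X_1)+h(X_1)$, drop to $\EE{p(X_1)^2\sigma^2(X_1)}$ via the law of total variance, feed in Lemma~\ref{lemm:pnn}, and divide by $\Var{T}\lesssim\sigma^2(x)/k$. The paper localizes by passing to a truncated predictor $T'=T\cdot\mathbf{1}\{\diam(L(x))\le s^{-\omega}\}$ and invoking the ANOVA inequality $\Var{\EE{(T-T')\mid Z_1}}\le s^{-1}\Var{T-T'}$; your fixed-$\epsilon$ localization is a legitimate variant of the same idea.

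The gap is quantitative and sits in step two. Your tail bound $\EE{p(X_1)^2\mathbf{1}\{|X_1-x|>\epsilon\}}\le \PP{\diam(L(x))>\epsilon}/k^2$ is valid but far too weak: $k$ is a fixed constant, and Lemma~\ref{lemm:diameter} only delivers $\PP{\diam(L(x))>\epsilon}=\oo(s^{-c})$ with $c=\tfrac{\eta^2\pi}{2d\log(\alpha^{-1})}$, which is generically much smaller than $1$ (e.g.\ already $c<0.31$ at $\alpha=0.2$, $d=1$, $\pi=1$). Since the main term from Lemma~\ref{lemm:pnn} is of order $1/(sk\log(s)^d)$, your tail bound is asymptotically \emph{larger} than the main term, so the claimed asymptotic equivalence of the localized and full second moments does not follow. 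The hedge at the end (``shrinking annulus'') is not the right repair. What is needed is the factor $1/s$ that the paper extracts via ANOVA; you can recover it directly by using $p(X_1)\le 1/k$ once to get $p(X_1)^2\le p(X_1)/k$ and then
\[
\EE{p(X_1)\,\mathbf{1}\{|X_1-x|>\epsilon\}}
=\EE{S_1\,\mathbf{1}\{|X_1-x|>\epsilon\}}
\le \EE{S_1\,\mathbf{1}\{\diam(L(x))>\epsilon\}}
=\frac{1}{s}\,\PP{\diam(L(x))>\epsilon},
\]
the last identity by exchangeability and $\sum_i S_i=1$. This yields a tail of order $\PP{\diam>\epsilon}/(ks)=o\bigl(1/(sk\log(s)^d)\bigr)$, after which your argument closes. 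As a minor correction, in step three you should condition on $(X_1,\dots,X_s,\xi)$ rather than only on the covariates, so that the selection weights $S_i$ are deterministic before you write $\Var{T\mid\cdot}=\sum_i S_i^2\sigma^2(X_i)$.
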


Finally, the result of Theorem \ref{theo:pnn} also holds for double-sample trees of the form described in Procedure \ref{alg:split}. To establish the following result, we note that a double-sample tree is an honest, symmetric $k$-PNN predictor with respect to the $\ii$-sample, while all the data in the $\jj$-sample can be folded into the auxiliary noise term $\xi$; the details are worked out in the proof.

\begin{coro}
\label{coro:doublesample}
Under the conditions of Theorem \ref{theo:pnn}, suppose that $T$ is instead a double-sample tree (Procedure \ref{alg:split}) satisfying Definitions \ref{defi:honest} (part b), \ref{defi:regular} (part b), and \ref{defi:symmetric}. Then, $T$ is $\nu$-incremental, with $\nu\p{s} =  {C_{f, \, d}} / (4\log\p{s}^{d})$.
\end{coro}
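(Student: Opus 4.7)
The plan is to reduce Corollary \ref{coro:doublesample} to Theorem \ref{theo:pnn} by recasting the double-sample tree as a single-sample, honest $k$-PNN predictor on its $\mathcal{I}$-subsample, with the $\mathcal{J}$-sample absorbed into auxiliary randomness. Concretely, given $T(x; \xi, Z_1, \ldots, Z_s)$ whose internal partition $\mathcal{I} \cup \mathcal{J} = \{1, \ldots, s\}$ has sizes $\lfloor s/2 \rfloor$ and $\lceil s/2 \rceil$, I would form the enlarged noise $\xi' = (\xi, Z_\mathcal{J}, \mathcal{I})$ and reinterpret $T$ as a predictor $T'(x; \xi', Z_{i_1}, \ldots, Z_{i_{s'}})$ whose primary inputs are the $s' = \lfloor s/2 \rfloor$ $\mathcal{I}$-observations. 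Since the original training examples are i.i.d., the $\mathcal{J}$-portion of $\xi'$ is independent of the $\mathcal{I}$-sample, so $T'$ fits the framework of Definition \ref{defi:rf}. Definitions \ref{defi:honest}(b) and \ref{defi:regular}(b) for $T$ translate into the standard honest, $\alpha$-regular, symmetric $k$-PNN conditions of Definitions \ref{defi:honest}(a), \ref{defi:regular}(a), and \ref{defi:symmetric} for $T'$, and the random-split property is inherited.

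Applying Theorem \ref{theo:pnn} to $T'$ with effective subsample size $s'$ yields
\[
(s/2)\,\Var{\EE{T' \mid Z_1}} \;\gtrsim\; \bigl(C_{f,d}/\log(s/2)^d\bigr)\,\Var{T'},
\]
and $\Var{T'} = \Var{T}$ since $T'$ and $T$ are the same random variable. To convert this back into $\nu$-incrementality of $T$ as a function of its $s$ inputs, I would decompose
\[
\EE{T \mid Z_1} \;=\; \tfrac{1}{2}\,\EE{T \mid Z_1,\, 1 \in \mathcal{I}} + \tfrac{1}{2}\,\EE{T \mid Z_1,\, 1 \in \mathcal{J}}
\]
according to whether $Z_1$ ends up in the estimation half or the splitting half. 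By $\mathcal{I}$-honesty, the first term is linear in $Y_1$ with slope $\EE{S_1 \mid X_1, 1 \in \mathcal{I}}$, so after conditioning on $X_1$ and projecting onto the $Y_1$-linear direction, the squared probability $P(1 \in \mathcal{I})^2 = 1/4$ emerges as an overall factor on the Theorem \ref{theo:pnn} lower bound. Combining with Lemma \ref{lemm:pnn} applied to the $\mathcal{I}$-sample, and absorbing the mild $\log(s/2)^d$-versus-$\log(s)^d$ discrepancy into the multiplicative constant, delivers the claimed $\nu(s) = C_{f,d}/(4\log(s)^d)$.

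The step I expect to be the main obstacle is controlling the $1 \in \mathcal{J}$ branch $\EE{T \mid Z_1, 1 \in \mathcal{J}}$, which still depends on $Y_1$ through the splitting criterion and could in principle anticorrelate with the $\mathcal{I}$-branch signal and shrink $\Var{\EE{T \mid Z_1}}$. To rule this out I would exploit that, in the $\mathcal{J}$-branch, $Y_1$ is only one of $\lceil s/2 \rceil$ responses symmetrically feeding the split criterion, so its marginal influence on the post-split tree structure is of lower order than the direct linear signal $Y_1 \cdot \EE{S_1 \mid X_1, 1 \in \mathcal{I}}$ from the $\mathcal{I}$-branch. A projection/smoothness argument mirroring the one used inside the proof of Theorem \ref{theo:pnn} then shows that the $\mathcal{J}$-branch cannot cancel the $\mathcal{I}$-branch's $Y_1$-linear signal to leading order, securing the factor-$1/4$ lower bound and completing the incrementality claim.
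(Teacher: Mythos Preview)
Your high-level plan matches the paper's proof: absorb the $\mathcal{J}$-sample into the auxiliary noise so that the double-sample tree becomes an honest, regular $k$-PNN predictor on its $\mathcal{I}$-sample, split $\mathbb{E}[T\mid Z_1]$ according to whether $1\in\mathcal{I}$ or $1\in\mathcal{J}$, and apply Theorem~\ref{theo:pnn} to the $\mathcal{I}$-branch. The paper uses the elementary inequality $\mathrm{Var}[A+B]\ge\tfrac12\mathrm{Var}[A]-\mathrm{Var}[B]$ rather than your $Y_1$-projection, but either route produces the factor $1/4$ once the $\mathcal{J}$-branch is shown to be negligible.

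The gap is precisely in that last step. Your observation that $Y_1$ is ``only one of $\lceil s/2\rceil$ responses symmetrically feeding the split criterion'' is, when made precise, the Hoeffding bound
\[
\lceil s/2\rceil\,\mathrm{Var}\!\bigl[\mathbb{E}[T\mid Z_1]\,\big|\,1\in\mathcal{J}\bigr]\;\le\;\mathrm{Var}\!\bigl[\mathbb{E}[T\mid\{Z_j:j\in\mathcal{J}\}]\bigr],
\]
and this is exactly what the paper writes down. But the right-hand side is not automatically small: it is the variance, over the $\mathcal{J}$-sample, of the tree's prediction after the honest $\mathcal{I}$-sample has been integrated out, i.e., essentially the variance of the conditional bias $\mathbb{E}[Y\mid X\in L(x)]$. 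Nothing inside the proof of Theorem~\ref{theo:pnn} controls this quantity, and a priori it could be of constant order, which would swamp the $C_{f,d}/\log(s)^d$ signal from the $\mathcal{I}$-branch. The paper instead invokes Lemma~\ref{lemm:diameter} and the argument of Theorem~\ref{theo:bias} to show that this variance is $O(s^{-c})$ for some $c>0$, which is what makes the $\mathcal{J}$-branch negligible. Your reference to ``a projection/smoothness argument mirroring the one used inside the proof of Theorem~\ref{theo:pnn}'' points to the wrong toolbox; the missing ingredient is the bias machinery of Section~\ref{sec:bias}, not the incrementality machinery.
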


\subsubsection{Subsampling Incremental Base Learners}
\label{sec:hajek}

In the previous section, we showed that decision trees are $\nu$-incremental, in that the H\'ajek projection $\proj{T}$ of $T$ preserves at least some of the variation of $T$. In this section, we show that randomly subsampling $\nu$-incremental predictors makes them 1-incremental; this then lets us proceed with a classical statistical analysis. The following lemma, which flows directly from the ANOVA decomposition of \citet{efron1981jackknife}, provides a first motivating result for our analysis.

\begin{lemm}
\label{lemm:hajek}
Let $\hmu(x)$ be the estimate for $\mu(x)$ generated by a random forest with base learner $T$ as defined in \eqref{eq:rf_defi}, and let $\proj{\hmu}$ be the H\'ajek projection of $\hmu$ \eqref{eq:hajek}. Then
$$\EE{\p{\hmu\p{x} - \proj{\hmu}\p{x}}^2} \leq \p{\frac{s}{n}}^2 \, \Var{T\p{x; \, \xi,  \, Z_1, \, ..., \, Z_s}} $$
whenever the variance $\Var{T}$ of the base learner is finite.
\end{lemm}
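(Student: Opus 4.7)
The plan is to invoke the Efron--Stein ANOVA decomposition of the U-statistic $\hmu(x)$ and identify the H\'ajek projection with its first-order component; everything else reduces to a combinatorial bookkeeping exercise that exploits $s \ll n$.

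First, I would work with the ``exact'' forest from Definition~\ref{defi:rf}. After integrating over $\xi$, write $h(Z_{i_1},\ldots,Z_{i_s}) = \EE_\xi[T(x;\xi,Z_{i_1},\ldots,Z_{i_s})]$, so that $\hmu(x) = \binom{n}{s}^{-1}\sum_{|I|=s} h(Z_I)$ is a U-statistic of order $s$ in the i.i.d.\ sample $Z_1,\ldots,Z_n$. Because $h = \EE_\xi[T]$, Jensen/total variance gives $\Var(h) \leq \Var(T)$, so it is enough to replace $\Var(T)$ by $\Var(h)$ on the right-hand side of the claim.

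Next, I would decompose $h$ into its symmetric Hoeffding components $h_k$ of order $k=1,\ldots,s$, with $\sigma_k^2 := \Var(h_k(Z_1,\ldots,Z_k))$. Expanding each $h(Z_I) - \theta$ as a sum over subsets $A \subseteq I$ and averaging over all $|I|=s$ gives the ANOVA decomposition of $\hmu(x)$:
\[
\hmu(x) - \theta \;=\; \sum_{k=1}^{s} \frac{\binom{n-k}{s-k}}{\binom{n}{s}} \sum_{|A|=k,\; A \subseteq [n]} h_k\bigl((Z_j)_{j \in A}\bigr),
\]
whose components are mutually orthogonal in $L^2$. A short direct calculation (taking conditional expectations $\EE[\hmu \mid Z_i]$ and summing) shows that the H\'ajek projection $\proj{\hmu}$ of \eqref{eq:hajek} coincides exactly with the $k=1$ ANOVA term, so by orthogonality
\[
\EE\bigl[(\hmu(x) - \proj{\hmu}(x))^2\bigr] \;=\; \sum_{k=2}^{s} \left(\frac{\binom{n-k}{s-k}}{\binom{n}{s}}\right)^{\!2} \binom{n}{k}\, \sigma_k^2 \;=\; \sum_{k=2}^{s} \frac{\binom{s}{k}^{2}}{\binom{n}{k}}\, \sigma_k^2,
\]
using the combinatorial identity $\binom{n-k}{s-k}/\binom{n}{s} = \binom{s}{k}/\binom{n}{k}$.

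The remaining step is the elementary bound
\[
\frac{\binom{s}{k}}{\binom{n}{k}} \;=\; \prod_{j=0}^{k-1} \frac{s-j}{n-j} \;\leq\; \Bigl(\frac{s}{n}\Bigr)^{k} \;\leq\; \Bigl(\frac{s}{n}\Bigr)^{2} \qquad (k \geq 2,\ s \leq n),
\]
where I used $(s-j)/(n-j) \leq s/n$ (equivalent to $s \leq n$) for every $j \geq 0$. Plugging this termwise into the sum and completing it with the (non-negative) $k=1$ contribution yields
\[
\EE\bigl[(\hmu(x) - \proj{\hmu}(x))^2\bigr] \;\leq\; \Bigl(\frac{s}{n}\Bigr)^{2} \sum_{k=1}^{s} \binom{s}{k}\, \sigma_k^2 \;=\; \Bigl(\frac{s}{n}\Bigr)^{2} \Var(h) \;\leq\; \Bigl(\frac{s}{n}\Bigr)^{2} \Var(T),
\]
which is the desired inequality. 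The only ``content'' in the argument is the ANOVA identification plus the elementary ratio bound; the main thing to be careful about is correctly matching the H\'ajek projection \eqref{eq:hajek} with the first ANOVA stratum, and handling the auxiliary randomness $\xi$ by absorbing it into $h$ via $\Var(h) \leq \Var(T)$.
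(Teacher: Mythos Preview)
Your proposal is correct and follows essentially the same ANOVA/Hoeffding-decomposition argument as the paper's proof: identify the H\'ajek projection with the $k=1$ stratum, bound the ratio $\binom{s}{k}/\binom{n}{k}$ by $(s/n)^2$ for $k\geq 2$, and complete the sum to $\Var(h)$. The only notable difference is that you explicitly absorb the auxiliary randomness $\xi$ into the kernel $h=\EE_\xi[T]$ and invoke $\Var(h)\leq\Var(T)$, whereas the paper leaves this step implicit; your version is slightly cleaner on that point.
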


This technical result paired with Theorem \ref{theo:pnn} or Corollary \ref{coro:doublesample} leads to an asymptotic Gaussianity result; from a technical point of view, it suffices to check Lyapunov-style conditions for the central limit theorem.

\begin{theo}
\label{theo:gauss}
Let $\hmu(x)$ be a random forest estimator trained according the conditions of Theorem \ref{theo:pnn} or Corollary \ref{coro:doublesample}. Suppose, moreover, that the subsample size $s_n$ satisfies
$$ \limn s_n = \infty \eqand \limn {s_n \log\p{n}^{d}} \big/ {n} = 0, $$
and that  \smash{$\mathbb{E}[\lvert Y - \mathbb{E}[Y \cond X = x]\rvert^{2 + \delta} \cond X = x] \leq M$}
for some constants $\delta, \, M > 0$, uniformly over all $x \in [0, \, 1]^d$.
Then, there exists a sequence $\sigma_n(x) \rightarrow 0$ such that
\begin{equation}
\label{eq:gauss}
 \frac{\hmu_n\p{x} - \EE{\hmu_n\p{x}}}{\sigma_n(x)} \Rightarrow \nn\p{0, \, 1},
\end{equation}
where $\nn\p{0, \, 1}$ is the standard normal distribution.
\end{theo}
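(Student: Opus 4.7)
The strategy follows the H\'ajek projection argument set up in Section \ref{sec:gauss}: show that $\hmu_n(x) - \EE{\hmu_n(x)}$ is asymptotically equivalent, in $L^2$, to its H\'ajek projection $\proj{\hmu}_n(x) - \EE{\hmu_n(x)}$, and then apply a CLT for sums of i.i.d.\ random variables to the projection itself.

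First, I compute the projection explicitly. Because each index $i$ appears in exactly a fraction $s_n/n$ of the size-$s_n$ subsamples (i.e., $\binom{n-1}{s_n-1}/\binom{n}{s_n} = s_n/n$), a short symmetry argument yields
\begin{equation*}
\EE{\hmu_n\p{x} \cond Z_i} - \EE{\hmu_n\p{x}} \;=\; \frac{s_n}{n}\p{\EE{T\p{x;\xi,Z_1,\ldots,Z_{s_n}} \cond Z_1 = Z_i} - \EE{T}}.
\end{equation*}
Writing $\zeta_1 := \Var{\EE{T \cond Z_1}}$, this gives $\Var{\proj{\hmu}_n\p{x}} = (s_n^2/n)\,\zeta_1$. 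Theorem \ref{theo:pnn} (or Corollary \ref{coro:doublesample} in the double-sample case) asserts that $T$ is $\nu(s_n)$-incremental with $\nu(s) \asymp 1/\log(s)^{d}$; rearranged, this reads $\zeta_1 \gtrsim \nu(s_n)\Var{T}/s_n$, and hence
\begin{equation*}
\Var{\proj{\hmu}_n\p{x}} \;\gtrsim\; \frac{s_n}{n}\,\nu(s_n)\,\Var{T}.
\end{equation*}

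Second, Lemma \ref{lemm:hajek} bounds the residual by $\EE{\p{\hmu_n - \proj{\hmu}_n}^2} \leq (s_n/n)^2 \Var{T}$. Combining this with the lower bound above,
\begin{equation*}
\frac{\EE{\p{\hmu_n\p{x} - \proj{\hmu}_n\p{x}}^2}}{\Var{\proj{\hmu}_n\p{x}}} \;\lesssim\; \frac{s_n \log(s_n)^d}{n} \;\longrightarrow\; 0,
\end{equation*}
by the hypothesis $s_n\log(n)^d/n \to 0$ together with $\log(s_n) \leq \log(n)$. Since $\hmu_n - \proj{\hmu}_n$ is orthogonal to $\proj{\hmu}_n$, this gives $\Var{\hmu_n - \proj{\hmu}_n}/\Var{\hmu_n} \to 0$, reducing the problem to proving asymptotic normality of $\proj{\hmu}_n(x)$.

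Finally, $\proj{\hmu}_n\p{x} - \EE{\hmu_n\p{x}} = \sum_{i=1}^n W_{n,i}$ is a sum of $n$ i.i.d.\ mean-zero summands $W_{n,i} = (s_n/n)\p{\EE{T \cond Z_i} - \EE{T}}$, and I would verify the Lyapunov condition for this triangular array using the $(2+\delta)$-moment hypothesis on $Y$. Since $T$ is a convex combination of the $Y_i$'s with weights depending (by honesty) only on the $X_i$'s, Jensen's inequality gives $\EE{\abs{T}^{2+\delta}} \leq M$, which controls $\EE{\abs{W_{n,i}}^{2+\delta}}$. The main obstacle is that the crude bound $\EE{\abs{W_{n,i}}^{2+\delta}} \lesssim (s_n/n)^{2+\delta}$ does not kill the Lyapunov ratio over the full range of allowed subsample sizes; one must exploit the decomposition $\EE{T \cond Z_1} = Y_1\EE{S_1 \cond X_1} + B(X_1)$ induced by honesty (where $S_1$ is the leaf-selection weight and $B(\cdot)$ is Lipschitz in $X_1$), and the fact that $\EE{S_1 \cond X_1}$ has average size $O(1/s_n)$, to get sharper moment bounds. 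Setting $\sigma_n^2(x) := \Var{\hmu_n\p{x}}$ then yields \eqref{eq:gauss}; that $\sigma_n(x) \to 0$ follows from the Hoeffding-type relation $\zeta_1 \leq \Var{T}/s_n$, which gives $\sigma_n^2(x) \lesssim (s_n/n)\Var{T} \to 0$.
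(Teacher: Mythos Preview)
Your overall strategy matches the paper's proof: reduce to the H\'ajek projection using Lemma \ref{lemm:hajek} together with incrementality, then verify a Lyapunov condition for the i.i.d.\ sum by exploiting honesty to isolate the term $Y_1\,\EE{S_1 \cond X_1}$. The reduction step is correct as you wrote it, and your diagnosis that the crude bound $\EE{|\EE{T\cond Z_1}-\EE{T}|^{2+\delta}} = \oo(1)$ is too weak is also right.

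The gap is in how you propose to close the Lyapunov ratio. First, $B(X_1)=\sum_{i\geq 2}\EE{S_i\,\mu(X_i)\cond X_1}$ is \emph{bounded} (by $u:=\sup_x|\mu(x)|$, using $\sum_i S_i=1$), not Lipschitz in $X_1$; the leaf-selection variables depend discontinuously on $X_1$, so smoothness is unavailable and boundedness is what the paper actually uses. Second, and more substantively, the fact you invoke---that $\EE{S_1\cond X_1}$ has average size $O(1/s_n)$---is a first-moment statement and only produces \emph{upper} bounds on the numerator; it does not by itself control the ratio over the full range of $s_n$ allowed by the hypothesis. What the paper does is bound \emph{both} numerator and denominator by the same quantity $\EE{\EE{S_1\cond X_1}^2}$: for the numerator one uses $\EE{S_1\cond X_1}\leq 1$ to get $\EE{\EE{S_1\cond X_1}^{2+\delta}}\leq\EE{\EE{S_1\cond X_1}^{2}}$, and boundedness of $B$ to get $\EE{|B(X_1)-\EE{B}|^{2+\delta}}\leq (2u)^{\delta}\Var{B(X_1)}\lesssim \EE{\EE{S_1\cond X_1}^2}$; for the denominator one uses the relation $\Var{\EE{T\cond Z_1}}\gtrsim \EE{\EE{S_1\cond X_1}^2}\,\Var{Y\cond X=x}$ established inside the proof of Theorem \ref{theo:pnn}. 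After cancellation the Lyapunov ratio becomes $\big(n\,\EE{\EE{S_1\cond X_1}^2}\big)^{-\delta/2}$, and the \emph{lower} bound of Lemma \ref{lemm:pnn} gives $n\,\EE{\EE{S_1\cond X_1}^2}\gtrsim n/(s_n\log(s_n)^d)\to\infty$. So the missing ingredient is Lemma \ref{lemm:pnn}'s second-moment lower bound, not the first-moment scaling.
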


Moreover, as we show below, it is possible to accurately estimate the variance of a random forest using the infinitesimal jackknife for random forests \citep{efron2013estimation,wager2014confidence}.

\begin{theo}
\label{theo:ij}
Let $\hVIJ\p{x;, \, Z_1,\, ...,\, Z_n}$ be the infinitesimal jackknife for random forests as defined in \eqref{eq:hvij}. Then, under the conditions of Theorem \ref{theo:gauss},
\begin{equation}
\label{eq:ij_consistent}
\hVIJ\p{x; \, Z_1,\, ...,\, Z_n} \big/ \sigma_n^2(x) \rightarrow_p 1.
\end{equation}
\end{theo}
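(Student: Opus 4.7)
\textbf{Proof plan for Theorem \ref{theo:ij}.} The strategy is to identify $\hVIJ(x)$ with a sample version of $\Var(\proj{\hmu}_n(x))$, then to invoke the machinery already used for Theorem \ref{theo:gauss} (namely Lemma \ref{lemm:hajek} combined with $\nu$-incrementality) to conclude that $\Var(\proj{\hmu}_n(x))$ is asymptotically indistinguishable from $\sigma_n^2(x)$. Throughout, since the theorem assumes $B=\infty$, $\Cov_*$ is the exact conditional covariance given the training sample $Z = (Z_1,\ldots,Z_n)$ over the subsampling-plus-$\xi$ randomness.

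The first step is to rewrite the summands of $\hVIJ$ in a tractable form. Let $T^*$ denote $\EE_\xi[T(x;\xi,Z_{i_1}^*,\ldots,Z_{i_s}^*)]$ for a uniformly random subsample of size $s$, and $N_i^*$ the indicator that $i$ lies in the subsample, so that $\PP[N_i^*=1]=s/n$. A short computation shows that
\begin{equation*}
\Cov_*\bigl[T^*,\,N_i^*\mid Z\bigr]
\;=\;\tfrac{s}{n}\,R_i(Z),\qquad
R_i(Z)\;:=\;\EE\bigl[T^*\mid N_i^*=1,\,Z\bigr]\,-\,\hmu_n(x;Z).
\end{equation*}
Hence, writing $\mu_F := \EE[\hmu_n(x)]$ and $T_1(z):=\EE[T(x;\xi,z,Z_2',\ldots,Z_s')]$ for iid fresh draws $Z_k'$, direct averaging over subsets containing $i$ versus all subsets yields
\begin{equation*}
\EE\bigl[R_i(Z)\mid Z_i\bigr]\;=\;(1-s/n)\bigl(T_1(Z_i)-\mu_F\bigr),
\end{equation*}
while the construction of $\proj{\hmu}_n$ in Section \ref{sec:gauss} gives $\EE[\hmu_n\mid Z_i]-\mu_F=(s/n)(T_1(Z_i)-\mu_F)$ and therefore $\Var(\proj{\hmu}_n(x))=n(s/n)^2\Var(T_1(Z_1))$.

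Next, I would decompose $R_i(Z)=\EE[R_i\mid Z_i]+\Delta_i(Z)$ and expand
\begin{equation*}
\hVIJ(x)\;=\;\tfrac{(n-1)s^2}{n(n-s)^2}\sum_{i=1}^n R_i(Z)^2
\;=\;\tfrac{(n-1)s^2}{n(n-s)^2}\sum_{i=1}^n\!\Bigl[(1-s/n)^2(T_1(Z_i)-\mu_F)^2+2(1-s/n)(T_1(Z_i)-\mu_F)\Delta_i(Z)+\Delta_i(Z)^2\Bigr].
\end{equation*}
The prefactor is designed so that $\tfrac{(n-1)s^2}{n(n-s)^2}(1-s/n)^2=\tfrac{(n-1)s^2}{n^3}$; combining with the empirical LLN applied to the iid summands $(T_1(Z_i)-\mu_F)^2$ shows that the leading term equals $(1+o(1))\,n(s/n)^2\Var(T_1(Z_1))=(1+o(1))\Var(\proj{\hmu}_n(x))$ in probability, provided $\Var(T_1(Z_1))$ is bounded below in the relevant scaling (this is exactly what the incrementality bound of Theorem \ref{theo:pnn} delivers, up to the $\log(s)^d$ factor). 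The $\Delta_i$ residuals collect the higher-order ANOVA components of $T^*$ in the \citet{efron1981jackknife} decomposition; by orthogonality and symmetry one has $\EE[\Delta_i(Z)^2]\le\EE[R_i(Z)^2]-\EE[\EE(R_i\mid Z_i)]^2$, and the total contribution $\sum_i\EE[\Delta_i(Z)^2]$ is controlled by the analogue of Lemma \ref{lemm:hajek} for the randomness in choosing which training indices appear alongside $Z_i$---this is the step that carries the factor $(s/n)^2\Var(T)$, which is negligible relative to $\Var(\proj{\hmu}_n(x))$ precisely under the incrementality hypothesis. A Cauchy--Schwarz bound then handles the cross term.

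Finally, to upgrade the expectation-level statement to convergence in probability, I would bound the variance of $\sum_i R_i(Z)^2$ using the assumed uniform $(2+\delta)$-moment on $Y$ together with bounded leaves, which gives uniform bounds on $|R_i|$, and hence show that the empirical sum concentrates. Combining with $\Var(\proj{\hmu}_n(x))/\sigma_n^2(x)\to 1$ (already established en route to Theorem \ref{theo:gauss} via Lemma \ref{lemm:hajek} and $\nu$-incrementality) yields \eqref{eq:ij_consistent}. The main obstacle is bookkeeping for the higher-order ANOVA contribution to $\sum_i\Delta_i(Z)^2$: one must argue that, although each $\Delta_i$ could a priori be of the same order as $R_i$, their \emph{sum of squares} is dominated by the first-order piece, which is essentially a quantitative form of the 1-incrementality that subsampling induces in the forest.
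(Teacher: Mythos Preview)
Your approach is essentially the paper's: rewrite the covariance summands, split off the first-order (H\'ajek) piece via the ANOVA decomposition, show the first-order piece concentrates around $\sigma_n^2$ by a law of large numbers, bound the higher-order residual in expectation by $(s/n)^2\Var T$ (the paper packages this as two technical lemmas), and finish with Cauchy--Schwarz on the cross term plus Markov on the residual. The one slip is your claim that ``bounded leaves'' plus the $(2+\delta)$-moment ``gives uniform bounds on $|R_i|$''---it does not, since $Y$ is unbounded; the paper instead uses a truncation argument (uniform integrability of $T_1(Z_1)^2$ from the $(2+\delta)$-moment) to get the weak law for the first-order sum, and you should do the same.
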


Finally, we end this section by motivating the finite sample correction $n (n - 1)/(n - s)^2$ appearing in \eqref{eq:hvij_plain} by considering the simple case where we have trivial trees that do not make any splits: $T(x; \, \xi, \, Z_{i_1}, \, ..., \, Z_{i_s}) = s^{-1} \sum_{j = 1}^s Y_{i_j}$. In this case, we can verify that the full random forest is nothing but $\hmu = n^{-1} \sum_{i = 1}^n Y_i$, and the standard variance estimator
$$ \hV_{simple} = \frac{1}{n \, \p{n - 1}} \sum_{i = 1}^n \p{Y_i - \bY}^2, \ \ \bY = \frac{1}{n} \sum_{i = 1}^n Y_i $$
is well-known to be unbiased for $\smash{\Var{\hmu}}$. We show below that, for trivial trees \smash{$\hVIJ = \hV_{simple}$},  implying that our correction makes \smash{$\hVIJ$} exactly unbiased in finite samples for trivial trees.
Of course, $n (n - 1)/(n - s)^2 \rightarrow 1$, and so Theorem \ref{theo:ij} would hold even without this finite-sample correction; however, we find it to substantially improve the performance of our method in practice.

\begin{prop}
\label{prop:finite_sample}
For trivial trees \smash{$T(x; \, \xi, \, Z_{i_1}, \, ..., \, Z_{i_s}) = s^{-1} \sum_{j = 1}^s Y_{i_j}$}, the variance estimate \smash{$\hVIJ$} \eqref{eq:hvij_plain} is equivalent to the standard variance estimator $\hV_{simple}$, and \smash{$\mathbb{E}[\hVIJ] = \Var{\hmu}$}.
\end{prop}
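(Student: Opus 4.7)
The result is a direct computation exploiting the simplicity of trivial trees, so the strategy is just to crank through the covariance formula and verify that the finite-sample correction $n(n-1)/(n-s)^2$ is precisely what is needed to reduce $\hVIJ$ to $\hV_{simple}$.

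The plan has three steps. First I would verify that, for trivial trees, the full forest collapses to the sample mean: by Definition~\ref{defi:rf} and symmetry each $Y_i$ enters $\hmu$ with weight $(s/n)\cdot(1/s) = 1/n$, so $\hmu = \bY$. Second, I would compute $\Cov_*[T^*_b, N^*_{ib}]$ directly. Writing $T^*_b = s^{-1}\sum_j Y_j \mathbb{1}\{j\in S_b\}$ and $N^*_{ib} = \mathbb{1}\{i \in S_b\}$, and using the subsampling-without-replacement identities $\PP{i \in S_b} = s/n$ and $\PP{i,j \in S_b} = s(s-1)/(n(n-1))$ for $j \neq i$, a short calculation yields
\begin{equation*}
\Cov_*[T^*_b,\, N^*_{ib}] \;=\; \frac{Y_i}{n} + \frac{s-1}{n(n-1)}\bigl(n\bY - Y_i\bigr) - \frac{s}{n}\bY \;=\; \frac{n-s}{n(n-1)}\,\bigl(Y_i - \bY\bigr).
\end{equation*}
Third, I would plug this into \eqref{eq:hvij_plain}: the $[(n-s)/(n(n-1))]^2$ factor from the squared covariance multiplies the prefactor $(n-1) n /(n-s)^2$ to leave exactly $1/(n(n-1))$, so
\begin{equation*}
\hVIJ \;=\; \frac{n-1}{n}\left(\frac{n}{n-s}\right)^{\!2} \sum_{i=1}^n \left(\frac{n-s}{n(n-1)}\right)^{\!2}(Y_i - \bY)^2 \;=\; \frac{1}{n(n-1)}\sum_{i=1}^n (Y_i - \bY)^2 \;=\; \hV_{simple}.
\end{equation*}
Finally, unbiasedness follows immediately since the classical sample-variance identity gives $\EE{\hV_{simple}} = \Var{Y}/n = \Var{\bY} = \Var{\hmu}$ under the i.i.d.\ assumption.

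The only mildly delicate step is the algebraic simplification of the covariance, where the coefficients of $Y_i$ and $\bY$ must both work out to $\pm(n-s)/(n(n-1))$; I would verify this by separately collecting the $Y_i$-coefficient $1/n - (s-1)/(n(n-1)) = (n-s)/(n(n-1))$ and the $\bY$-coefficient $(s-1)/(n-1) - s/n = -(n-s)/(n(n-1))$. No deeper obstacle arises: everything is elementary once the pairwise inclusion probabilities are used correctly.
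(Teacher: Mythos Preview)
Your proposal is correct and follows essentially the same approach as the paper: both proofs compute $\Cov_*[\hmu^*_b, N^*_{ib}]$ directly from the inclusion probabilities for subsampling without replacement, arrive at the same expression $\frac{n-s}{n(n-1)}(Y_i - \bY)$, and then plug into \eqref{eq:hvij_plain} to recover $\hV_{simple}$. Your version adds the explicit verification that $\hmu = \bY$ and spells out the unbiasedness step, which the paper leaves implicit, but the core computation is identical.
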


\section{Inferring Heterogeneous Treatment Effects}
\label{sec:treat_eff}

We now return to our main topic, namely estimating heterogeneous treatment effects using random forests in the potential outcomes framework with unconfoundedness, and adapt our asymptotic theory for regression forests to the setting of causal inference.
Here, we again work with training data consisting of tuples $Z_i = \p{X_i, \, Y_i, \, W_i}$ for $i = 1, \, ..., \, n$, where $X_i$ is a feature vector,  $Y_i$ is the response, and $W_i$ is the treatment assignment. Our goal is to estimate the conditional average treatment effect $\tau(x) = \EE{Y^{(1)} - Y^{(0)} \cond X = x}$ at a pre-specified test point $x$. By analogy to Definition \ref{defi:rf}, we build our causal forest $\cf$ by averaging estimates for $\tau$ obtained by training causal trees $\Gamma$ over subsamples:
\begin{equation}
\label{eq:rf_cmp}
\cf\p{x; \, Z_1, \, ..., \, Z_n} = \binom{n}{s}^{-1} \sum_{1 \leq i_1 < i_2 < ... < i_s \leq n} \EE[\xi \sim \Xi]{\Gamma\p{x; \, \xi, \, Z_{i_1}, \, ..., \, Z_{i_s}}}.
\end{equation}
We seek an analogue to Theorem \ref{theo:intro} for such causal forests.

Most of the definitions used to state Theorem \ref{theo:intro} apply directly to this context; however, the notions of honesty and regularity need to be adapted slightly. Specifically, an honest causal tree is not allowed to look at the responses $Y_i$ when making splits but can look at the treatment assignments $W_i$. Meanwhile, a regular causal tree must have at least $k$ examples from both treatment classes in each leaf; in other words, regular causal trees seek to act as fully grown trees for the rare treatment assignment, while allowing for more instances of the common treatment assignment.

\setcounterref{deficmp}{defi:honest}
\addtocounter{deficmp}{-1}

\begin{deficmp}
\label{defi:honest_cmp}
\sloppy{
A causal tree grown on a training sample \smash{$(Z_1 = \p{X_1, \, Y_1, \, W_1}$}, ..., 
\smash{$Z_s = \p{X_s, \, Y_s, \, W_s})$} is \emph{honest} if
(a) (\emph{standard case}) the tree does not use the responses $Y_1, \, ..., \, Y_s$ in choosing where to place its splits; or
(b) (\emph{double sample case}) the tree does not use the $\ii$-sample responses for placing splits.
}
\end{deficmp}

\setcounterref{deficmp}{defi:regular}
\addtocounter{deficmp}{-1}

\begin{deficmp}
\label{defi:regular_cmp}
A causal tree grown by recursive partitioning is \emph{$\alpha$-regular}  at $x$ for some $\alpha > 0$ if either:  (a) (\emph{standard case}) (1) Each split leaves at least a fraction $\alpha$ of the available training examples on each side of the split, (2) The leaf containing $x$ has at least $k$ observations from each treatment group ($W_i \in \cb{0, \, 1}$) for some $k \in \NN$, and (3) The leaf containing $x$ has either less than $2k - 1$ observations with $W_i = 0$ or $2k - 1$ observations with $W_i = 1$; or (b) 
(\emph{double-sample case}) for a double-sample tree as defined in Procedure 1, (a) holds for the $\ii$ sample.
\end{deficmp}

Given these assumptions, we show a close analogue to Theorem \ref{theo:intro}, given below.
The main difference relative to our first result about regression forests is that we now rely on unconfoundedness and overlap to achieve consistent estimation of $\tau(x)$.
To see how these assumptions enter the proof, recall that an honest causal tree uses the features $X_i$ and the treatment assignments $W_i$ in choosing where to place its splits, but not the responses $Y_i$.
Writing $\ii^{(1)}(x)$ and $\ii^{(0)}(x)$ for the indices of the treatment and control units in the leaf around $x$, we then find that after the splitting stage
\begin{align}
\label{eq:unconf_use}
&\EE{\Gamma\p{x} \cond X, \, W} \\
\notag
&\ \ \ \ \ \ = \frac{\sum_{\cb{i \in \ii^{(1)}(x)}} \EE{Y^{(1)} \cond X = X_i, \, W = 1}}{\abs {\ii^{(1)}(x)}}  - \frac{\sum_{\cb{i \in \ii^{(0)}(x)}} \EE{Y^{(0)} \cond X = X_i, \, W = 0}}{\abs {\ii^{(0)}(x)}} \\
\notag
&\ \ \ \ \ \ = \frac{\sum_{\cb{i \in \ii^{(1)}(x)}} \EE{Y^{(1)} \cond X = X_i}}{\abs {\ii^{(1)}(x)}}  - \frac{\sum_{\cb{i \in \ii^{(0)}(x)}} \EE{Y^{(0)} \cond X = X_i}}{\abs {\ii^{(0)}(x)}},
\end{align}
where the second equality follows by unconfoundedness \eqref{eq:unconf}.
Thus, it suffices to show that the two above terms are consistent for estimating \smash{$\EE{Y^{(0)} \cond X = x}$} and \smash{$\EE{Y^{(1)} \cond X = x}$}.
To do so, we can essentially emulate the argument leading to Theorem \ref{theo:intro}, provided we can establish an analogue to Lemma \ref{lemm:diameter} and give a fast enough decaying upper bound to the diameter of $L(x)$; this is where we need the overlap assumption.
A proof of Theorem \ref{theo:cmp_forest} is given in the appendix.

\begin{theo}
\label{theo:cmp_forest}
Suppose that we have $n$ independent and identically distributed training examples $Z_i = \p{X_i, \, Y_i, \, W_i} \in [0, \, 1]^d \times \RR \times \cb{0, \, 1}$. Suppose, moreover, that the treatment assignment is unconfounded \eqref{eq:unconf} and has overlap \eqref{eq:overlap}. Finally, suppose that both potential outcome distributions \smash{$(X_i, \, Y_i^{(0)})$} and \smash{$(X_i, \, Y_i^{(1)})$} satisfy the same regularity assumptions as the pair \smash{$(X_i, \, Y_i)$} did in the statement of Theorem \ref{theo:intro}.
Given this data-generating process, let $\Gamma$ be an honest, $\alpha$-regular with $\alpha \leq 0.2$, and symmetric random-split causal forest in the sense of Definitions \ref{defi:honest_cmp}, \ref{defi:randomsplit}, \ref{defi:regular_cmp}, and \ref{defi:symmetric}, and let $\htau(x)$ be the estimate for $\tau(x)$ given by a causal forest with base learner $\Gamma$ and a subsample size $s_n$ scaling as in \eqref{eq:intro_cond}.
Then, the predictions $\htau(x)$ are consistent and asymptotically both Gaussian and centered, and the variance of the causal forest can be consistently estimated using the infinitesimal jackknife for random forests, i.e., \eqref{eq:causal_gauss} holds.
\end{theo}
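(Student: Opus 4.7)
My plan is to reduce the causal-forest analysis to the regression-forest machinery of Theorems \ref{theo:intro}, \ref{theo:bias}, \ref{theo:pnn}, \ref{theo:gauss}, and \ref{theo:ij}, by viewing the causal tree output in a given leaf as the difference of two honest regression-tree averages -- one on the treated observations and one on the controls. The key starting point is the decomposition \eqref{eq:unconf_use}: honesty lets us condition on $(X, W)$ when taking expectations, and unconfoundedness then collapses $\EE{\Gamma(x) \cond X, \, W}$ to a difference of within-leaf averages of the Lipschitz conditional means $m^{(w)}(x) := \EE{Y^{(w)} \cond X = x}$. This turns bias analysis into a leaf-diameter problem, and variance analysis into two parallel PNN problems.

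First I would establish a causal-tree analogue of Lemma \ref{lemm:diameter}. By Definition \ref{defi:regular_cmp} each leaf contains at least $k$ observations from each treatment arm, and by overlap \eqref{eq:overlap} the conditional density $f_{X \mid W = w}$ is sandwiched between constant multiples of the marginal density of $X$. Running the random-split, $\alpha$-regular diameter argument of Meinshausen separately on each treatment arm -- each of which has effective sample size of order $s$ by overlap -- yields the same polynomial decay of $\diam_j(L(x))$ as in Lemma \ref{lemm:diameter}, with constants degraded only by powers of $\varepsilon^{-1}$. Combined with Lipschitz continuity of $m^{(0)}$ and $m^{(1)}$ applied inside \eqref{eq:unconf_use}, this gives a bias bound of exactly the form of Theorem \ref{theo:bias}.

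Second, for asymptotic Gaussianity, a causal tree output can be written as $\sum_{i} S_i^{(1)} Y_i - \sum_{i} S_i^{(0)} Y_i$, where $S_i^{(w)}$ selects the leaf members with $W_i = w$ and normalizes by the arm-specific leaf size. Each piece is an honest, symmetric $k$-PNN predictor on the subsample $\cb{i : W_i = w}$, whose feature distribution has a density bounded away from $0$ and $\infty$ by overlap. Applying Lemma \ref{lemm:pnn} and Theorem \ref{theo:pnn} separately to each arm yields $s \, \Var{\EE{S_1^{(w)} \cond Z_1}} \gtrsim C / \log(s)^d$; because the two pieces enter with opposite signs they do not mutually cancel in the H\'ajek projection, and using $\Var{Y^{(w)} \cond X = x} > 0$ (inherited from the regularity assumed on each potential-outcome pair) this gives $\nu(s)$-incrementality of $\Gamma$ with $\nu(s) \asymp 1/\log(s)^d$. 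Lemma \ref{lemm:hajek} and the Lyapunov step behind Theorem \ref{theo:gauss} then apply verbatim to produce asymptotic Gaussianity of $\htau(x) - \EE{\htau(x)}$. The infinitesimal-jackknife argument in Theorem \ref{theo:ij} only uses generic variance and incrementality of the base learner, so it transfers directly to $\hVIJ$ in \eqref{eq:hvij}, and the subsample scaling \eqref{eq:intro_cond} ensures that the bias from Step 1 decays faster than $\sigma_n(x)$, so the Gaussian limit is centered at $\tau(x)$.

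The main obstacle will be Step 1: an $\alpha$-regular causal tree only bounds the fraction of within-arm observations on each side of a split, not of the overall leaf, so one must argue that random splitting still shrinks every coordinate of the leaf uniformly across both arms. Overlap is exactly what lets the single-sample diameter argument transfer to each arm while losing only multiplicative constants; without it the effective sample size in one arm could vanish, the bias rate would break down before the variance rate, and centering in \eqref{eq:causal_gauss} would fail. Everything downstream -- CLT, jackknife consistency, and the finite-sample correction -- is essentially bookkeeping once this causal-tree diameter bound is in hand.
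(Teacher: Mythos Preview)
Your overall strategy coincides with the paper's: establish causal-tree analogues of the bias bound (Theorem \ref{theo:bias}) and the incrementality bound (Theorem \ref{theo:pnn}), after which the subsampling machinery of Lemmas \ref{lemm:hajek}, Theorem \ref{theo:gauss}, and Theorem \ref{theo:ij} carries over verbatim. However, there is a genuine gap in your incrementality step. You propose to write $\Gamma = T^{(1)} - T^{(0)}$ and apply Lemma \ref{lemm:pnn} ``separately to each arm,'' treating each $T^{(w)}$ as an honest $k$-PNN predictor on the subsample $\{i : W_i = w\}$. This fails because Definition \ref{defi:regular_cmp} only guarantees that \emph{one} of the two arms has at most $2k-1$ observations in the leaf; the majority arm can contain arbitrarily many, so $T^{(w)}$ for the majority $w$ is not a $k$-PNN predictor and Lemma \ref{lemm:pnn} does not apply to it. The paper's fix is to introduce the (random) minority class $w_{\min}$ in the leaf, note that $|S_1|\,1\{W_1 = w_{\min}\}$ is controlled by a $k$-PNN indicator on the minority-arm subsample, and then use monotonicity $\EE{\EE{|S_1| \cond Z_1}^2} \geq \EE{\EE{|S_1|\,1\{W_1 = w_{\min}\} \cond Z_1}^2}$ to recover the incrementality lower bound with an extra factor of $\varepsilon$ from overlap. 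Your ``opposite signs do not cancel'' remark is not the real issue here; the real issue is getting a valid PNN structure on at least one arm.

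A smaller correction: you write that ``an $\alpha$-regular causal tree only bounds the fraction of within-arm observations on each side of a split.'' Definition \ref{defi:regular_cmp} actually bounds the fraction of \emph{all} training examples on each side; the arm-specific constraint is only on the terminal leaf counts. This matters for the diameter argument: the depth lower bound in Lemma \ref{lemm:diameter} comes from combining $\alpha$-regularity (on the full sample) with an upper bound on leaf size, and for causal trees the only leaf-size cap available is $2k-1$ on the minority arm. The paper therefore runs the depth argument with $s$ replaced by $s_{\min} \gtrsim \varepsilon s$, the size of the smaller treatment class in the full subsample, rather than ``separately on each arm'' as you suggest. So your diagnosis of where overlap enters is right, but the mechanics are slightly different, and the harder step is incrementality, not the diameter bound.
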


\begin{rema}(testing at many points)
\label{rema:regular_cmp}
We note that it is not in general possible to construct causal trees that are regular in the sense of Definition \ref{defi:regular_cmp} for all $x$ simultaneously. As a simple example, consider the situation where $d = 1$, and $W_i = 1\p{\cb{X_i \geq 0}}$; then, the tree can have at most 1 leaf for which it is regular. In the proof of Theorem \ref{theo:cmp_forest}, we avoided this issue by only considering a single test point $x$, as it is always possible to build a tree that is regular at a single given point $x$.
In practice, if we want to build a causal tree that can be used to predict at many test points, we may need to assign different trees to be valid for different test points. Then, when predicting at a specific $x$, we treat the set of trees that were assigned to be valid at that $x$ as the relevant forest and apply Theorem \ref{theo:cmp_forest} to it.
\end{rema}

\section{Simulation Experiments}
\label{sec:simu}

In observational studies, accurate estimation of heterogeneous treatment effects requires overcoming two potential sources of bias. First, we need to identify neighborhoods over which the actual treatment effect $\tau(x)$ is reasonably stable and, second, we need to make sure that we are not biased by varying sampling propensities $e(x)$. The simulations here aim to test the ability of causal forests to respond to both of these factors.

Since causal forests are adaptive nearest neighbor estimators, it is natural to use a non-adaptive nearest neighborhood method as our baseline. We compare our method to the standard $k$ nearest neighbors ($k$-NN) matching procedure, which estimates the treatment effect as
\begin{equation}
\label{eq:knn}
\htau_{KNN}(x) = \frac{1}{k} \sum_{i \in \set_1\p{x}} Y_i - \frac{1}{k} \sum_{i \in \set_0\p{x}} Y_i,
\end{equation}
where $\set_1$ and $\set_0$ are the $k$ nearest neighbors to $x$ in the treatment ($W = 1$) and control ($W = 0$) samples respectively. We generate confidence intervals for the $k$-NN method by modeling $\htau_{KNN}(x)$ as Gaussian with mean $\tau(x)$ and variance \smash{$(\hV\p{\set_0} + \hV\p{\set_1}) / (k \, (k - 1))$}, where \smash{$\hV\p{\set_{0/1}}$} is the sample variance for $\set_{0/1}$.

The goal of this simulation study is to verify that forest-based methods
can be used build rigorous, asymptotically valid confidence intervals that improve over non-adaptive
methods like $k$-NN in finite samples. The fact that forest-based methods hold promise for treatment
effect estimation in terms of predictive error has already been conclusively
established elsewhere; for example, BART methods following
\citet{hill2011bayesian} won the recent Causal Inference Data Analysis Challenge at the 2016 Atlantic Causal
Inference Conference. We hope that the conceptual tools developed in this paper will prove to be helpful
in analyzing a wide variety of forest-based methods.

\subsection{Experimental Setup}

We describe our experiments in terms of the sample size $n$, the ambient dimension $d$, as well as the following functions:
\begin{align*}
&\text{main effect: } m(x) = 2^{-1} \, \EE{Y^{(0)} + Y^{(1)} \cond X = x}, \\
&\text{treatment effect: } \tau(x) = \EE{Y^{(1)} - Y^{(0)} \cond X = x}, \\
&\text{treatment propensity: } e(x) = \PP{W = 1 \cond X = x}.
\end{align*}
In all our examples, we respect unconfoundedness \eqref{eq:unconf},  use $X \sim U([0, \, 1]^d)$, and have homoscedastic noise \smash{$Y^{(0/1)} \sim \nn\p{\mathbb{E}[Y^{(0/1)} \cond X], \, 1}$}.
We evaluate performance in terms of expected mean-squared error for estimating $\tau(X)$ at a random test example $X$, as well as expected coverage of $\tau(X)$ with a target coverage rate of 0.95.

In our first experiment, we held the treatment effect fixed at $\tau(x)=0$, and tested the ability of our method to resist bias due to an interaction between $e(x)$ and $m(x)$.
This experiment is intended to emulate the problem that in observational studies, a treatment assignment is often correlated with potential outcomes, creating bias unless the statistical method accurately adjusts for covariates.  $k$-NN matching is a popular approach for performing this adjustment in practice.
Here, we set
\begin{equation}
\label{eq:prop_setup}
e(X) = \frac{1}{4} \p{1 + \beta_{2, \, 4}(X_1)}, \ \ m(X) = 2X_1 - 1,
\end{equation}
where $\beta_{a, \, b}$ is the $\beta$-density with shape parameters $a$ and $b$.
We used $n = 500$ samples and varied $d$ between 2 and 30. Since our goal is accurate propensity matching, we use propensity trees (Procedure \ref{alg:prop}) as our base learner; we grew $B = 1000$ trees with $s = 50$.

For our second experiment, we evaluated the ability of causal forests to adapt to heterogeneity in $\tau(x)$, while holding $m(x) = 0$ and $e(x) = 0.5$ fixed. Thanks to unconfoundedness, the fact that $e(x)$ is constant means that we are in a randomized experiment. We set $\tau$ to be a smooth function supported on the first two features:
\begin{equation}
\label{eq:tau0_setup}
\tau\p{X} = \varsigma\p{X_1} \, \varsigma\p{X_2}, \ \ \ \varsigma\p{x} = 1 + \frac{1}{1 + e^{-20\p{x - 1/3}}}.
\end{equation}
We took $n = 5000$ samples, while varying the ambient dimension $d$ from 2 to 8. For causal forests, we used double-sample trees with the splitting rule of \citet{athey2015machine} as our base learner (Procedure \ref{alg:split}). We used $s = 2500$ (i.e., $\abs{\ii} = 1250$) and grew $B = 2000$ trees.

One weakness of nearest neighbor approaches in general, and random forests in particular, is that they can fill the valleys and flatten the peaks of the true $\tau(x)$ function, especially near the edge of feature space.
We demonstrate this effect using an example similar to the one studied above, except now $\tau(x)$ has a sharper spike in the $x_1, \, x_2 \approx 1$ region:
\begin{equation}
\label{eq:tau_setup}
\tau\p{X} = \varsigma\p{X_1} \, \varsigma\p{X_2}, \ \ \ \varsigma\p{x} = \frac{2}{1 + e^{-12 \p{x - 1/2}}}.
\end{equation}
We used the same training method as with \eqref{eq:tau0_setup}, except with $n = 10000$, $s = 2000$, and $B = 10000$.

We implemented our simulations in \texttt{R}, using the packages 
\texttt{causalTree} \citep{athey2015machine} for building individual trees,
\texttt{randomForestCI} \citep{wager2014confidence} for computing \smash{$\hVIJ$},
and \texttt{FNN} \citep{beygelzimer2013fnn} for $k$-NN regression.
All our trees had a minimum leaf size of $k = 1$.
Software replicating the above simulations is available from the
authors.\footnote{The \texttt{R} package \texttt{grf} \citep{athey2016generalized} provides a
newer, high-performance implementation of causal forests, available on \texttt{CRAN}.}

\subsection{Results}

\begin{table}[t]
\centering
\begin{tabular}{|r|ccc|ccc|}
\hline
 & \multicolumn{3}{c|}{mean-squared error} & \multicolumn{3}{c|}{coverage} \\
 \hline
$d$ & CF & 10-NN & 100-NN & CF & 10-NN & 100-NN \\ 
  \hline
  2 & 0.02 (0) & 0.21 (0) & 0.09 (0) & 0.95 (0) & 0.93 (0) & 0.62 (1) \\ 
  5 & 0.02 (0) & 0.24 (0) & 0.12 (0) & 0.94 (1) & 0.92 (0) & 0.52 (1) \\ 
  10 & 0.02 (0) & 0.28 (0) & 0.12 (0) & 0.94 (1) & 0.91 (0) & 0.51 (1) \\ 
  15 & 0.02 (0) & 0.31 (0) & 0.13 (0) & 0.91 (1) & 0.90 (0) & 0.48 (1) \\ 
  20 & 0.02 (0) & 0.32 (0) & 0.13 (0) & 0.88 (1) & 0.89 (0) & 0.49 (1) \\ 
  30 & 0.02 (0) & 0.33 (0) & 0.13 (0) & 0.85 (1) & 0.89 (0) & 0.48 (1) \\ 
   \hline
\end{tabular}
\caption{Comparison of the performance of a causal forests (CF) with that of the $k$-nearest neighbors ($k$-NN) estimator with $k = 10,\, 100$, on the setup \eqref{eq:prop_setup}. The numbers in parentheses indicate the (rounded) standard sampling error for the last printed digit, obtained by aggregating performance over 500 simulation replications.}
\label{tab:prop_simu}
\end{table}

In our first setup \eqref{eq:prop_setup}, causal forests present a striking improvement over $k$-NN matching; see Table \ref{tab:prop_simu}. Causal forests succeed in maintaining a mean-squared error of 0.02 as $d$ grows from 2 to 30, while 10-NN and 100-NN do an order of magnitude worse. We note that the noise of $k$-NN due to variance in $Y$ after conditioning on $X$ and $W$ is already $2/k$, implying that $k$-NN with $k \leq 100$ cannot hope to match the performance of causal forests. Here, however, 100-NN is overwhelmed by bias, even with $d = 2$.
Meanwhile, in terms of uncertainty quantification, our method achieves nominal coverage up to $d = 10$, after which the performance of the confidence intervals starts to decay.
The 10-NN method also achieves decent coverage; however, its confidence intervals are much wider than ours as evidenced by the mean-squared error.

\begin{figure}[t]
\centering
\begin{tabular}{ccc}
\includegraphics[width=0.3\textwidth]{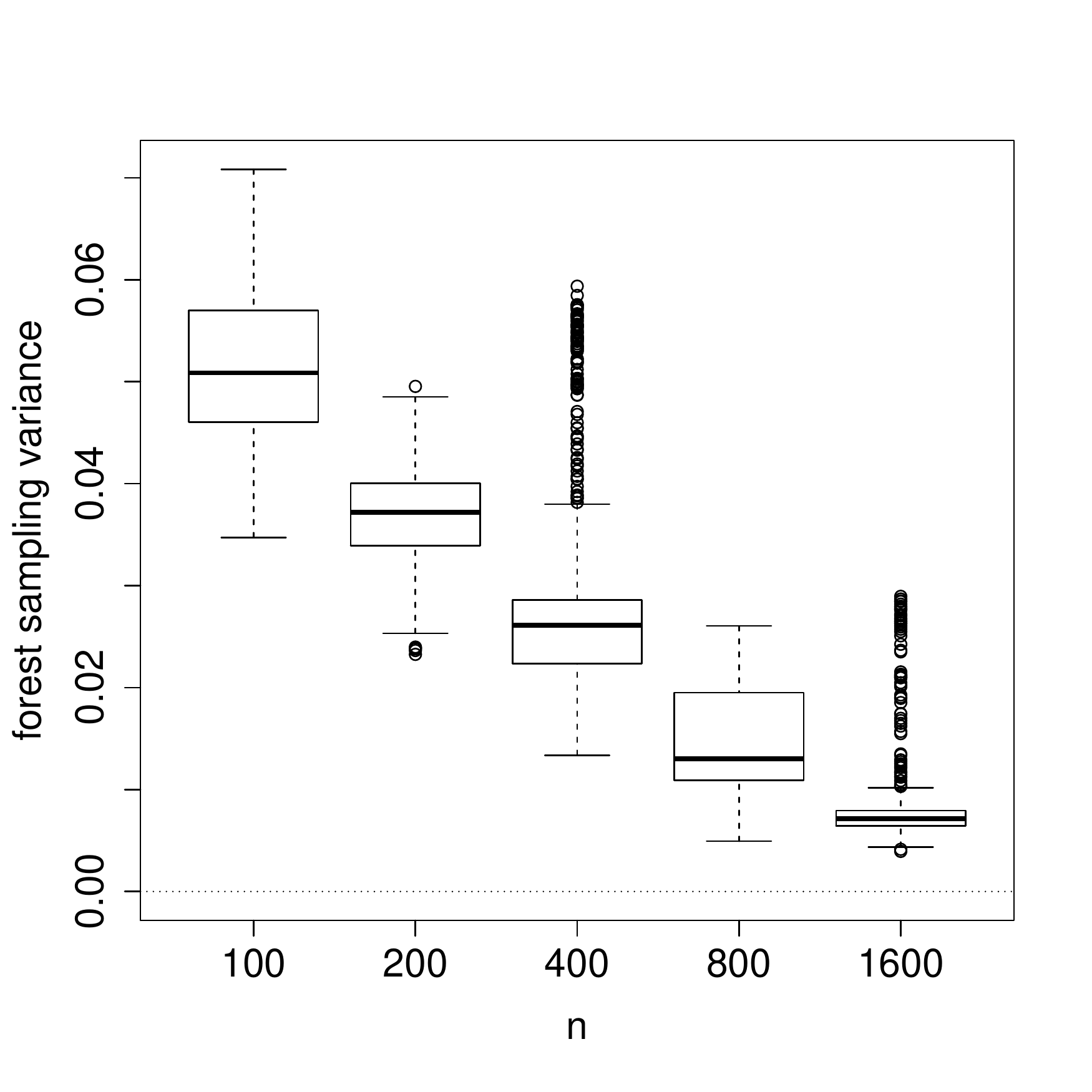} &
\includegraphics[width=0.3\textwidth]{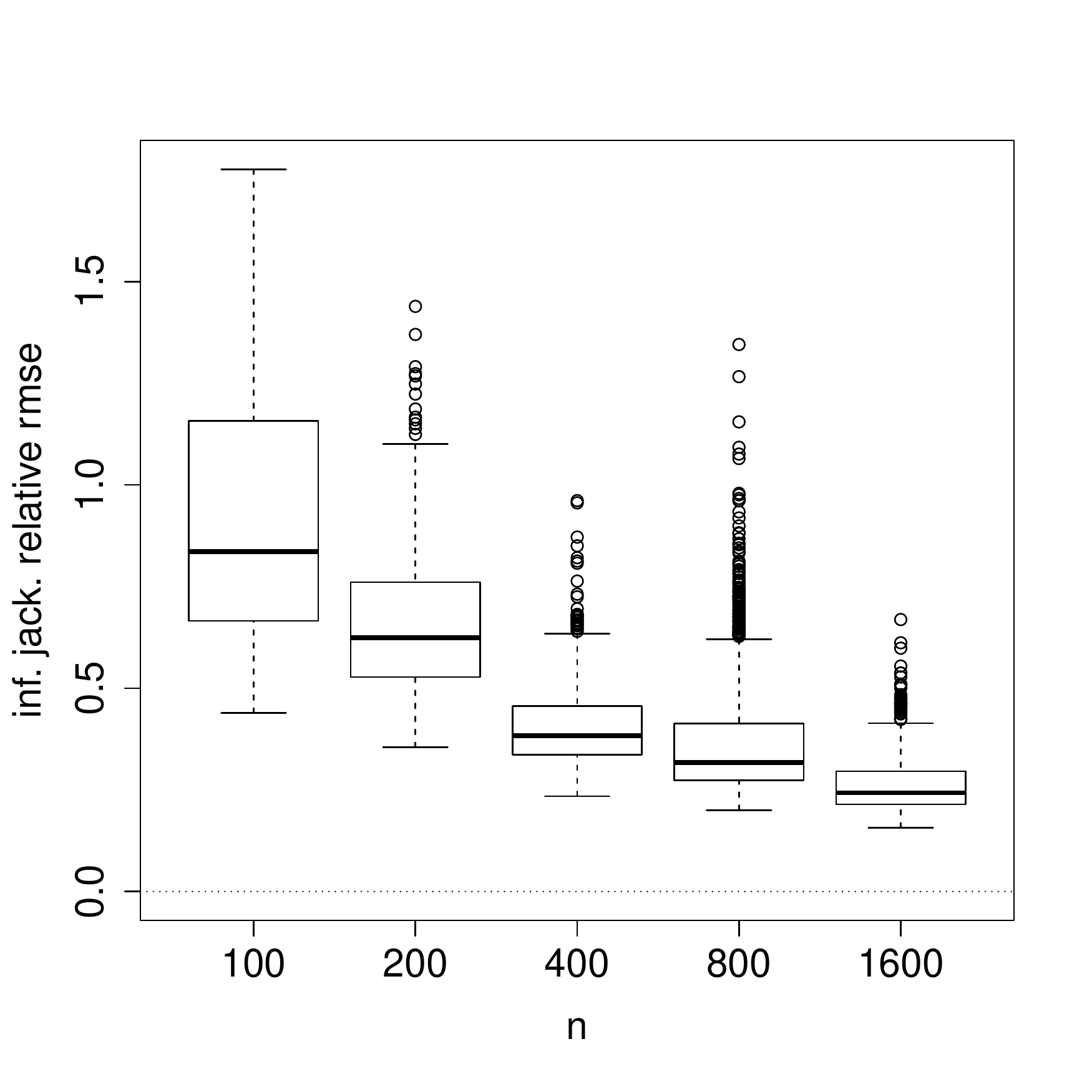} &
\includegraphics[width=0.3\textwidth]{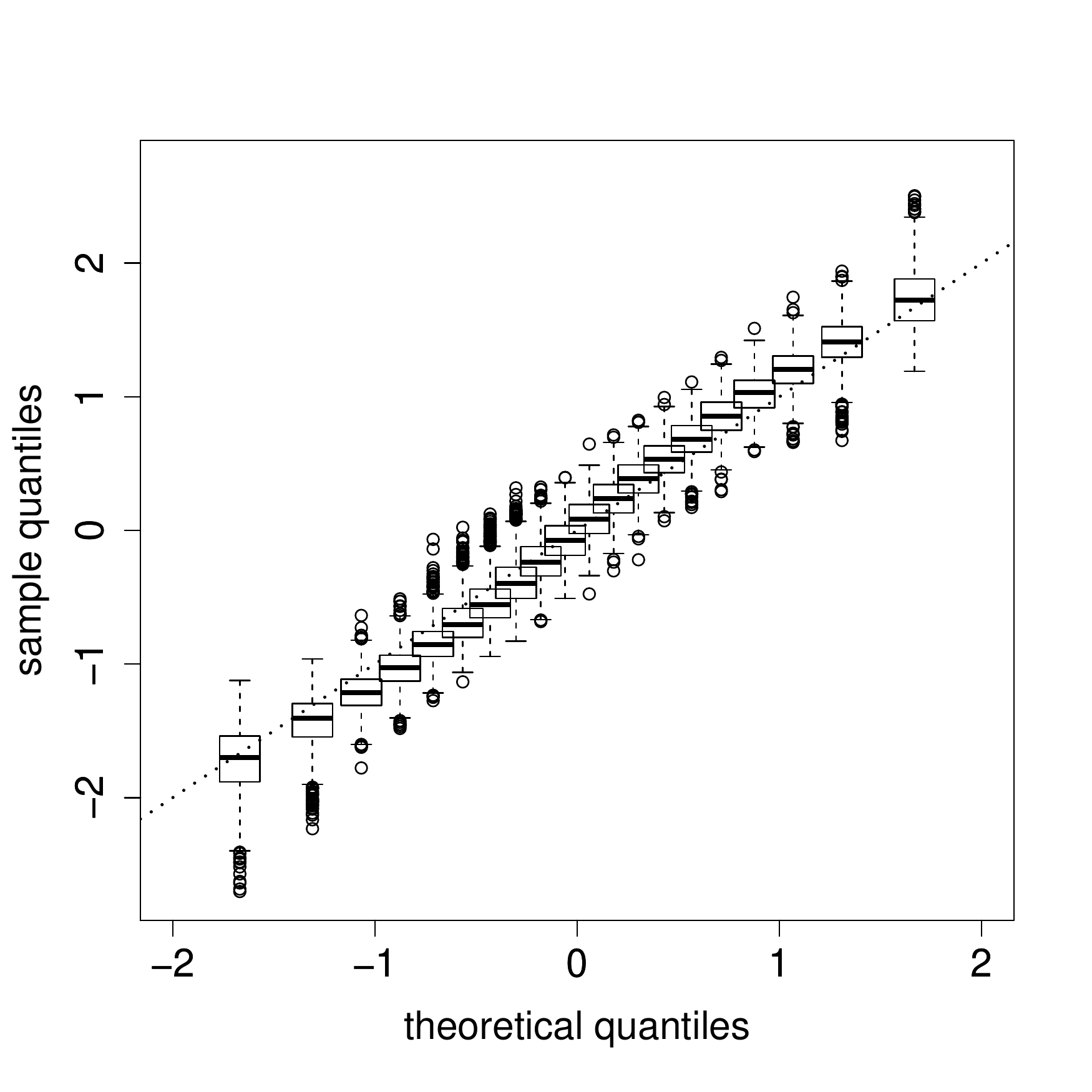} \\
Causal forest variance & Inf.~jack.~relative RMSE & Prediction QQ-plot
\end{tabular}
\caption{Graphical diagnostics for causal forests in the setting of \eqref{eq:prop_setup}.
The first two panels evaluate the sampling error of causal forests and our infinitesimal
jackknife estimate of variance over 1,000 randomly draw test points, with $d = 20$.
The right-most panel shows standardized Gaussian QQ-plots for predictions at the same 1000 test points,
with $n = 800$ and $d = 20$. The first two panels are computed over 50 randomly drawn training
sets, and the last one over 20 training sets.}
\label{fig:variance}
\end{figure}

Figure \ref{fig:variance} offers some graphical diagnostics for causal forests in the setting of \eqref{eq:prop_setup}.
In the left panel, we observe how the causal forest sampling variance $\sigma^2_n(x)$  goes to zero with $n$; while
the center panel depicts the decay of the relative root-mean squared error of the infinitesimal jackknife estimate of variance,
i.e., $\mathbb{E}[(\hsigma^2_n(x) - \sigma_n^2(x))^2]^{1/2} / \sigma_n^2(x)$.
The boxplots display aggregate results for 1,000 randomly sampled test points $x$.
Finally, the right-most panel evaluates the Gaussianity of the forest predictions. Here, we first drew 1,000 random test points $x$, and computed $\htau(x)$ using forests grown on many different training sets. The plot shows standardized Gaussian QQ-plots aggregated over all these $x$; i.e., for each $x$, we plot Gaussian theoretical quantiles against sample quantiles of $(\htau(x) - \mathbb{E}(\htau(x))) / \sqrt{\text{Var}[\htau(x)]}$. 

\begin{table}[t]
\centering
\begin{tabular}{|r|ccc|ccc|}
\hline
 & \multicolumn{3}{c|}{mean-squared error} & \multicolumn{3}{c|}{coverage} \\
 \hline
$d$ & CF & 7-NN & 50-NN & CF & 7-NN & 50-NN \\ 
  \hline
  2 & 0.04 (0) & 0.29 (0) & 0.04 (0) & 0.97 (0) & 0.93 (0) & 0.94 (0) \\ 
  3 & 0.03 (0) & 0.29 (0) & 0.05 (0) & 0.96 (0) & 0.93 (0) & 0.92 (0) \\ 
  4 & 0.03 (0) & 0.30 (0) & 0.08 (0) & 0.94 (0) & 0.93 (0) & 0.86 (1) \\ 
  5 & 0.03 (0) & 0.31 (0) & 0.11 (0) & 0.93 (1) & 0.92 (0) & 0.77 (1) \\ 
  6 & 0.02 (0) & 0.34 (0) & 0.15 (0) & 0.93 (1) & 0.91 (0) & 0.68 (1) \\ 
  8 & 0.03 (0) & 0.38 (0) & 0.21 (0) & 0.90 (1) & 0.90 (0) & 0.57 (1) \\ 
   \hline
\end{tabular}
\caption{Comparison of the performance of a causal forests (CF) with that of the $k$-nearest neighbors ($k$-NN) estimator with $k = 7,\, 50$, on the setup \eqref{eq:tau0_setup}. The numbers in parentheses indicate the (rounded) standard sampling error for the last printed digit, obtained by aggregating performance over 25 simulation replications.}
\label{tab:tau0_simu}
\end{table}

In our second setup \eqref{eq:tau0_setup}, causal forests present a similar improvement over $k$-NN matching when $d > 2$, as seen in Table \ref{tab:tau0_simu}.\footnote{When $d = 2$, we do not expect causal forests to have a particular advantage over $k$-NN since the true $\tau$ also has 2-dimensional support; our results mirror this, as causal forests appear to have comparable performance to 50-NN.}
Unexpectedly, we find that the performance of causal forests \emph{improves} with $d$, at least when $d$ is small.
To understand this phenomenon, we note that the variance of a forest depends on the product of the variance of individual trees times the correlation between different trees \citep{breiman2001random,hastie2009elements}. Apparently, when $d$ is larger, the individual trees have more flexibility in how to place their splits, thus reducing their correlation and decreasing the variance of the full ensemble.

\begin{table}[t]
\centering
\begin{tabular}{|r|ccc|ccc|}
\hline
 & \multicolumn{3}{c|}{mean-squared error} & \multicolumn{3}{c|}{coverage}\\
 \hline
$d$ & CF & 10-NN & 100-NN & CF & 10-NN & 100-NN \\ 
  \hline
  2 & 0.02 (0) & 0.20 (0) & 0.02 (0) & 0.94 (0) & 0.93 (0) & 0.94 (0) \\ 
  3 & 0.02 (0) & 0.20 (0) & 0.03 (0) & 0.90 (0) & 0.93 (0) & 0.90 (0) \\ 
  4 & 0.02 (0) & 0.21 (0) & 0.06 (0) & 0.84 (1) & 0.93 (0) & 0.78 (1) \\ 
  5 & 0.02 (0) & 0.22 (0) & 0.09 (0) & 0.81 (1) & 0.93 (0) & 0.67 (0) \\ 
  6 & 0.02 (0) & 0.24 (0) & 0.15 (0) & 0.79 (1) & 0.92 (0) & 0.58 (0) \\ 
  8 & 0.03 (0) & 0.29 (0) & 0.26 (0) & 0.73 (1) & 0.90 (0) & 0.45 (0) \\ 
   \hline
\end{tabular}
\caption{Comparison of the performance of a causal forests (CF) with that of the $k$-nearest neighbors ($k$-NN) estimator with $k = 10,\, 100$, on the setup \eqref{eq:tau_setup}. The numbers in parentheses indicate the (rounded) standard sampling error for the last printed digit, obtained by aggregating performance over 40 simulation replications.}
\label{tab:tau_simu}
\end{table}

\begin{figure}[t]
\centering
\begin{tabular}{cccc}
{\begin{sideways}\parbox{0.25\textwidth}{\centering $d = 6$}\end{sideways}} &
\includegraphics[width = 0.25\textwidth]{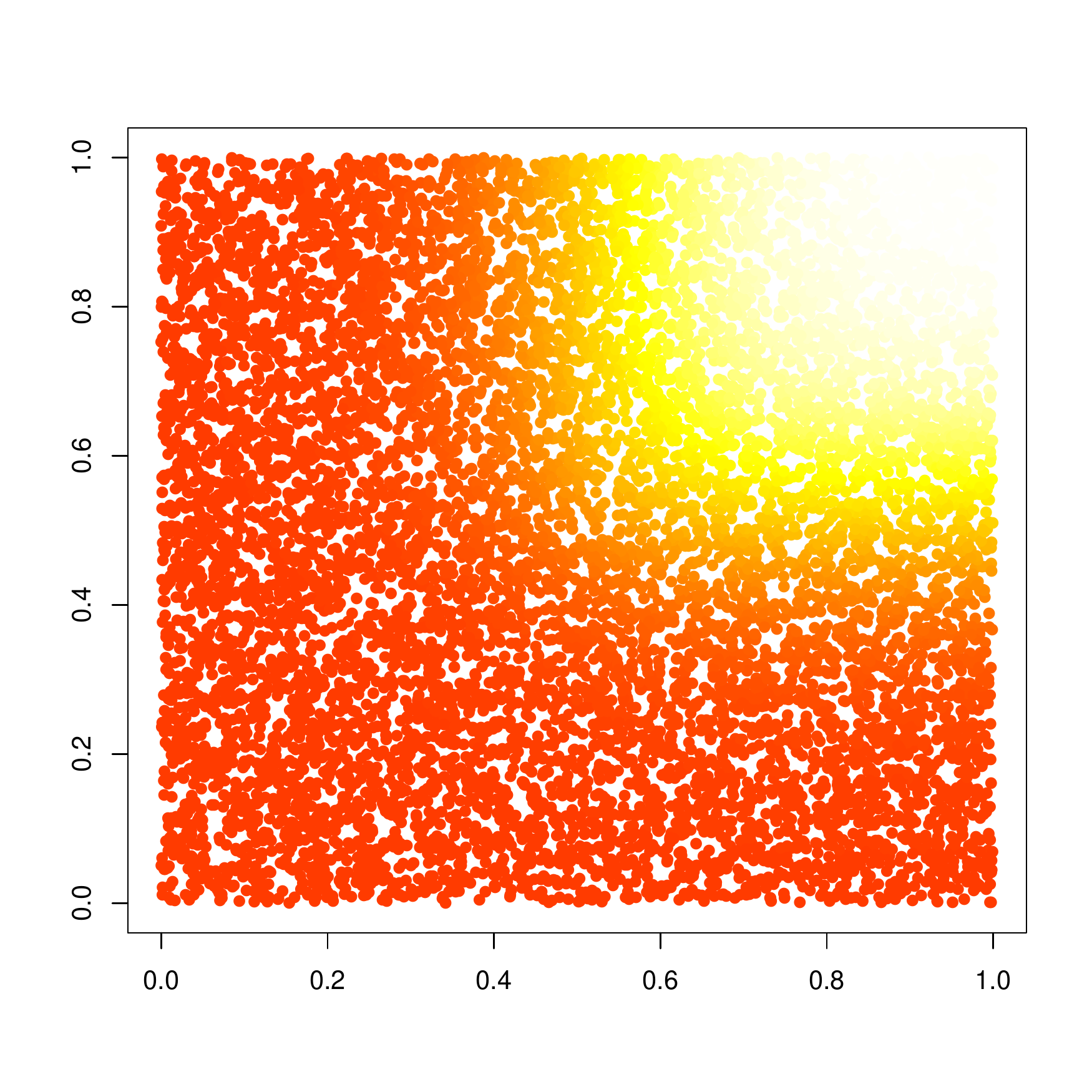} &
\includegraphics[width = 0.25\textwidth]{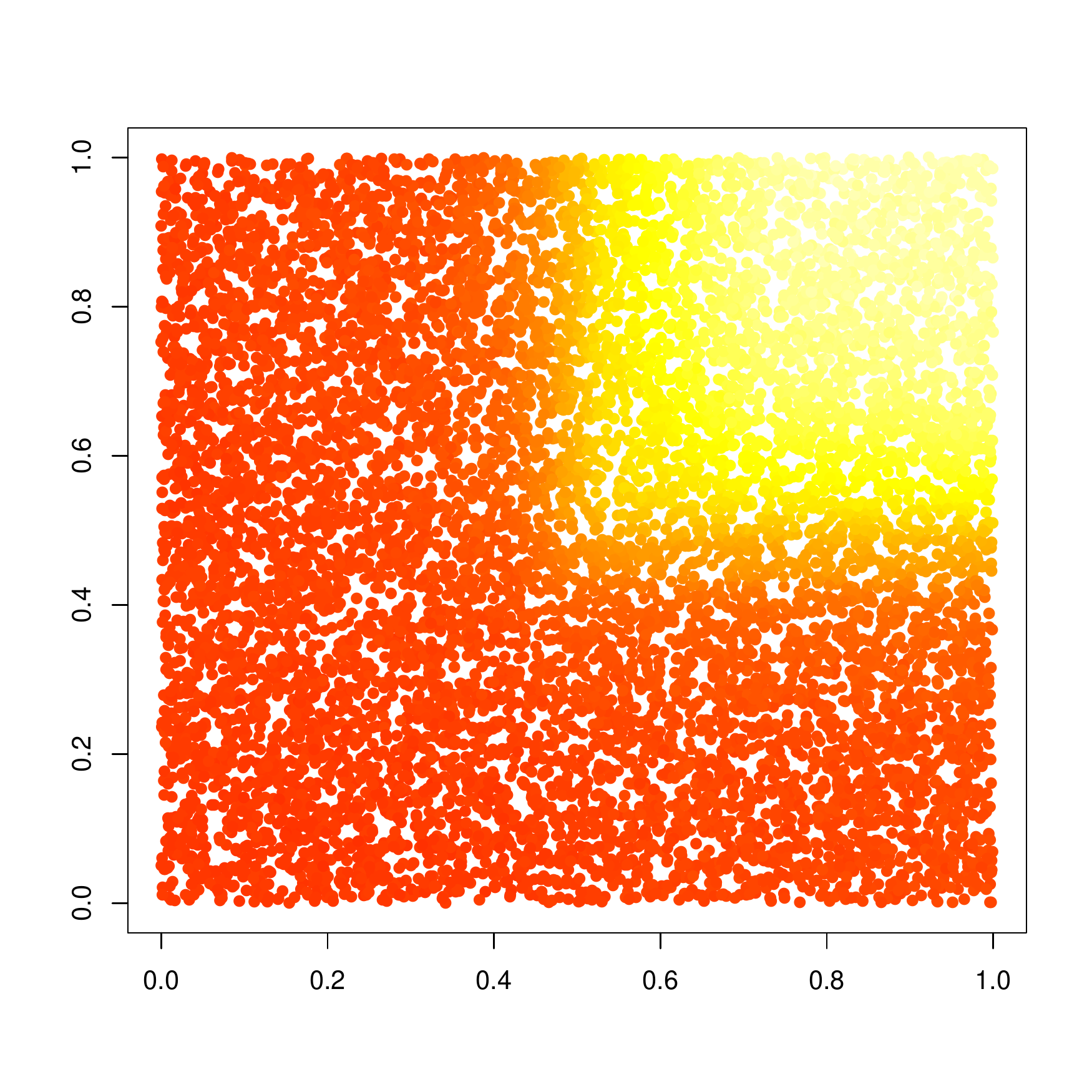} &
\includegraphics[width = 0.25\textwidth]{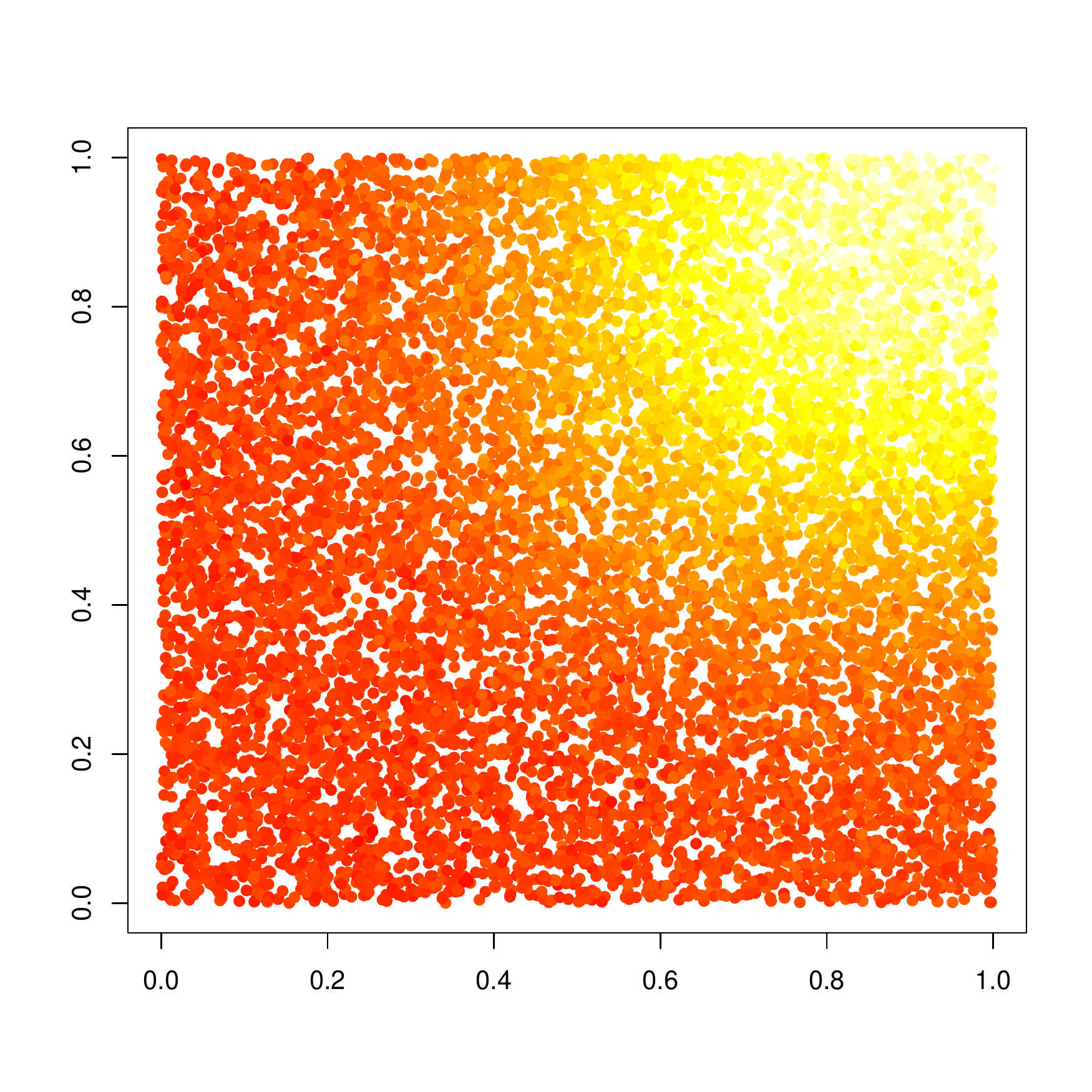}  \\
{\begin{sideways}\parbox{0.25\textwidth}{\centering $d = 20$}\end{sideways}} &
\includegraphics[width = 0.25\textwidth]{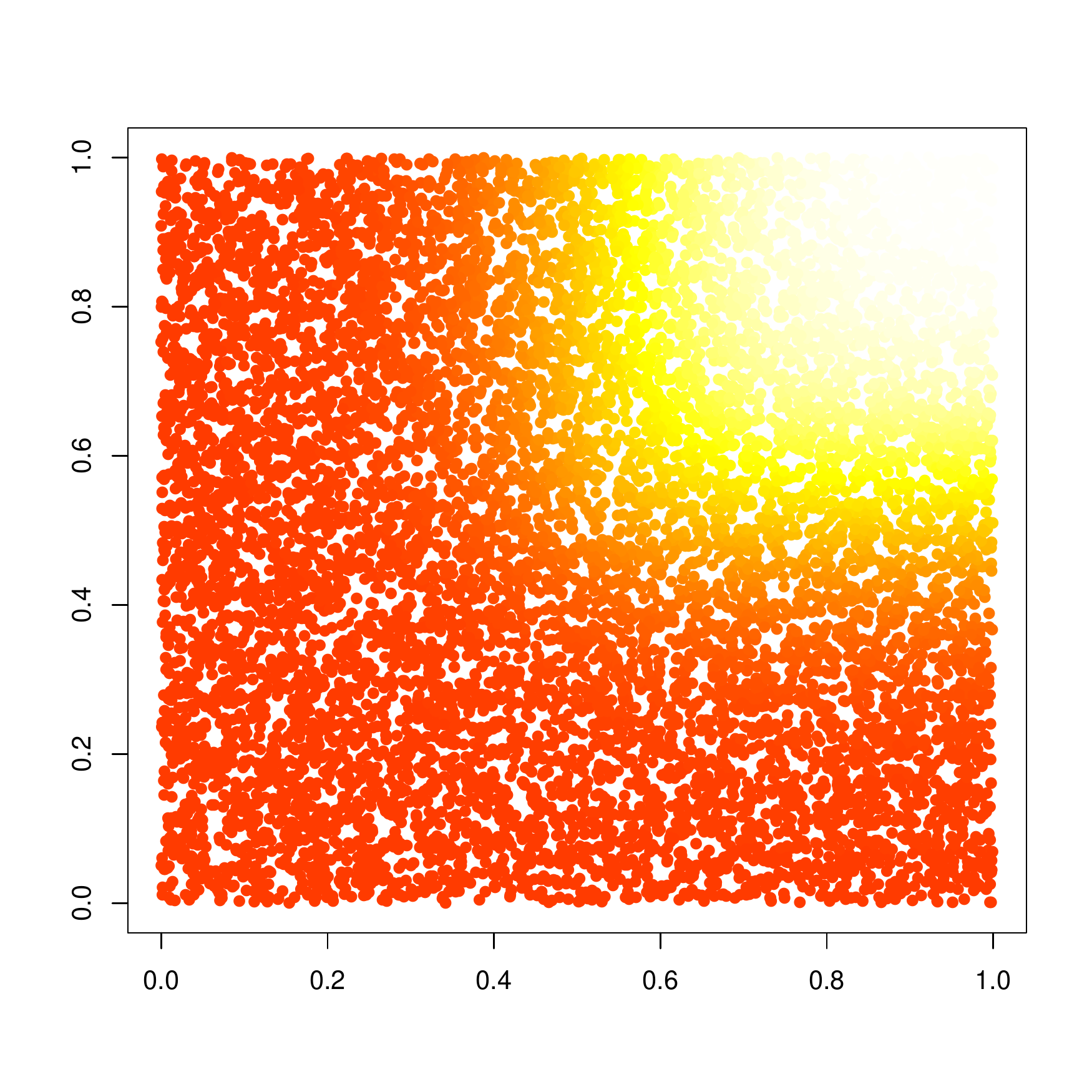} &
\includegraphics[width = 0.25\textwidth]{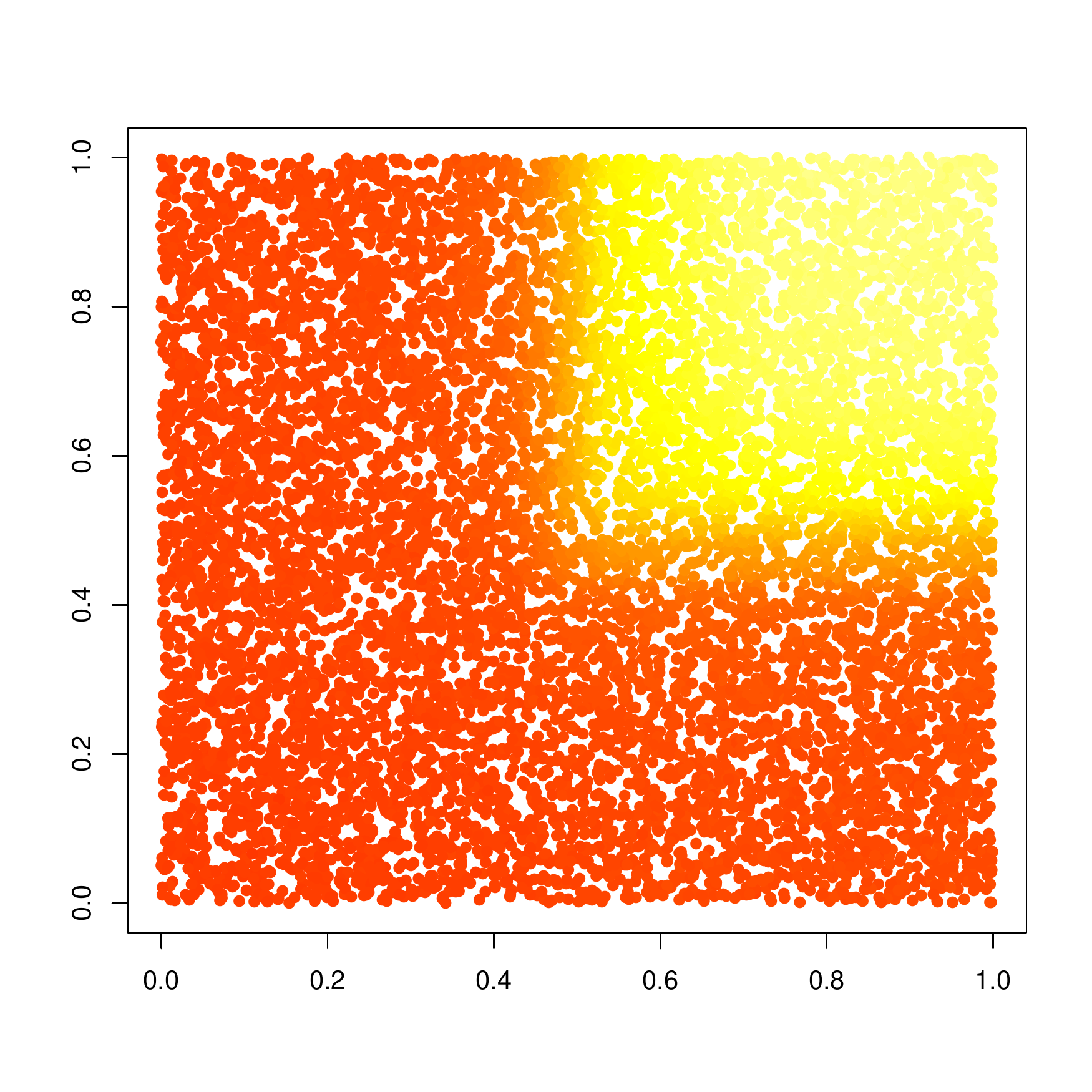} &
\includegraphics[width = 0.25\textwidth]{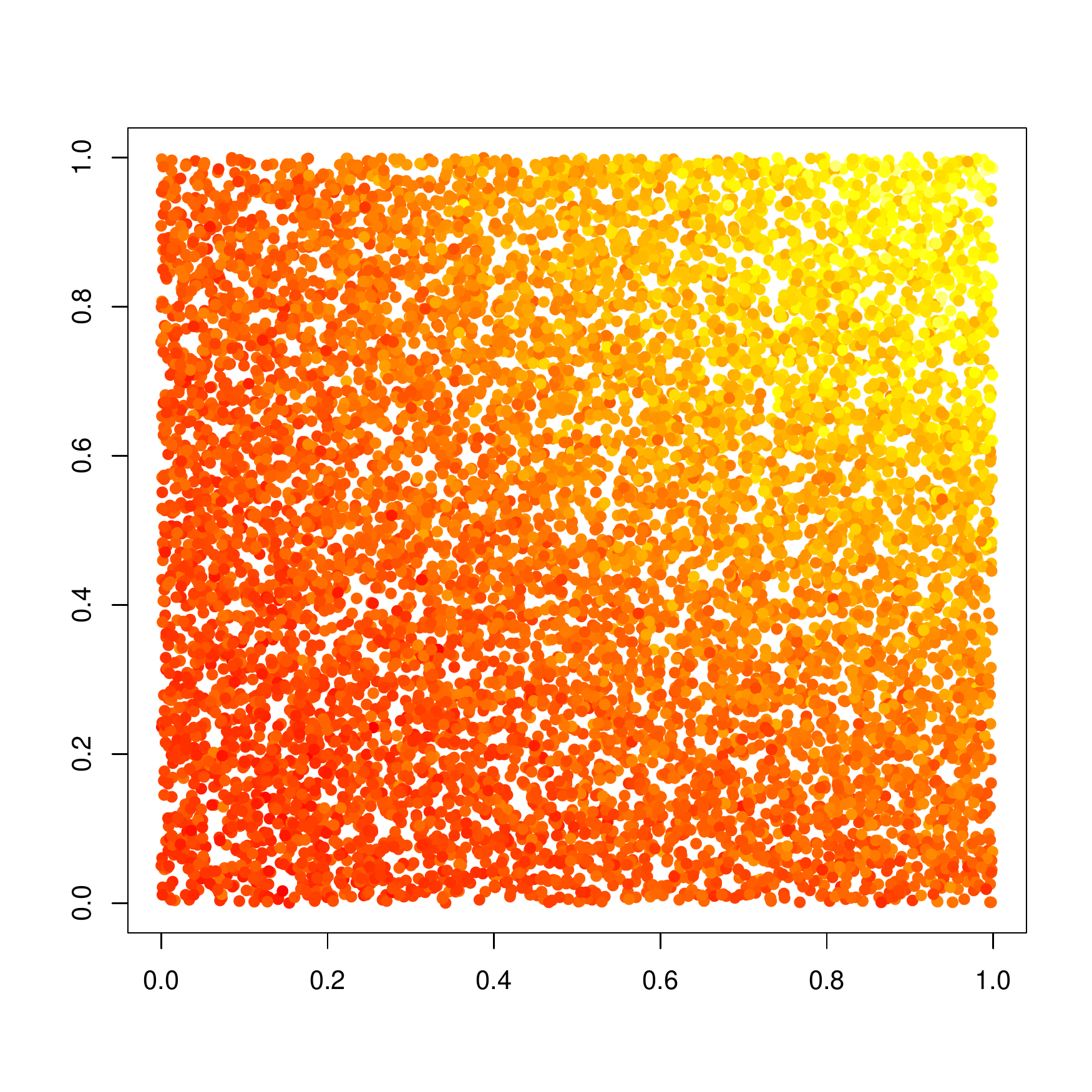}  \\
& True effect $\tau(x)$ & Causal forest & $k^*$-NN
\end{tabular}
\caption{The true treatment effect $\tau(X_i)$ at 10,000 random test examples $X_i$, along with estimates $\htau(X_i)$ produced by a causal forest and optimally-tuned $k$-NN, on data drawn according to \eqref{eq:tau_setup} with $d = 6, \, 20$. The test points are plotted according to their first two coordinates; the treatment effect is denoted by color, from dark (low) to light (high). On this simulation instance, causal forests and $k^*$-NN had a mean-squared error of 0.03 and 0.13 respectively for $d = 6$, and of 0.05 and 0.62 respectively for $d = 20$. The optimal tuning choices for $k$-NN were $k^* = 39$ for $d = 6$, and $k^* = 24$ for $d = 20$.}
\label{fig:2d_visu}
\end{figure}

Finally, in the setting \eqref{eq:tau_setup}, Table \ref{tab:tau_simu} shows that causal forests still
achieve an order of magnitude improvement over $k$-NN in terms of mean-squared error when
$d > 2$, but struggle more in terms of coverage.
This appears to largely be a bias effect: especially as $d$ gets larger, the random forest is dominated by bias instead of variance and so the confidence intervals are not centered.
Figure \ref{fig:2d_visu} illustrates this phenomenon: although the causal forest faithfully captures the qualitative aspects of the true $\tau$-surface, it does not exactly match its shape, especially in the upper-right corner where $\tau$ is largest.
Our theoretical results guarantee that this effect will go away as $n \rightarrow \infty$.
Figure \ref{fig:2d_visu} also helps us understand why $k$-NN performs so poorly in terms of mean-squared error: its predictive surface is both badly biased and noticeably ``grainy,'' especially for $d = 20$.
It suffers from bias not only at the boundary where the treatment effect is largest,
but also where the slope of the treatment effect is high in the interior.

These results highlight the promise of causal forests for accurate estimation of heterogeneous treatment effects, all while emphasizing avenues for further work. An immediate challenge is to control the bias of causal forests to achieve better coverage. Using more powerful splitting rules is a good way to reduce bias by enabling the trees to focus more closely on the coordinates with the greatest signal. The study of splitting rules for trees designed to estimate causal effects is still in its infancy and improvements may be possible.

A limitation of the present simulation study is that we manually chose whether to use double-sample forests
or propensity forests, depending on which procedure seemed more appropriate in each problem setting.
An important challenge for future work is to design splitting rules that can automatically choose
which characteristic of the training data to split on.
A principled and automatic rule for choosing $s$ would also be valuable. 

We present additional simulation results in the supplementary material. Appendix \ref{sec:more_simu}
has extensive simulations in the setting of Table \ref{tab:tau0_simu} while varying both $s$ and $n$;
and also considers a simulation setting where the signal is spread out over many different features, meaning
that forests have less upside over baseline methods. Finally, in Appendix \ref{sec:honesty}, we study
the effect of honesty versus adaptivity on forest predictive error.

\section{Discussion}

This paper proposed a class of non-parametric methods for
heterogeneous treatment effect estimation
that allow for data-driven feature selection
all while maintaining the benefits of classical methods, i.e.,
asymptotically normal and unbiased point estimates with valid confidence intervals.
Our causal forest estimator can be thought of as an adaptive nearest neighbor method,
where the data determines which dimensions are most important to consider in selecting nearest neighbors.
Such adaptivity seems essential for modern large-scale applications with many features.

In general, the challenge in using adaptive methods as the basis for valid statistical inference is that selection bias can be difficult to quantify; see \citet{berk2013valid}, \citet{chernozhukov2015valid}, \citet{taylor2015statistical}, and references therein for recent advances.
In this paper, pairing ``honest'' trees with the subsampling mechanism of random forests enabled us to accomplish this goal in a simple yet principled way.
In our simulation experiments, our method provides dramatically better mean-squared error than classical methods while achieving nominal coverage rates in moderate sample sizes.

A number of important extensions and refinements are left open.
Our current results only provide pointwise confidence intervals for $\tau(x)$; extending our theory to the setting of global functional estimation seems like a promising avenue for further work.
Another challenge is that nearest-neighbor non-parametric estimators typically suffer from bias at the boundaries of the support of the feature space.
A systematic approach to trimming at the boundaries, and possibly correcting for bias, would improve the coverage of the confidence intervals. 
In general, work can be done to identify methods that produce accurate variance estimates even in more challenging circumstances, e.g., with small samples or a large number of covariates, or to identify when variance estimates are unlikely to be reliable.

\section*{Acknowledgment}

We are grateful for helpful feedback from Brad Efron, Trevor Hastie, Guido Imbens, Guenther Walther,
as well as the associate editor, two anonymous referees, and seminar participants at
Atlantic Causal Inference Conference,
Berkeley Haas,
Berkeley Statistics,
California Econometrics Conference,
Cambridge,
Carnegie Mellon,
COMPSTAT,
Cornell,
Columbia Business School,
Columbia Statistics,
CREST,
EPFL,
ISAT/DARPA Machine Learning for Causal Inference,
JSM,
London Business School,
Microsoft Conference on Digital Economics,
Merck,
MIT IDSS,
MIT Sloan,
Northwestern,
SLDM,
Stanford GSB,
Stanford Statistics,
University College London,
University of Bonn,
University of Chicago Booth,
University of Chicago Statistics,
University of Illinois Urbana--Champaign,
University of Pennsylvania,
University of Southern California,
University of Washington,
Yale Econometrics, and
Yale Statistics.
S. W. was partially supported by a B. C. and E. J. Eaves Stanford Graduate Fellowship.

{
\setstretch{1}
\setlength{\bibsep}{0.2pt plus 0.3ex}
\bibliographystyle{plainnat}
\bibliography{references}
}

\newpage

\begin{appendix}

\newgeometry{margin=1.5cm}
{\thispagestyle{empty}
\setlength{\tabcolsep}{0.2em}
\begin{table}[p]
\centering
\resizebox{\textwidth}{!}{
\begin{tabular}{||ccc|ccc||ccc|ccc||ccc|ccc||}
  \hline
\hline
n & d & s & MSE & Coverage & Variance & n & d & s & MSE & Coverage & Variance & n & d & s & MSE & Coverage & Variance \\ 
  \hline
\hline
1000 & 2 & 100 & 0.16 (1) & 0.29 (3) & 0.01 (0) & 2000 & 4 & 200 & 0.14 (1) & 0.25 (3) & 0.01 (0) & 5000 & 6 & 500 & 0.05 (0) & 0.52 (3) & 0.01 (0) \\ 
  1000 & 2 & 200 & 0.09 (1) & 0.69 (5) & 0.03 (0) & 2000 & 4 & 400 & 0.06 (1) & 0.70 (4) & 0.02 (0) & 5000 & 6 & 1000 & 0.03 (0) & 0.75 (3) & 0.01 (0) \\ 
  1000 & 2 & 250 & 0.07 (1) & 0.79 (4) & 0.03 (0) & 2000 & 4 & 500 & 0.05 (1) & 0.81 (3) & 0.02 (0) & 5000 & 6 & 1250 & 0.02 (0) & 0.79 (3) & 0.01 (0) \\ 
  1000 & 2 & 333 & 0.07 (1) & 0.89 (3) & 0.04 (0) & 2000 & 4 & 667 & 0.04 (0) & 0.90 (2) & 0.03 (0) & 5000 & 6 & 1667 & 0.02 (0) & 0.86 (2) & 0.02 (0) \\ 
  1000 & 2 & 500 & 0.07 (1) & 0.94 (2) & 0.07 (0) & 2000 & 4 & 1000 & 0.04 (0) & 0.95 (1) & 0.04 (0) & 5000 & 6 & 2500 & 0.02 (0) & 0.92 (1) & 0.02 (0) \\ 
  1000 & 2 & 667 & 0.08 (1) & 0.90 (4) & 0.08 (1) & 2000 & 4 & 1333 & 0.05 (0) & 0.96 (1) & 0.06 (0) & 5000 & 6 & 3333 & 0.03 (0) & 0.96 (1) & 0.03 (0) \\ 
   \hline
1000 & 3 & 100 & 0.25 (2) & 0.16 (2) & 0.01 (0) & 2000 & 5 & 200 & 0.17 (1) & 0.18 (2) & 0.01 (0) & 5000 & 8 & 500 & 0.06 (1) & 0.42 (3) & 0.01 (0) \\ 
  1000 & 3 & 200 & 0.13 (2) & 0.53 (5) & 0.02 (0) & 2000 & 5 & 400 & 0.07 (1) & 0.65 (6) & 0.02 (0) & 5000 & 8 & 1000 & 0.03 (0) & 0.69 (2) & 0.01 (0) \\ 
  1000 & 3 & 250 & 0.11 (2) & 0.66 (6) & 0.03 (0) & 2000 & 5 & 500 & 0.05 (1) & 0.75 (5) & 0.02 (0) & 5000 & 8 & 1250 & 0.03 (0) & 0.73 (3) & 0.01 (0) \\ 
  1000 & 3 & 333 & 0.09 (1) & 0.81 (5) & 0.04 (0) & 2000 & 5 & 667 & 0.05 (0) & 0.84 (3) & 0.03 (0) & 5000 & 8 & 1667 & 0.03 (0) & 0.81 (2) & 0.01 (0) \\ 
  1000 & 3 & 500 & 0.08 (1) & 0.90 (3) & 0.06 (0) & 2000 & 5 & 1000 & 0.04 (0) & 0.91 (1) & 0.04 (0) & 5000 & 8 & 2500 & 0.03 (0) & 0.88 (2) & 0.02 (0) \\ 
  1000 & 3 & 667 & 0.08 (1) & 0.73 (5) & 0.04 (0) & 2000 & 5 & 1333 & 0.05 (0) & 0.92 (2) & 0.05 (0) & 5000 & 8 & 3333 & 0.03 (0) & 0.92 (1) & 0.03 (0) \\ 
   \hline
1000 & 4 & 100 & 0.32 (1) & 0.12 (1) & 0.01 (0) & 2000 & 6 & 200 & 0.22 (1) & 0.12 (1) & 0.01 (0) & 10000 & 2 & 1000 & 0.01 (0) & 0.95 (1) & 0.01 (0) \\ 
  1000 & 4 & 200 & 0.16 (1) & 0.43 (3) & 0.03 (0) & 2000 & 6 & 400 & 0.09 (1) & 0.52 (5) & 0.02 (0) & 10000 & 2 & 2000 & 0.01 (0) & 0.96 (1) & 0.02 (0) \\ 
  1000 & 4 & 250 & 0.13 (1) & 0.58 (4) & 0.03 (0) & 2000 & 6 & 500 & 0.07 (1) & 0.68 (4) & 0.02 (0) & 10000 & 2 & 2500 & 0.02 (0) & 0.96 (0) & 0.02 (0) \\ 
  1000 & 4 & 333 & 0.09 (1) & 0.76 (3) & 0.04 (0) & 2000 & 6 & 667 & 0.05 (0) & 0.82 (3) & 0.03 (0) & 10000 & 2 & 3333 & 0.02 (0) & 0.96 (0) & 0.02 (0) \\ 
  1000 & 4 & 500 & 0.07 (1) & 0.91 (2) & 0.06 (0) & 2000 & 6 & 1000 & 0.04 (0) & 0.89 (2) & 0.03 (0) & 10000 & 2 & 5000 & 0.03 (0) & 0.97 (0) & 0.04 (0) \\ 
  1000 & 4 & 667 & 0.07 (1) & 0.79 (3) & 0.04 (1) & 2000 & 6 & 1333 & 0.05 (0) & 0.94 (1) & 0.05 (0) & 10000 & 2 & 6667 & 0.04 (0) & 0.98 (0) & 0.06 (0) \\ 
   \hline
1000 & 5 & 100 & 0.34 (2) & 0.10 (1) & 0.01 (0) & 2000 & 8 & 200 & 0.24 (1) & 0.12 (1) & 0.01 (0) & 10000 & 3 & 1000 & 0.01 (0) & 0.84 (1) & 0.01 (0) \\ 
  1000 & 5 & 200 & 0.16 (2) & 0.41 (6) & 0.02 (0) & 2000 & 8 & 400 & 0.08 (0) & 0.61 (4) & 0.02 (0) & 10000 & 3 & 2000 & 0.01 (0) & 0.91 (1) & 0.01 (0) \\ 
  1000 & 5 & 250 & 0.12 (2) & 0.59 (5) & 0.03 (0) & 2000 & 8 & 500 & 0.06 (0) & 0.78 (2) & 0.02 (0) & 10000 & 3 & 2500 & 0.01 (0) & 0.92 (1) & 0.01 (0) \\ 
  1000 & 5 & 333 & 0.09 (1) & 0.80 (4) & 0.04 (0) & 2000 & 8 & 667 & 0.05 (0) & 0.85 (1) & 0.02 (0) & 10000 & 3 & 3333 & 0.02 (0) & 0.94 (1) & 0.02 (0) \\ 
  1000 & 5 & 500 & 0.07 (1) & 0.89 (3) & 0.06 (0) & 2000 & 8 & 1000 & 0.04 (0) & 0.91 (1) & 0.04 (0) & 10000 & 3 & 5000 & 0.02 (0) & 0.95 (0) & 0.03 (0) \\ 
  1000 & 5 & 667 & 0.07 (1) & 0.77 (4) & 0.04 (0) & 2000 & 8 & 1333 & 0.04 (0) & 0.89 (3) & 0.04 (0) & 10000 & 3 & 6667 & 0.03 (0) & 0.97 (0) & 0.04 (0) \\ 
   \hline
1000 & 6 & 100 & 0.41 (3) & 0.07 (1) & 0.01 (0) & 5000 & 2 & 500 & 0.02 (0) & 0.86 (2) & 0.01 (0) & 10000 & 4 & 1000 & 0.02 (0) & 0.73 (3) & 0.01 (0) \\ 
  1000 & 6 & 200 & 0.22 (3) & 0.31 (4) & 0.02 (0) & 5000 & 2 & 1000 & 0.02 (0) & 0.92 (1) & 0.02 (0) & 10000 & 4 & 2000 & 0.02 (0) & 0.85 (2) & 0.01 (0) \\ 
  1000 & 6 & 250 & 0.17 (3) & 0.48 (6) & 0.03 (0) & 5000 & 2 & 1250 & 0.02 (0) & 0.94 (1) & 0.02 (0) & 10000 & 4 & 2500 & 0.02 (0) & 0.87 (1) & 0.01 (0) \\ 
  1000 & 6 & 333 & 0.12 (2) & 0.68 (7) & 0.04 (0) & 5000 & 2 & 1667 & 0.03 (0) & 0.95 (1) & 0.03 (0) & 10000 & 4 & 3333 & 0.02 (0) & 0.90 (1) & 0.01 (0) \\ 
  1000 & 6 & 500 & 0.08 (2) & 0.89 (3) & 0.05 (0) & 5000 & 2 & 2500 & 0.04 (0) & 0.96 (0) & 0.05 (0) & 10000 & 4 & 5000 & 0.02 (0) & 0.93 (1) & 0.02 (0) \\ 
  1000 & 6 & 667 & 0.07 (1) & 0.75 (6) & 0.04 (0) & 5000 & 2 & 3333 & 0.05 (0) & 0.97 (0) & 0.06 (0) & 10000 & 4 & 6667 & 0.02 (0) & 0.95 (0) & 0.03 (0) \\ 
   \hline
1000 & 8 & 100 & 0.51 (2) & 0.06 (0) & 0.01 (0) & 5000 & 3 & 500 & 0.02 (0) & 0.75 (3) & 0.01 (0) & 10000 & 5 & 1000 & 0.02 (0) & 0.65 (3) & 0.01 (0) \\ 
  1000 & 8 & 200 & 0.29 (1) & 0.20 (2) & 0.02 (0) & 5000 & 3 & 1000 & 0.02 (0) & 0.89 (2) & 0.01 (0) & 10000 & 5 & 2000 & 0.02 (0) & 0.79 (2) & 0.01 (0) \\ 
  1000 & 8 & 250 & 0.23 (1) & 0.31 (2) & 0.03 (0) & 5000 & 3 & 1250 & 0.02 (0) & 0.91 (1) & 0.02 (0) & 10000 & 5 & 2500 & 0.02 (0) & 0.83 (2) & 0.01 (0) \\ 
  1000 & 8 & 333 & 0.16 (1) & 0.53 (4) & 0.04 (0) & 5000 & 3 & 1667 & 0.02 (0) & 0.94 (1) & 0.02 (0) & 10000 & 5 & 3333 & 0.02 (0) & 0.87 (1) & 0.01 (0) \\ 
  1000 & 8 & 500 & 0.10 (1) & 0.86 (2) & 0.06 (0) & 5000 & 3 & 2500 & 0.03 (0) & 0.96 (1) & 0.03 (0) & 10000 & 5 & 5000 & 0.02 (0) & 0.92 (1) & 0.02 (0) \\ 
  1000 & 8 & 667 & 0.08 (1) & 0.70 (5) & 0.04 (0) & 5000 & 3 & 3333 & 0.03 (0) & 0.97 (0) & 0.05 (0) & 10000 & 5 & 6667 & 0.02 (0) & 0.94 (0) & 0.02 (0) \\ 
   \hline
2000 & 2 & 200 & 0.05 (0) & 0.64 (4) & 0.01 (0) & 5000 & 4 & 500 & 0.03 (0) & 0.61 (3) & 0.01 (0) & 10000 & 6 & 1000 & 0.02 (0) & 0.62 (3) & 0.00 (0) \\ 
  2000 & 2 & 400 & 0.03 (0) & 0.88 (1) & 0.02 (0) & 5000 & 4 & 1000 & 0.02 (0) & 0.84 (2) & 0.01 (0) & 10000 & 6 & 2000 & 0.02 (0) & 0.75 (2) & 0.01 (0) \\ 
  2000 & 2 & 500 & 0.03 (0) & 0.92 (1) & 0.03 (0) & 5000 & 4 & 1250 & 0.02 (0) & 0.88 (2) & 0.01 (0) & 10000 & 6 & 2500 & 0.02 (0) & 0.79 (2) & 0.01 (0) \\ 
  2000 & 2 & 667 & 0.04 (0) & 0.95 (1) & 0.04 (0) & 5000 & 4 & 1667 & 0.02 (0) & 0.91 (1) & 0.02 (0) & 10000 & 6 & 3333 & 0.02 (0) & 0.84 (1) & 0.01 (0) \\ 
  2000 & 2 & 1000 & 0.05 (0) & 0.97 (1) & 0.06 (0) & 5000 & 4 & 2500 & 0.03 (0) & 0.95 (1) & 0.03 (0) & 10000 & 6 & 5000 & 0.02 (0) & 0.90 (1) & 0.01 (0) \\ 
  2000 & 2 & 1333 & 0.06 (0) & 0.98 (0) & 0.08 (0) & 5000 & 4 & 3333 & 0.03 (0) & 0.96 (1) & 0.04 (0) & 10000 & 6 & 6667 & 0.02 (0) & 0.93 (1) & 0.02 (0) \\ 
   \hline
2000 & 3 & 200 & 0.09 (1) & 0.40 (5) & 0.01 (0) & 5000 & 5 & 500 & 0.04 (0) & 0.56 (4) & 0.01 (0) & 10000 & 8 & 1000 & 0.03 (0) & 0.56 (2) & 0.00 (0) \\ 
  2000 & 3 & 400 & 0.04 (1) & 0.78 (4) & 0.02 (0) & 5000 & 5 & 1000 & 0.02 (0) & 0.81 (2) & 0.01 (0) & 10000 & 8 & 2000 & 0.02 (0) & 0.70 (2) & 0.01 (0) \\ 
  2000 & 3 & 500 & 0.04 (1) & 0.85 (3) & 0.02 (0) & 5000 & 5 & 1250 & 0.02 (0) & 0.86 (3) & 0.01 (0) & 10000 & 8 & 2500 & 0.02 (0) & 0.74 (2) & 0.01 (0) \\ 
  2000 & 3 & 667 & 0.04 (1) & 0.90 (2) & 0.03 (0) & 5000 & 5 & 1667 & 0.02 (0) & 0.90 (2) & 0.02 (0) & 10000 & 8 & 3333 & 0.02 (0) & 0.80 (2) & 0.01 (0) \\ 
  2000 & 3 & 1000 & 0.04 (0) & 0.94 (1) & 0.05 (0) & 5000 & 5 & 2500 & 0.02 (0) & 0.94 (1) & 0.02 (0) & 10000 & 8 & 5000 & 0.02 (0) & 0.87 (1) & 0.01 (0) \\ 
  2000 & 3 & 1333 & 0.05 (0) & 0.96 (1) & 0.06 (0) & 5000 & 5 & 3333 & 0.03 (0) & 0.96 (1) & 0.03 (0) & 10000 & 8 & 6667 & 0.02 (0) & 0.91 (1) & 0.02 (0) \\ 
   \hline
\hline
\end{tabular}

}
\caption{Simulations for the generative model described in \eqref{eq:tau0_setup}, while varying $s$ and $n$. The results presented in Table \ref{tab:tau0_simu} are a subset of these results. The numbers in parentheses indicate the (rounded) estimates for standard sampling error for the last printed digit, obtained by aggregating performance over 10 simulation replications. Mean-squared error (MSE) and coverage denote performance for estimating $\tau$ on a random test set; the variance column denotes the mean variance estimate obtained by the infinitesimal jackknife on the test set. Target coverage is 0.95. We always grew $B = n$ trees.}
\label{tab:bigsimu}
\end{table}
}
\restoregeometry

\section{Additional Simulations}
\label{sec:more_simu}

In Table \ref{tab:bigsimu}, we expand on the simulation results given in Table \ref{tab:tau0_simu}, and present results for data generated according to \eqref{eq:tau0_setup} all while varying $s$ and $n$. As expected, we observe a bias-variance trade-off: the causal forest is more affected by bias when $s$ is small relative to $n$, and by variance when $s$ is larger relative to $n$.
Reassuringly, we observe that our confidence intervals obtain close-to-nominal coverage when the mean-squared error matches the average variance estimate $\hsigma^2(X)$ generated by the infinitesimal jackknife, corroborating the hypothesis that failures in coverage mostly arise when the causal forest is bias- instead of variance-dominated.

\begin{table}[t]
\centering
\begin{tabular}{|rr|ccc|ccc|}
\hline
 & & \multicolumn{3}{c|}{mean-squared error} & \multicolumn{3}{c|}{coverage} \\
 \hline
$q$ & $d$ & CF & 10-NN & 100-NN & CF & 10-NN & 100-NN \\ 
  \hline
  2 & 6 & 0.04 (0) & 0.24 (0) & 0.13 (0) & 0.92 (1) & 0.92 (0) & 0.59 (1) \\ 
  4 & 6 & 0.06 (0) & 0.22 (0) & 0.07 (0) & 0.87 (1) & 0.93 (0) & 0.72 (1) \\ 
  6 & 6 & 0.08 (0) & 0.22 (0) & 0.05 (0) & 0.75 (1) & 0.93 (0) & 0.78 (1) \\ \hline
  2 & 12 & 0.04 (0) & 0.38 (0) & 0.34 (0) & 0.86 (1) & 0.88 (0) & 0.45 (0) \\ 
  4 & 12 & 0.08 (0) & 0.30 (0) & 0.18 (0) & 0.76 (1) & 0.90 (0) & 0.51 (1) \\ 
  6 & 12 & 0.12 (0) & 0.26 (0) & 0.12 (0) & 0.59 (1) & 0.91 (0) & 0.59 (1) \\ 
   \hline
\end{tabular}
\caption{Results for a data-generating design where we vary both the number of signal features $q$ and the number of ambient features $d$. All simulations have $n = 5,000$, $B = 2,000$ and a minimum leaf size of 1, and are aggregated over 20 simulation replicates.}
\label{tab:supp}
\end{table}

Finally, all our experiments relied on settings with strong, low-dimensional structure
that forests could pick up on to improve over $k$-NN matching. This intuition is
formally supported by, e.g., the theory developed by \citet{biau2012analysis}.
Here, we consider how forests perform when the signal is spread out over a larger number of
features, and so forests have less upside over other methods. We find that---as expected---they
do not improve much over baselines.
Specifically, we generate data with a treatment effect function
\begin{equation}
\tau(x) = \frac{4}{q} \sum_{j = 1}^q \p{\frac{1}{1 + e^{-12 \p{x_j - 0.5}}} - \frac{1}{2}}, 
\end{equation}
where we vary both the number of signal dimensions $q$ and ambient dimensions $d$.
As seen in Table \ref{tab:supp}, forests vastly improve over $k$-NN in terms of mean-squared
error when $q$ is much smaller than $d$, but that this advantage decreases when $d$ and $q$
are comparable; and actually do worse than $k$-NN when we have a dense signal with $d = q = 6$.
When the signal is dense, all surveyed methods have bad coverage except for 10-NN which, as always,
simply has very wide intervals.

\section{Is Honesty Necessary for Consistency?}
\label{sec:honesty}

Our honesty assumption is the largest divergence between our framework
and main-stream applications of random forests. Following \citet{breiman2001random}, almost
all practical implementations of random forests are not honest. Moreover, there has been a stream
of recent work providing theoretical guarantees for adaptive random forests:
\citet{scornet2015consistency} establish risk consistency under assumptions on the
data-generating function, i.e., they show that the test-set error of forests
converges asymptotically to the Bayes risk,
\citet{mentch2014ensemble} provide results about uncertainty quantification, and
\citet{wager2015uniform} find that adaptive trees with growing leaves are in general consistent
under fairly weak conditions and provide bounds on the decay rate of their bias.

\begin{figure}
\centering
\includegraphics[width=\figw]{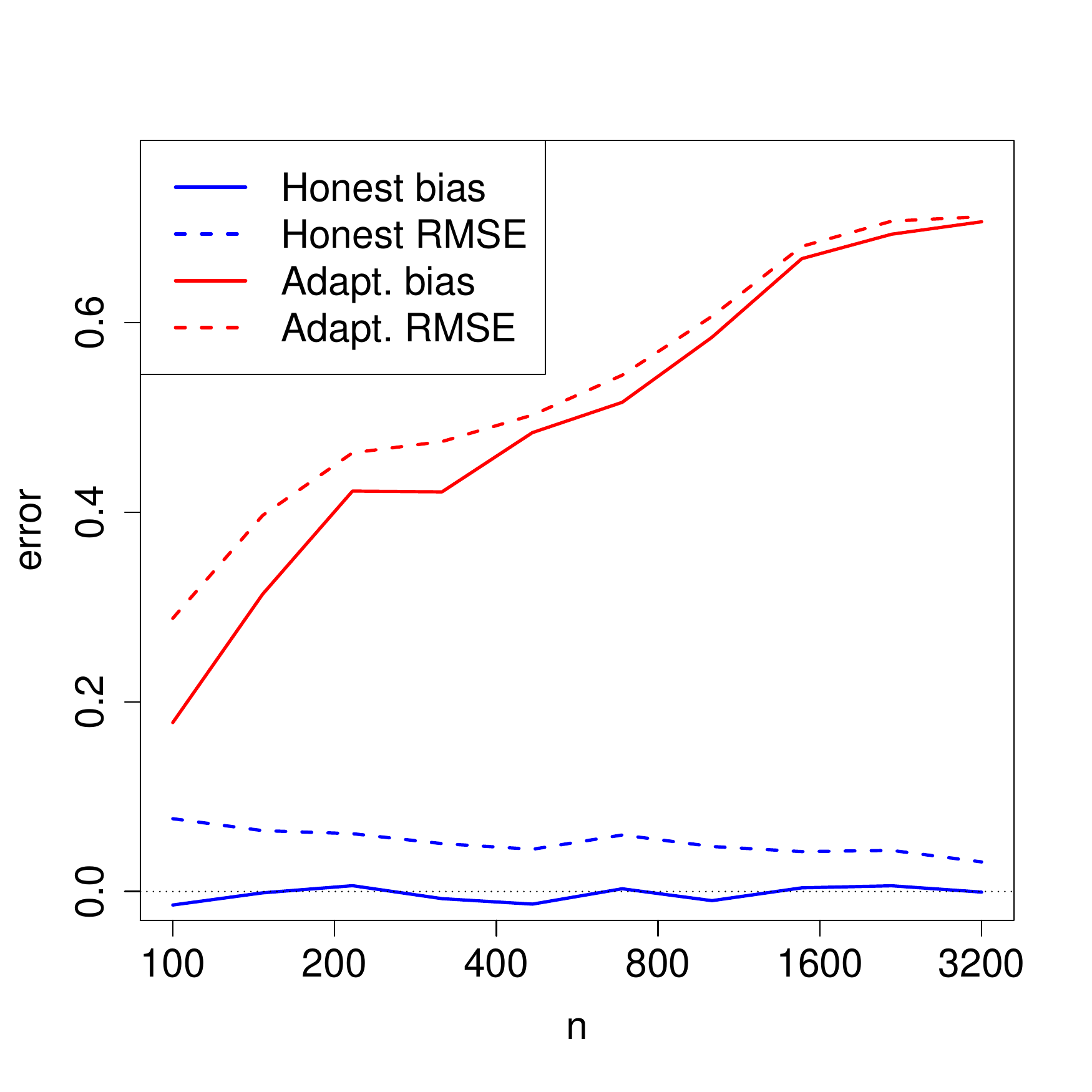}
\caption{Comparison of the performance of honest and adaptive causal forests when predicting at
$x_0 = (0, \, 0, \, \ldots, \, 0)$, which is a corner of the support of the features $X_i$. Both forests have
$B = 500$ trees, and use a leaf-size of $k = 1$. We use a subsample size $s = n^{0.8}$ for adaptive
forests and $s = 2 \, n^{0.8}$ for honest forests. All results are averaged over 40 replications; we report
both bias and root-mean-squared error (RMSE).}
\label{fig:honesty}
\end{figure}

However, if we want pointwise centered asymptotic Gaussianity results, then honesty
appears to be necessary. Consider the following simple example, where there is no treatment
heterogeneity---and in fact $X$ is independent of $W$ and $Y$. We are in a randomized controlled
trial, where $X \sim \text{Uniform}([0, \, 1]^p)$ with $p = 10$ and $W \sim \text{Bernoulli}(0.5)$.
The distribution of $Y$ is $Y_i = 2 W_i A_i + \varepsilon_i$, where $\varepsilon_i \sim \nn\p{0, \, 0.1^2}$
and $A_i \sim \text{Bernoulli}(0.05)$. Thus, the treatment effect is $\tau(x) = 0.1$ for all $x \in [0, \, 1]^p$.
Our goal is to estimate the treatment effect $\tau(x_0)$ at $x_0 = (0, \, 0, \, \ldots, \, 0)$.

Results from running both honest and adaptive forests are shown in Figure \ref{fig:honesty}. We see that
honest forests are unbiased regardless of $n$, and their mean-squared error decreases with
sample size, as expected. Adaptive forests, in contrast, perform remarkably badly. They have
bias that far exceeds the intrinsic sampling variation; and, furthermore, this bias \emph{increases} with $n$.
What is happening here is that CART trees aggressively seek to separate outliers (``$Y_i \approx 1$'') from
the rest of the data (``$Y_i \approx 0$'') and, in doing so, end up over-representing outliers in the corners
of the feature space. As $n$ increases, it appears that adaptive forests have more opportunities to
push outliers into corners of features space and so the bias worsens. This phenomenon is not restricted
to causal forests; an earlier technical report \citep{wager2014asymptotic} observed the same phenomenon
in the context of plain regression forests.  Honest trees do not have this problem, as we do not know
where the $\ii$-sample outliers will be when placing splits using only the $\jj$-sample.
Thus, it appears that adaptive CART forests are pointwise biased in corners of $x$-space.

Finally, we note that this bias phenomenon does not contradict existing consistency results
in the literature. \citet{wager2015uniform} prove that this bias phenomenon discussed above
can be averted if we use a minimum leaf-size $k$ that grows with $n$ (in contrast, Figure \ref{fig:honesty}
uses $k = 1$). However, their bounds on the bias decays slower than the sampling variance of random forests,
and so their results cannot be used to get centered confidence intervals.

Meanwhile, \citet{scornet2015consistency} prove that forests are risk-consistent at an average test point, and,
in fact, the test set error of adaptive forests does decay in the setting of Figure \ref{fig:honesty} 
as the sample size $n$ grows (although
honest forests still maintain a lower test set error). The reason test set error can go to zero despite the
bias phenomenon in Figure \ref{fig:honesty} is that, when $n$ gets large,
almost all test points will be far enough from corners that they will not be affected by the phenomenon
from Figure \ref{fig:honesty}.

\subsection{Adaptive versus Honest Predictive Error}

The discussion above implies that the theorems proved in this paper are not valid for
adaptive forests. That being said, it still remains interesting to ask whether our use
of honest forest hurts us in terms of mean-squared error at a random test point, as
in the formalism considered by, e.g., \citet{scornet2015consistency}.
In this setting, \citet{denil2014narrowing} showed that honesty can hurt the performance
of regression forests on some classic datasets from the UCI repository; however, in a causal
inference setting, we might be concerned that the risk of overfitting with adaptive forests
is higher since our signals of interest are often quite weak.

We compare the performance of honest and adaptive forests in the setting of
Table \ref{tab:tau0_simu}, with $d = 8$. Here, if we simply run adaptive forests
out-of-the-box with the usual minimum leaf size parameter $k = 1$, they do
extremely badly; in fact, they do worse than 50 nearest neighbors. However, if
we are willing to increase the minimum leaf size, their performance improves.

\begin{figure}[t]
\centering
\includegraphics[width=\figw]{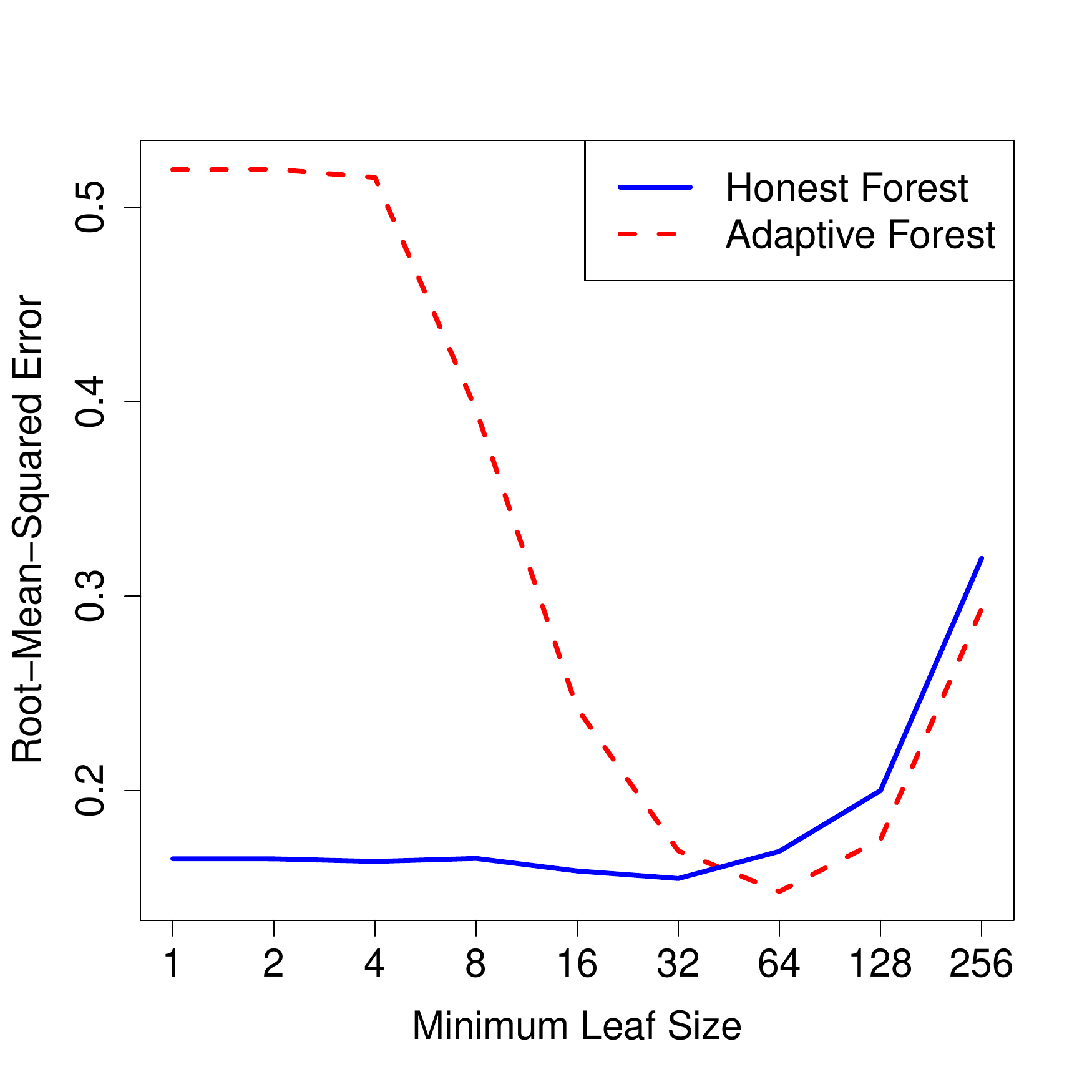}
\caption{Comparison of the root-mean-squared error of honest and adaptive forests in the setting of Table \ref{tab:tau0_simu}, with $d = 8$. Honest forests use $s = 2500$ (i.e., $\abs{\ii} = 1250$) while adaptive forests use $s = 1250$, such that both methods grow trees of the same depth. Both forests have $B = 500$, and results are averaged over 100 simulation replications.}
\label{fig:adaptMSE}
\end{figure}

Figure \ref{fig:adaptMSE} depicts the root-mean-squared error for both adaptive and
honest forests over a wide range of choices for the minimum leaf size parameter $k$.
We see that, at their best, both methods do comparably. However, honest forests
attain good performance over a wide range of choices for $k$, including our default
choice $k = 1$, whereas adaptive forests are extremely sensitive to choosing a good value
of $k$.
We also note that the optimum $k = 64$ for adaptive forests is quite far from
standard choices advocated in practice; such as $k = 5$ recommended by 
\citet{breiman2001random} for regression forests, $k = 7$ in the \texttt{cforest} function used by
\citet{strobl2007bias}, or $k = 10$ recommended by \citet{meinshausen2006quantile}
for quantile regression. Thus, it appears that accurately tuning adaptive forests
in this setting may present a challenge and, overall, a practitioner may prefer honest
forests even based on their mean-squared error characteristics alone.

%
%

\section{Proofs}
\label{sec:proofs}

{\bf Notation.} Throughout the appendix we use the following notation to describe asymptotic scalings:
$f(s) \sim g(s)$ means that $\lim_{s \rightarrow \infty} f(s)/g(s) = 1$,
$f(s) \gtrsim g(s)$ means that $\liminf_{s \rightarrow \infty} f(s)/g(s) \geq 1$ and $f(s) \lesssim g(s)$ is analogous,
$f(s) = \oo(g(s))$ means that $f(s) \lesssim C \, g(s)$ for some $C > 0$,
$f(s) = \Omega(g(s))$ means that $f(s) \gtrsim c \, g(s)$ for some $c > 0$,
and finally $f(s) =  o(g(s))$ means that $\limsup_{s \rightarrow \infty} f(s)/g(s) = 0$.

\begin{proof}[Proof of Theorem \ref{theo:intro}]
Given the statements of Theorem \ref{theo:gauss} and Theorem \ref{theo:ij}, it only remains to show that \eqref{eq:intro_gauss} holds with $\EE{\hmu_n(x)}$ replaced with $\mu(x)$. To do so, it suffices to show that $\abs{\EE{\hmu_n\p{x}} - \mu(x)}/\sigma_n\p{x} \rightarrow 0$; the rest follows from Slutsky's lemma.
Now, recall that by Theorem \ref{theo:bias},
$$ \abs{\EE{\hmu_n\p{x}} - \mu(x)} = \oo\p{n^{-\frac{\beta}{2} \frac{\log\p{\p{1 - \alpha}^{-1}}}{\pi^{-1} d \, \log\p{\alpha^{-1}}}}}. $$
Meanwhile, from Theorem \ref{theo:pnn} and the proof of Theorem \ref{theo:gauss}, we see that
$$ \sigma_n^2(x) \gtrsim C_{f, \, d} \, \frac{s}{n} \, \frac{\Var{T}}{\log(s)^d}. $$
By honesty of $T$, \smash{$\Var{T} \gtrsim \Var{Y \cond X = x} / \abs{\cb{i : X_i \in L(x)}} \geq \Var{Y \cond X = x}/ (2k)$}, and so
$$ \sigma_n^2(x) \gtrsim \frac{C_{f, \, d}}{2k} \, \frac{s}{n} \, \frac{\Var{Y \cond X = x}}{\log(s)^d} = \Omega\p{n^{\beta - 1 - \varepsilon}} $$
for any $\varepsilon > 0$. It follows that
$$ \frac{\abs{\EE{\hmu_n\p{x}} - \mu(x)}}{\sigma_n\p{x}} = \oo\p{n^{\frac{1}{2}\p{1 + \varepsilon - \beta \, \p{1 +  \frac{\log\p{\p{1 - \alpha}^{-1}}}{\pi^{-1} d \, \log\p{\alpha^{-1}}}}}}}. $$
The right-hand-side bound converges to 0 for some small enough $\varepsilon > 0$ provided that
$$ \beta > \p{1 +  \frac{\log\p{\p{1 - \alpha}^{-1}}}{\pi^{-1} d \, \log\p{\alpha^{-1}}}}^{-1} = 1 - \p{1 + \frac{d}{\pi} \,  \frac{\log\p{\alpha^{-1}}}{\log\p{\p{1 - \alpha}^{-1}}}}^{-1} = \beta_{\min}. $$
\end{proof}

\subsection{Bounding the Bias of Regression Trees}
\label{sec:proof_bias}

\begin{proof}[Proof of Lemma \ref{lemm:diameter}]
Let $c(x)$ be the number of splits leading to the leaf $L(x)$, and let $c_j(x)$ be the number of these splits along the $j$-th coordinate. By regularity, we know that $s \alpha^{c(x)} \leq 2k - 1$, and so $c(x) \geq \log(s/(2k - 1)) / \log(\alpha^{-1})$. Thus, because the tree is a random split tree, $c_j(x)$ we have the following stochastic lower bound for $c_j(x)$:
\begin{equation}
\label{eq:diam_pf_1}
c_j(x) \overset{d}{\geq} \text{Binom}\p{\frac{\log\p{s/(2k - 1)}}{\log\p{\alpha^{-1}}}; \, \frac{\pi}{d}}.
\end{equation}
By Chernoff's inequality, it follows that
\begin{align}
\label{eq:diam_pf_2}
\PP{c_j\p{x} \leq \frac{\pi}{d} \frac{\log\p{s/(2k - 1)}}{\log\p{\alpha^{-1}}} \p{1 - \eta}}
&\leq \exp\sqb{-\frac{\eta^2}{2} \, \frac{\log\p{s/(2k - 1)}}{\pi^{-1} d\log\p{\alpha^{-1}}}} \\
\notag
&= \p{\frac{s}{2k - 1}}^{-\frac{\eta^2}{2} \, \frac{1}{\pi^{-1} d \log\p{\alpha^{-1}}}}.
\end{align}
Meanwhile, again by regularity, we might expect that
\smash{$\diam_j\p{L\p{x}}$} should less than \smash{$\p{1 - \alpha}^{c_j\p{x}}$}, at least for large $n$.
This condition would hold directly if the regularity condition from Definition \ref{defi:regular} were framed in terms of Lebesgue measure instead of the number of training examples in the leaf; our task is to show that it still holds approximately in our current setup.

Using the methods developed in \citet{wager2015uniform}, and in particular their Lemma 12, we can verify that, with high probability and simultaneously for all but the last $\oo\p{\log\log n}$ parent nodes above $L(x)$,  the number of training examples inside the node divided by $n$ is within a factor $1 + o(1)$ of the Lebesgue measure of the node. From this, we conclude that, for large enough $s$, with probability greater than $1 - 1/s$
$$ \diam_j\p{L\p{x}} \leq \p{1 - \alpha + o(1)}^{\p{1 + o(1)} c_j\p{x}}, $$
or, more prosaically, that
$$ \diam_j\p{L\p{x}} \leq \p{1 - \alpha}^{0.991 \, c_j\p{x}}. $$
Combining this results with the above Chernoff bound yields the desired inequality.
Here, replacing $0.991$ with $0.99$ in the bound lets us ignore the $1/s$ asymptotic failure probability of the concentration result used above.

Finally, we note that with double-sample trees, all the ``$s$'' terms in the above argument need to be replaced by ``$s/2$''; this additional factor 2, however, does not affect the final result.
\end{proof}

\begin{proof}[Proof of Theorem \ref{theo:bias}]
We begin with two observations. First, by honesty,
$$ \EE{T\p{x; \, Z}} - \EE{Y \cond X = x} = \EE{\EE{Y \cond X \in L\p{x}} - \EE{Y \cond X = x}}. $$
Second, by Lipschitz continuity of the conditional mean function, 
$$ \abs{\EE{Y \cond X \in L\p{x}} - \EE{Y \cond X = x}} \leq C \diam\p{L\p{x}}, $$
where $C$ is the Lipschitz constant. Thus, in order to bound the bias under both Lipschitz and honesty assumptions, it suffices to bound the average diameter of the leaf $L(x)$.

To do so, we start by plugging in $\eta = \sqrt{\log((1 - \alpha)^{-1}}$ in the bound from Lemma \ref{lemm:diameter}. Thanks our assumption that $\alpha \leq 0.2$, we see that $\eta \leq 0.48$ and so $0.99 \cdot (1 - \eta) \geq 0.51$; thus, a union bound gives us that, for large enough $s$,
$$\PP{\diam\p{L(x)} \geq \sqrt{d} \p{\frac{s}{2k - 1}}^{-0.51 \, \frac{\log\p{\p{1 - \alpha}^{-1}}}{\log\p{\alpha^{-1}}} \, \frac{\pi}{d}}} \leq d \, \p{\frac{s}{2k - 1}}^{-\frac{1}{2} \frac{\log\p{\p{1 - \alpha}^{-1}}}{\log\p{\alpha^{-1}}} \, \frac{\pi}{d}}.$$
The Lipschitz assumption lets us bound the bias on the event that that $\diam(L(x))$ satisfies this bound. Thus, for large $s$, we find that
\begin{align*}
 \abs{\EE{T\p{x; \, Z}} - \EE{Y \cond X = x}} &\lesssim  d \, \p{\frac{s}{2k - 1}}^{-\frac{1}{2} \frac{\log\p{\p{1 - \alpha}^{-1}}}{\log\p{\alpha^{-1}}} \, \frac{\pi}{d}} \\
 &\hspace{-17mm}\times \p{\sup_{x \in [0, \, 1]^d} \cb{\EE{Y \cond X = x}} - \inf_{x \in [0, \, 1]^d}\cb{\EE{Y \cond X = x}}},
\end{align*}
where $\sup_x\EE{Y \cond X = x} - \inf_x\EE{Y \cond X = x} = \oo\p{1}$ because
the conditional mean function is Lipschitz continuous.
Finally, since a forest is just an average of trees, the above result also holds for $\hmu(x)$.
\end{proof}

\subsection{Bounding the Incrementality of Regression Trees}
\label{sec:proof_incr}

\begin{proof}[Proof of Lemma \ref{lemm:pnn}]
First, we focus on the case where $f$ is constant, i.e., the features $X_i$ have a uniform distribution over $[0, \, 1]^d$. To begin, recall that the $S_i$ denote selection weights
$$ T\p{x; \, Z} = \sum_{i = 1}^s S_i Y_i \ \text{ where } \
S_i =
\begin{cases}
\abs{\cb{i : X_i \in L(x; \, Z)}}^{-1} & \text{ if } X_i \in L(x; \, Z), \\
0 & \text{ else,}
\end{cases}$$
where $L(x; \, Z)$ denotes the leaf containing $x$.
We also define the quantities 
$$P_i = 1\p{\cb{\text{$X_i$ is a $k$-PNN of $x$}}}.$$
Because $T$ is a $k$-PNN predictor, $P_i = 0$ implies that $S_i = 0$, and, moreover, $\abs{\cb{i : X_i \in L(x; \, Z)}} \geq k$; thus, we can verify that
$$ \EE{S_1 \cond Z_1} \leq \frac{1}{k} \, \EE{P_1 \cond Z_1}. $$
The bulk of the proof involves showing that
\begin{equation}
\label{eq:wbound}
\PP{\EE{P_1 \cond Z_1} \geq \frac{1}{s^2}} \lesssim k \, \frac{2^{d+1} \log\p{s}^d}{\p{d - 1}!} \, \frac{1}{s};
\end{equation}
by the above argument, this immediately implies that
\begin{equation}
\label{eq:sbound}
\PP{\EE{S_1 \cond Z_1} \geq \frac{1}{k \, s^2}}
\lesssim k \,  \frac{2^{d+1} \log\p{s}^d}{\p{d - 1}!} \, \frac{1}{s}.
\end{equation}
Now, by exchangeability of the indices $i$, we know that
$$\EE{\EE{S_1 \cond Z_1}} = \EE{S_1} = \frac{1}{s} \EE{\sum_{i = 1}^s S_i} = \frac{1}{s}, $$ moreover, we can verify that
$$ \PP{\EE{S_1 \cond Z_1} \geq \frac{1}{k s^2}}\EE{\EE{S_1 \cond Z_1} \cond \EE{S_1 \cond Z_1} \geq \frac{1}{k s^2}} \sim \frac{1}{s}. $$
By Jensen's inequality, we then see that
\begin{align*}
\EE{\EE{S_1 \cond Z_1}^2}
&\geq \PP{\EE{S_1 \cond Z_1} \geq \frac{1}{ks^2}}\EE{\EE{S_1 \cond Z_1} \cond \EE{S_1 \cond Z_1} \geq \frac{1}{ks^2}}^2 \\
&\sim \frac{s^{-2}}{\PP{\EE{S_1 \cond Z_1} \geq \frac{1}{k s^2}}}
\end{align*}
which, paired with \eqref{eq:sbound}, implies that
$$ \EE{\EE{S_1 \cond Z_1}^2}
\gtrsim \frac{\p{d - 1}!}{2^{d+1} \log\p{s}^d} \, \frac{1}{k \, s}. $$
This is equivalent to \eqref{eq:pnn_bound} because $\EE{\EE{S_1 \cond Z_1}}^2 = 1/s^2$ is negligibly small.

We now return to establishing \eqref{eq:wbound}. Recall that $X_1, \, ..., \, X_s$ are independently and uniformly distributed over $[0, \, 1]^d$, and that we are trying to find points that are $k$-PNNs of a prediction point $x$. For now, suppose that $x = 0$. We know that $X_1$ is a $k$-PNN of 0 if and only if there are at most $2k-2$ other points $X_i$ such that $X_{ij} \leq X_{1j}$ for all $j = 1, \, ..., \, d$ (because the $X$ have a continuous density, there will almost surely be no ties). Thus,
\begin{align}
\label{eq:replace_unif}
\EE[x = 0]{P_1 \cond Z_1}
&= \PP{\text{Binomial}\p{s - 1; \, \prod_{j = 1}^d X_{1j}} \leq 2k - 2} \\
\notag
&\leq \binom{s - 1}{2k - 2} \p{1 - \prod_{j = 1}^d X_{1j}}^{s - 2k + 1} \\
\notag
&\leq s^{2k - 2}  \p{1 - \prod_{j = 1}^d X_{1j}}^{s - 2k + 1},
\end{align}
where the second inequality can be understood as a union bound over all sets of
$s - 2k + 1$ Bernoulli realizations that could be simultaneously 0.
We can check that $X_{1j} \eqd e^{-E_j}$ where $E_j$ is a standard exponential random variable, and so
$$ \EE[x = 0]{P_1 \cond Z_1} \, \overset{d}{\leq} \, s^{2k - 2} \p{1 - \exp\left[-\sum_{j = 1}^d E_j\right]}^{s - 2k + 1}, $$
where $A \, \overset{d}{\leq} \, B$ means that $B - A \geq 0$ almost surely.
Thus,
\begin{align*}
&\PP[x = 0]{\EE{P_1 \cond Z_1} \geq \frac{1}{s^2}} \\
&\ \ \ \ \leq \PP{s^{\p{2k - 2}}\p{1 - \exp\left[-\sum_{j = 1}^d E_j\right]}^{s - 2k + 1} \geq \frac{1}{s^2}} \\
& \ \ \ \ =\PP{\exp\left[-\sum_{j = 1}^d E_j\right] \leq 1 - \p{\frac{1}{s^{2k}}}^\frac{1}{s - 2k + 1}} \\
& \ \ \ \ =\PP{\sum_{j = 1}^d E_j \geq - \log\p{1 - \exp\left[-2k \, \frac{\log\p{s}}{s - 2k + 1}\right]}}.
\end{align*}
Notice that this quantity goes to zero as $s$ gets large. The sum of $d$ standard exponential random variables has a gamma distribution with shape $d$ and scale 1, and
$$  \PP{\sum_{j = 1}^d E_j \geq c} = \frac{\Gamma\p{d, \, c}}{\p{d - 1}!}, $$
where $\Gamma$ is the upper incomplete gamma function. It is well known that
$$ \lim_{c \rightarrow \infty} \frac{\Gamma\p{d, \, c}}{c^{d - 1} \, e^{-c}} = 1, $$
and so
\begin{align*}
&\PP[x = 0]{\EE{P_1 \cond Z_1} \geq \frac{1}{s^2}} \\
&\ \ \ \ \ \ \ \ \lesssim \frac{\p{- \log\p{1 - \exp\left[-2k \, \frac{\log\p{s}}{s - 2k + 1}\right]}}^{d - 1}  \, \p{1 - \exp\left[-2k \, \frac{\log\p{s}}{s - 2k + 1}\right]}}{\p{d - 1}!}.
\end{align*}
We can check that
$$ 1 - \exp\left[-2k \, \frac{\log\p{s}}{s - 2k + 1}\right] \sim 2k \, \frac{\log\p{s}}{s}, $$
letting us simplify the above expression to
\begin{equation}
\label{eq:zero_bound}
\PP[x = 0]{\EE{P_1 \cond Z_1} \geq \frac{1}{s^2}} \lesssim \frac{2k}{\p{d - 1}!} \, \frac{\log\p{s}^d}{s}.
\end{equation}
We thus have obtained a tight expression for our quantity of interested for a prediction point at $x = 0$.

In the case $x \neq 0$, the ambient space around $x$ can be divided into $2^d$ quadrants. In order to check whether $X_i$ is a PNN, we only need to consider other points in the same quadrant, as no point in a different quadrant can prevent $X_i$ from being a PNN. Now, index the quadrants by $l = 1, \, ..., \, 2^d$, and let $v_l$ be the volume of the $l$-th quadrant. By applying \eqref{eq:zero_bound} on the $l$-th quadrant alone, we see that the probability of $\EE{P_1 \cond Z_1} \geq \frac{1}{s^2}$ given that $X_1$ is in the $l$-the quadrant is asymptotically bounded on the order of
$$ \frac{2k}{\p{d - 1}!} \, \frac{\log\p{s}^d}{v_l \, s}. $$
Summing over all quadrants, we find that
\begin{align*}
\PP[x = 0]{\EE{P_1 \cond Z_1} \geq \frac{1}{s^2}}
&\lesssim \sum_{\{l : v_l > 0\}} v_l \,  \frac{2k}{\p{d - 1}!} \, \frac{\log\p{s}^d}{v_l \, s} \\
&= \left|\{l : v_l > 0\}\right|  \frac{2k}{\p{d - 1}!} \, \frac{\log\p{s}^d}{s} \\
&\leq  k\, \frac{2^{d+1}}{\p{d - 1}!} \, \frac{\log\p{s}^d}{s},
\end{align*}
thus establishing \eqref{eq:wbound}.
Finally, to generalize to bounded densities $f$, we note that if $f(x) \leq C$ for all $x \in [0, \, 1]^d$, then
$$ \EE[x = 0]{P_1 \cond Z_1} \leq \PP{\text{Binomial}\p{s - 1; \, C \prod_{j = 1}^d X_{1j}} \leq 2k - 2}, $$
and the previous argument goes though.
\end{proof}

\begin{proof}[Proof of Theorem \ref{theo:pnn}]

Our main task is to show that if $T$ is a regular tree and $\Var{Y \cond X = x} > 0$, then
\begin{equation}
\label{eq:v1bound}
\Var{\EE{T\p{x; \, Z} \cond Z_1}} \gtrsim \Var{\EE{S_1 \cond Z_1}} \Var{Y \cond X = x}.
\end{equation}
Given this result, Lemma \ref{lemm:pnn} then implies that
\begin{align*}
\Var{\EE{T\p{x; \, Z} \cond Z_1}}
&\gtrsim \frac{1}{k} \,\frac{\nu\p{s}}{s} \, \Var{Y \cond X = x}.
\end{align*}
Moreover, by Theorem \ref{theo:bias}, we know that
\begin{align*}
&\EE{Y_{i} \cond X_{i} \in L\p{x; \, Z}} \rightarrow_p \EE{Y \cond X = x}, \eqand\\
&\EE{Y_{i}^2 \cond X_{i} \in L\p{x; \, Z}} \rightarrow_p \EE{Y^2 \cond X = x},
\end{align*}
and so
$$k \Var{T\p{x; \, Z}} \leq \abs{\cb{i : X_i \in L\p{x; \, Z}}} \cdot \Var{T\p{x; \, Z}} \rightarrow_p \Var{Y \cond X = x}, $$
because $k$ remains fixed while the leaf size gets smaller.
Thus, we conclude that
$$ \frac{\Var{\proj{T}\p{x; \, Z}}}{\Var{T\p{x; \, Z}}} \gtrsim k \, \frac{s \, \Var{\EE{T\p{x; \, Z} \cond Z_1}}}{\Var{Y \cond X = x}} \gtrsim \nu\p{s}, $$
as claimed.

Now, in order to verify \eqref{eq:v1bound}, we first recall that by Lemma \ref{lemm:pnn}
\begin{equation}
\label{eq:target_rate}
\Var{\EE{S_1 \cond Z_1}} = \Omega\p{ \frac{1}{s \, \log\p{s}^{d}}}. 
\end{equation}
Thus, any terms that decay faster than the right-hand-side rate can safely be ignored in establishing \eqref{eq:v1bound}.
We begin by verifying that we can take the leaf $L(x)$ containing $x$ to have a small diameter $\diam(L(x))$.
Define the truncated tree predictor
$$ T'\p{x; \, Z} = T\p{x; \, Z} \, 1\p{\cb{\diam\p{L(x)} \leq s^{-\omega}}}, \where
\omega = \frac{1}{2} \,  \frac{\pi}{d} \,  \frac{\log\p{\p{1 - \alpha}^{-1}}}{\log\p{\alpha^{-1}}}, $$
and define similarly the truncated selection variables \smash{$S_i' = S_i \, 1(\cb{\diam\p{L(x)} \leq s^{-\omega}})$}. Now, thanks to the ANOVA decomposition (3rd line), we see that
\begin{align*}
&\Var{\EE{T'\p{x; \, Z} \cond Z_1} - \EE{T\p{x; \, Z} \cond Z_1}} \\
&\ \ \ \ \ \ = \Var{\EE{T\p{x; \, Z} \, 1\p{\cb{\diam\p{L(x)} > s^{-\omega}}}\cond Z_1}} \\
&\ \ \ \ \ \ \leq \frac{1}{s} \Var{T\p{x; \, Z} \, 1\p{\cb{\diam\p{L(x)} > s^{-\omega}}}} \\
&\ \ \ \ \ \ \leq \frac{\sup_{x\in [0, \, 1]^d}\cb{\EE{Y^2 \cond X = x}}}{s} \, \PP{\diam\p{L(x)} > s^{-\omega}},
\end{align*}
where the $\sup$ term is bounded by Lipschitz-continuity of the second moment of $Y$.
Thus, by Lemma \ref{lemm:diameter}, the variance of the difference between $\EE{T' \cond Z_1}$ and $\EE{T \cond Z_1}$ decays faster than the target rate \eqref{eq:target_rate}, and so
$$ \Var{\EE{T\p{x; \, Z} \cond Z_1}} \sim \Var{\EE{T'\p{x; \, Z} \cond Z_1}}, $$
provided that $T'$ satisfies \eqref{eq:v1bound}, as we will see it does.
By the same argument, we also note that
$$ \Var{\EE{S_1' \cond Z_1}} \sim \Var{\EE{S_1 \cond Z_1}}. $$
We can now proceed to analyze $T'$ instead of $T$.

Recall that our goal is to provide a lower bound on the variance of the expectation
of $T'(x; \, Z)$ conditionally on $Z_1$. First, an elementary decomposition shows that
\begin{align*}
&\Var{\EE{T'(x; \, Z) \cond Z_1}} \\
&\ \ \ \ \ \ \ \ = \Var{\EE{T'(x; \, Z) \cond X_1}} + \Var{\EE{T'(x; \, Z) \cond X_1, \, Y_1} - \EE{T'(x; \, Z) \cond X_1}} \\
&\ \ \ \ \ \ \ \ \geq \Var{\EE{T'(x; \, Z) \cond X_1, \, Y_1} - \EE{T'(x; \, Z) \cond X_1}},
\end{align*}
and so it suffices to provide a lower bound for the latter term. Next we note that, thanks to honesty
as in Definition \ref{defi:honest}, part (a),
and i.i.d. sampling,
$$ \EE{T'(x; \, Z) \cond X_1, \, Y_1} - \EE{T'(x; \, Z) \cond X_1} = \EE{S_1' \cond X_1}\p{Y_1 - \EE{Y_1 \cond X_1}}. $$
Because of honesty and our Lipschitz assumption, the above implies that
\begin{align*}
&\Var{\EE{T'(x; \, Z) \cond X_1, \, Y_1} - \EE{T'(x; \, Z) \cond X_1}} \\
&\ \ \ \ \ \ \ \ = \Var{\EE{S_1' \cond X_1}\p{Y_1 - \mu(x)}} + \oo\p{\EE{S_1'^2} s^{-2\omega}},
\end{align*}
where we note that the error term decays as $s^{-(1 + 2\omega)}$, which will be prove to
be negligible relative to the main term.
Finally, we can verify that
\begin{align}
\label{eq:expansion_1}
&\Var{\EE{S'_1 \cond X_1} \p{Y_1 - \mu(x)}}   \\
\notag
 &\ \ \ \ \ \ \ \  =\EE{\EE{S'_1 \cond X_1}^2 \, \EE{\p{Y_1 - \mu\p{x}}^2 \cond X_1}} 
 - \EE{\EE{S'_1 \cond X_1}\EE{Y_1 - \mu(x) \cond X_1}}^2.
\end{align}
Now, because the first two conditional moments of $Y$ given $X$ are Lipschitz, and since $\EE{S'_1 \cond X_1}$ is 0 for $\Norm{X_1 - x}_2 > s^{-\omega}$ thanks to our truncating argument, we see that
\begin{align*}
\EE{\EE{S'_1 \cond X_1}^2 \, \EE{\p{Y_1 - \mu\p{x}}^2 \cond X_1}} 
&\sim \EE{\EE{S'_1 \cond X_1}^2} \Var{Y \cond X = x} \\
&\sim  \EE{\EE{S_1 \cond X_1}^2} \Var{Y \cond X = x}.
\end{align*}
Meanwhile, the second term in the expansion \eqref{eq:expansion_1} is of order $1/s^2$ and thus negligible.
To recap, we have shown that a version of \eqref{eq:v1bound} holds with $T$ replaced by $T'$;
and so \eqref{eq:v1bound} must also hold thanks to the previously established coupling result.
\end{proof}

\begin{proof}[Proof of Corollary \ref{coro:doublesample}]
For a double sample tree, we start by noting that
\begin{align*}
\Var{\proj{T}} &= s \Var{\EE{T \cond Z_1}} = s \Var{\EE{1\p{\cb{1 \in \ii}} T \cond Z_1} + \EE{1\p{\cb{1 \not\in \ii}} T \cond Z_1}} \\
& \geq \frac{s}{2} \Var{\EE{1\p{\cb{1 \in \ii}} T \cond Z_1}} - s \Var{\EE{1\p{\cb{1 \not\in \ii}} T \cond Z_1}} \\
& \sim \frac{s}{8} \Var{\EE{T \cond Z_1} \cond 1 \in \ii} - \frac{s}{4} \Var{\EE{ T \cond Z_1} \cond 1 \not\in \ii},
\end{align*}
where to verify the last line we note that $\PP{1 \in \ii \cond Z_1} = \lfloor s/2 \rfloor$ regardless of $Z_1$.
Now, an immediate application of Theorem \ref{theo:pnn} shows us that
\begin{equation*}
\left\lfloor s/2 \right\rfloor \Var{\EE{T \cond Z_1} \cond 1 \in \ii} \gtrsim {C_{f, \, d}} \big / {\log\p{s}^{d}} \, \Var{T},
\end{equation*}
which corresponds to the rate we seek to establish.
Meanwhile, by standard results going back to \citet{hoeffding1948class},
$$ \left\lceil s/2\right\rceil \Var{\EE{ T \cond Z_1} \cond 1 \not\in \ii} \leq \Var{\EE{T \cond \cb{Z_j : j \not\in \ii}} \cond \ii}; $$
then, Lemma \ref{lemm:diameter} and the argument used to establish Theorem \ref{theo:bias} imply that 
$$ \Var{\EE{T \cond \cb{Z_j : j \not\in \ii}} \cond \ii} = \oo\p{s^{-\frac{\log\p{(1 - \alpha)^{-1}}}{\log\p{\alpha^{-1}}}\frac{\pi}{d}}}, $$
and so the term arising under the $1 \not\in \ii$ condition is negligibly small.
\end{proof}

\subsection{Properties of Subsampled Incremental Base Learners}
\label{sec:proof_ss}

The results presented in this section rely heavily on the Efron-Stein ANOVA decomposition, summarized here for convenience. Suppose we have any symmetric function $T :\Omega^n \rightarrow \RR$, and suppose that $Z_1, \, ..., \, Z_n$ are independent and identically distributed on $\Omega$ such that $\Var{T(Z_1, \, ..., \, Z_n)} < \infty$. Then \citet{efron1981jackknife} show that there exist functions $T_1, \, ..., \, T_n$ such that
\begin{equation}
\label{eq:anova}
T\p{Z_1, \, ..., \, Z_n} = \EE{T} + \sum_{i = 1}^n T_1\p{Z_i} + \sum_{i < j} T_2\p{Z_i, \, Z_j} + ... + T_n\p{Z_1, \, ..., \, Z_n},
\end{equation}
and that all $2^n - 1$ random variables on the right side of the above expression are all mean-zero and uncorrelated. It immediately follows that
\begin{equation}
\label{eq:anova_var}
\Var{T} = \sum_{k = 1}^n \binom{n}{k} V_k, \where V_k = \Var{T_k\p{Z_1, \, ..., \, Z_k}}.
\end{equation}
For our purposes, it is also useful to note that the H\'ajek projection $\proj{T}$ can be written as
$$ \proj{T}\p{Z_1, \, ..., \, Z_n} = \EE{T} + \sum_{i = 1}^n T_1\p{Z_i}, \eqand \Var{\proj{T}} = n \, V_1. $$
Thus, the ANOVA decomposition provides a convenient abstract framework for analyzing our quantities of interest.

\begin{proof}[Proof of Lemma \ref{lemm:hajek}]
Applying the ANOVA decomposition to the individual trees $T$, we see that a random forest estimator $\hmu(x)$ of the form \eqref{eq:rf_defi} can equivalently be written as
\begin{align*}
&\hmu\p{x; \, Z_1, \, ..., \, Z_n} = \EE{T}
+ \binom{n}{s}^{-1} \Bigg(\binom{n - 1}{s - 1} \sum_{i = 1}^n T_1\p{Z_i} \\
&\ \ \ \ \ \ \ \  + \binom{n - 2}{s - 2} \sum_{i < j} T_2\p{Z_i, \, Z_j} 
 + ... + \sum_{i_1 < ... < i_s} T_s\p{Z_{i_1}, \, ..., \, Z_{i_s}}\Bigg).
\end{align*}
The above formula holds because each training point $Z_i$ appears in $\binom{n - 1}{s - 1}$ out of $\binom{n}{s}$ possible subsamples of size $s$, each pair $\p{Z_i, \, Z_j}$ appears is $\binom{n - 2}{s - 2}$ subsets, etc.

Now, we can also show that the H\'ajek projection of $\hmu$ is
$$ \proj{\hmu}\p{x; \, Z_1, \, ..., \, Z_n} = \EE{T} + \frac{s}{n} \sum_{i = 1}^n T_1\p{Z_i} . $$
As with all projections \citep{van2000asymptotic},
$$ \EE{\p{\hmu(x) - \proj{\hmu}(x)}^2} = \Var{\hmu(x) - \proj{\hmu}(x)}. $$
Recall that the $T_k\p{\cdot}$ are all pairwise uncorrelated. Thus, using the notation $s_k = s \cdot \p{s - 1}  \cdots  \p{s - k}$ it follows that
\begin{align*}
\EE{\p{\hmu(x) - \proj{\hmu}(x)}^2}
&= \sum_{k = 2}^s \p{\frac{s_k}{n_k}}^2 \binom{n}{k}  V_k, \\
&= \sum_{k = 2}^s \frac{s_k}{n_k} \, \binom{s}{k}  V_k, \\
&\leq \frac{s_2}{n_2} \sum_{k = 2}^s \binom{s}{k}  V_k, \\
&\leq \frac{s_2}{n_2}\Var{T},
\end{align*}
where on the last line we used \eqref{eq:anova_var}. We recover the stated result by noticing that $s_2/n_2 \leq s^2/n^2$ for all $2 \leq s \leq n$.
\end{proof}

\begin{proof}[Proof of Theorem \ref{theo:gauss}]
Using notation from the previous lemma, let $\sigma_n^2 = s^2/n \, V_1$ be the variance of $\proj{\hmu}$. We know that
$$\sigma_n^2 = \frac{s}{n} \, sV_1 \leq \frac{s}{n} \, \Var{T}, $$
and so $\sigma_n \rightarrow 0$ as desired. Now, by Theorem \ref{theo:pnn} or Corollary \ref{coro:doublesample}
combined with Lemma \ref{lemm:hajek}, we find that
\begin{align*}
\frac{1}{\sigma_n^2} \EE{\p{\hmu\p{x} - \proj{\hmu}\p{x}}^2}
&\leq \p{\frac{s}{n}}^2 \, \frac{\Var{T}}{\sigma_n^2} \\
&= \frac{s}{n} \, \Var{T} \big/ \Var{\proj{T}} \\
&\lesssim \frac{s}{n} \, \frac{\log\p{s}^{d}}{C_{f, \, d}/4} \\
&\rightarrow 0
\end{align*}
by hypothesis.
Thus, by Slutsky's lemma, it suffices to show that \eqref{eq:gauss} holds for the H\'ajek projection of the random forest $\proj{\hmu}(x)$.

By our definition of $\sigma_n$, all we need to do is check that $\proj{\hmu}$ is asymptotically normal. One way to do so is using the
Lyapunov central limit theorem \citep[e.g.,][]{billingsley2008probability}. Writing
$$ \proj{\hmu}(x) = \frac{s}{n} \sum_{i = 1}^n \p{\EE{T \cond Z_i} - \EE{T}}, $$
it suffices to check that
\begin{equation}
\label{eq:lyapunov}
\limn \sum_{i = 1}^n \EE{\abs{\EE{T \cond Z_i} - \EE{T}}^{2 + \delta}} \, \big/ \, \p{\sum_{i = 1}^n \Var{\EE{T \cond Z_i}}}^{1 + \delta/2} = 0.
\end{equation}
Using notation from Section \ref{sec:incremental}, we write $T = \sum_{i = 1}^n S_i Y_i$.
Thanks to honesty, we can verify that for any index $i > 1$, $Y_i$ is independent of $S_i$
conditionally on $X_i$ and $Z_1$, and so
$$ \EE{T \cond Z_1} - \EE{T} = \EE{S_1 \p{Y_1 - \EE{Y_1 \cond X_1}} \cond Z_1} + \p{\EE{\sum_{i = 1}^n S_i \EE{Y_i \cond X_i} \cond Z_1} - \EE{T}}. $$
Note that the two right-hand-side terms above are both mean-zero.
By Jensen's inequality, we also have that
\begin{align*}
&2^{-\p{1 + \delta}} \, \EE{\abs{\EE{T \cond Z_1} - \EE{T}}^{2 + \delta}}   
\leq \EE{\abs{\EE{S_1 \p{Y_1 - \EE{Y_1 \cond X_1}} \cond Z_1}}^{2 + \delta}} \\
& \ \ \ \ \ + \EE{\abs{ \EE{\sum_{i = 1}^n S_i \EE{Y_i \cond X_i} \cond Z_1}- \EE{T}}^{2 + \delta}}.
\end{align*}
Now, again by honesty, $\EE{S_1 \cond Z_1} = \EE{S_1 \cond X_1}$,
and so our uniform $(2+\delta)$-moment bounds on the distribution of $Y_i$ conditional on $X_i$ implies that
\begin{align*}
 \EE{\abs{\EE{S_1 \p{Y_1 - \EE{Y_1 \cond X_1}} \cond Z_1}}^{2 + \delta}}
&= \EE{\EE{S_1 \cond X_1}^{2+\delta} \abs{Y_1 - \EE{Y_1 \cond X_1}}^{2 + \delta}}\\
&\leq M \EE{\EE{S_1 \cond X_1}^{2+\delta}} \leq M \EE{\EE{S_1 \cond X_1}^{2}},
\end{align*}
because $S_1 \leq 1$.
Meanwhile, because $\EE{Y \cond X = x}$ is Lipschitz,
we can define $u := \sup \cb{\abs{\EE{Y \cond X = x}} : x \in [0, \, 1]^d}$, and see that
\begin{align*}
&\EE{\abs{ \EE{\sum_{i = 1}^n S_i \EE{Y_i \cond X_i} \cond Z_1}- \EE{T}}^{2 + \delta}}
 \leq (2u)^\delta \Var{\EE{\sum_{i = 1}^n S_i \EE{Y_i \cond X_i} \cond Z_1}} \\
&\ \ \ \ \ \ \ \ \leq 2^{1+\delta}u^{2 + \delta} \p{\EE{\EE{S_1 \cond Z_1}^2} + \Var{(n - 1) \EE{S_2 \cond Z_1}}}
\leq (2u)^{2 + \delta} \EE{\EE{S_1 \cond X_1}^2}.
\end{align*}
Thus, the condition we need to check simplifies to
$$ \limn n \, \EE{\EE{S_1 \cond X_1}^{2}} \, \big/ \, \p{n \Var{\EE{T \cond Z_1}}}^{1 + \delta/2} = 0. $$
Finally, as argued in the proofs of Theorem \ref{theo:pnn} and Corollary \ref{coro:doublesample},
\begin{equation*}
\Var{\EE{T \cond Z_1}} = \Omega\p{ \EE{\EE{S_1 \cond X_1}^2} \Var{Y \cond X = x}}.
\end{equation*}
Because $\Var{Y \cond X = x} > 0$ by assumption, we can use Lemma \ref{lemm:pnn}
to conclude our argument, noting that
\begin{align*}
\p{n \, \EE{\EE{S_1 \cond X_1}^2}}^{-\delta/2} 
\lesssim \p{\frac{C_{f, \, d}}{2k} \frac{n}{s \log(s)^d}}^{-\delta/2},
\end{align*}
which goes to 0 thanks to our assumptions on the scaling of $s$ (and the factor 2 comes from potentially using
a double-sample tree).
\end{proof}

\begin{proof}[Proof of Theorem \ref{theo:ij}]
Let $F$ denote the distribution from which we drew $Z_1, \, ..., \, Z_n$. Then, the variance $\sigma_n^2$ of the H\'ajek projection of $\hmu(x)$ is
\begin{align*}
\sigma_n^2
&= \sum_{i = 1}^n \p{\EE[Z \sim F]{\hmu(x) \cond Z_i} - \EE[Z \sim F]{\hmu(x)}}^2 \\
&= \frac{s^2}{n^2}\sum_{i = 1}^n \p{\EE[Z \sim F]{T \cond Z_i} - \EE[Z \sim F]{T}}^2,
\end{align*}
whereas we can check that the infinitesimal jackknife as defined in \eqref{eq:hvij} is equal to
$$ \hVIJ =  \frac{n - 1}{n} \p{\frac{n}{n - s}}^2 \frac{s^2}{n^2} \sum_{i = 1}^n \p{\EE[Z^* \subset \hF]{T \cond Z_1^* = Z_i} - \EE[Z^* \subset \hF]{T}}^2, $$
where $\hF$ is the empirical distribution on $\{Z_1, \, ..., \, Z_n\}$. Recall that we are sampling the $Z^*$ from $\hF$ without replacement.

It is useful to write our expression of interest $\hVIJ$ using the H\'ajek projection $\proj{T}$ of $T$:
\begin{align*}
&\hVIJ =  \frac{n - 1}{n} \p{\frac{n}{n - s}}^2 \frac{s^2}{n^2} \sum_{i = 1}^n \p{A_i + R_i}^2, \where \\
&\ \ \ \ \ A_i = \EE[Z^* \subset \hF]{\proj{T} \cond Z_1^* = Z_i} - \EE[Z^* \subset \hF]{\proj{T}} \eqand \\
&\ \ \ \ \ R_i = \EE[Z^* \subset \hF]{T - \proj{T} \cond Z_1^* =  Z_i}  - \EE[Z^* \subset \hF]{T - \proj{T}}.
\end{align*}
As we show in Lemma \ref{lemm:tech_main}, the main effects $A_i$ give us $\sigma_n^2$, in that
\begin{equation}
\label{eq:ij_main}
\frac{1}{\sigma_n^2}  \frac{s^2}{n^2} \sum_{i = 1}^n A_i^2 \rightarrow_p 1.
\end{equation}
Meanwhile, Lemma \ref{lemm:tech_resid} establishes that the $B_i$ all satisfy
\begin{equation}
\label{eq:ij_resid}
\EE{R_i^2} \lesssim \frac{2}{n}\Var{T\p{x;, \, Z_1, \, ..., \, Z_{s}}},
\end{equation}
and so
\begin{align*}
\EE{\frac{s^2}{n^2} \sum_{i = 1}^n R_i^2}
&\lesssim \frac{2\, s^2}{n^2} \Var{T\p{x;, \, Z_1, \, ..., \, Z_{s}}} \\
&\lesssim \frac{2\, s}{n} \log\p{n}^{d} \sigma^2_n.
\end{align*}
Because all terms are positive and $s  \log\p{n}^{d} / {n}$ goes to zero by hypothesis, Markov's inequality implies that
\begin{equation*}
\frac{1}{\sigma_n^2}  \frac{s^2}{n^2} \sum_{i = 1}^n R_i^2 \rightarrow_p 0.
\end{equation*}
Using Cauchy-Schwarz to bound the cross terms of the form $A_iR_i$, and noting that $\limn n (n - 1)/(n - s)^2 = 1$, we can thus conclude that $\hVIJ / \sigma_n^2$ converges in probability to 1.
\end{proof}

\begin{lemm}
\label{lemm:tech_main}
Under the conditions of Theorem \ref{theo:ij}, \eqref{eq:ij_main} holds.
\proof
We can write
\begin{align*}
A_i
&= \EE[Z^* \subset \hF]{\proj{T} \cond Z_1^* = Z_i} - \EE[Z^* \subset \hF]{\proj{T}} \\
&= \p{1 - \frac{s}{n}} T_1\p{Z_i} + \p{\frac{s - 1}{n - 1} - \frac{s}{n}} \sum_{j \neq i} T_1\p{Z_j},
\end{align*}
and so our sum of interest is asymptotically unbiased for $\sigma^2_n$:
\begin{align*}
\EE{\frac{n - 1}{n} \p{\frac{n}{n - s}}^2 \frac{s^2}{n^2} \sum_{i = 1}^n A_i^2}
&= \frac{s^2}{n} \EE{T_1\p{Z}^2} \\
&= \frac{s}{n} \Var{\proj{T}\p{Z_1, \, ..., \, Z_s}} \\
&= \sigma^2_n.
\end{align*}
Finally, to establish concentration, we first note that the above calculation also implies that
$\sigma_n^{-2} \frac{s^2}{n^2} \sum_{i = 1}^n \p{A_i - T_1(Z_i)}^2 \rightarrow_p 0$.
Meanwhile, following the argumentation in the proof of Theorem \ref{theo:gauss},
we can apply $(2+\delta)$-moment bounds on $Y_i - \EE{Y_i \cond X_i}$ to verify that
$$ \lim_{u \rightarrow \infty} \lim_{n \rightarrow \infty} \p{\EE{T_1^2(Z_1)} - \EE{\min\cb{u, \,  T_1^2(Z_1)}}} = 0, $$
and so we obtain can apply a truncation-based argument to derive a weak law of large
numbers for triangular arrays for $\sigma_n^{-2} \frac{s^2}{n^2} \sum_{i = 1}^n T_1^2(Z_i)$.
\endproof
\end{lemm}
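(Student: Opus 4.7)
The plan is to use the Efron--Stein ANOVA decomposition to write $\proj{T}(Z_1, \dots, Z_s) = \EE{T} + \sum_{i=1}^s T_1(Z_i)$, give a closed-form expression for $A_i$ in terms of $T_1$, and then argue that $\sum_i A_i^2$ is well-approximated by $\sum_i T_1(Z_i)^2$ at the scale $\sigma_n^2$. The latter sum is then handled by a weak law of large numbers. The main subtlety is that $T_1 = T_{1,n}$ depends on $n$ through the subsample size $s = s_n$, so we operate in a triangular-array regime and will need a uniform integrability condition rather than a pure i.i.d.\ LLN.

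Because the H\'ajek projection is additive in the training examples, $A_i$ is determined by the subsampling marginals $\PP{Z_k^* = Z_j \cond Z_1^* = Z_i}$ for sampling without replacement from $\widehat F$. A direct count gives
\[
A_i = \p{1 - \frac{s}{n}} T_1(Z_i) + \p{\frac{s-1}{n-1} - \frac{s}{n}} \sum_{j \ne i} T_1(Z_j).
\]
Using $\EE{T_1(Z)} = 0$ and independence, a moment calculation yields $\EE{A_i^2} = \frac{(n-s)^2}{n(n-1)} V_1$, so $\EE{(s^2/n^2)\sum_i A_i^2} = \p{(n-s)^2/(n(n-1))}\sigma_n^2 \sim \sigma_n^2$ under the assumption $s/n \to 0$. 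Expanding $A_i - T_1(Z_i) = -(s/n)\,T_1(Z_i) - ((n-s)/(n(n-1)))\sum_{j \ne i} T_1(Z_j)$ and computing its second moment yields $\EE{(s^2/n^2)\sum_i(A_i - T_1(Z_i))^2} = o(\sigma_n^2)$, so by Markov $(s^2/n^2)\sum_i(A_i - T_1(Z_i))^2 / \sigma_n^2 \to_p 0$. Cauchy--Schwarz then reduces \eqref{eq:ij_main} to showing $(nV_{1,n})^{-1}\sum_{i=1}^n T_{1,n}(Z_i)^2 \to_p 1$.

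This last step is a weak law of large numbers for the triangular array $(T_{1,n}(Z_i))_i$ of centered i.i.d.\ variables with variance $V_{1,n}$, and the standard tool is uniform integrability of $T_{1,n}(Z)^2 / V_{1,n}$. I would verify this by reusing the decomposition from the proof of Theorem \ref{theo:gauss}: honesty gives $T_{1,n}(Z_1) = \EE{S_1 \cond X_1}(Y_1 - \EE{Y_1 \cond X_1}) + R$, with $R$ uniformly bounded (it captures the leaf-wise variation of $\EE{Y \cond X}$, controlled by Lipschitz continuity and the vanishing diameter of the leaf). The $(2+\delta)$-moment hypothesis on $Y - \EE{Y \cond X}$ together with $S_1 \in [0,1]$ gives $\EE{|T_{1,n}(Z)|^{2+\delta}} \lesssim \EE{\EE{S_1 \cond X_1}^2}$, while the matching lower bound $V_{1,n} \gtrsim \EE{\EE{S_1 \cond X_1}^2}\cdot\Var{Y \cond X = x}$ follows from the argument used to prove Theorem \ref{theo:pnn}. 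Combining these furnishes uniform $L^{1+\delta/2}$ control on $T_{1,n}(Z)^2 / V_{1,n}$, and a standard truncation argument then delivers the WLLN. The main obstacle is precisely this uniform integrability step: a Chebyshev bound on $\Var{\sum_i T_1(Z_i)^2}$ would require fourth-moment control on $T_{1,n}$ that is not available and is in fact false when leaves shrink with $n$, so the triangular-array character of the problem must be handled head-on through the $(2+\delta)$-moment assumption.
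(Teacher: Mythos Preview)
Your proof takes essentially the same route as the paper's: compute $A_i$ in closed form from the subsampling marginals, show $\sigma_n^{-2}\frac{s^2}{n^2}\sum_i(A_i-T_1(Z_i))^2\to_p 0$ by a second-moment calculation, and then reduce to a truncation-based triangular-array WLLN for $\sigma_n^{-2}\frac{s^2}{n^2}\sum_i T_1(Z_i)^2$ driven by the $(2+\delta)$-moment hypothesis on $Y-\EE{Y\cond X}$. One small caution: the two inequalities you cite, $\EE{|T_1|^{2+\delta}}\lesssim \EE{\EE{S_1\cond X_1}^2}$ and $V_1\gtrsim\EE{\EE{S_1\cond X_1}^2}$, do not by themselves combine into a uniform $L^{1+\delta/2}$ bound on $T_1^2/V_1$ (the implied ratio is of order $V_1^{-\delta/2}\to\infty$), so the uniform-integrability step requires a bit more care than your sketch suggests---though the paper's own argument is equally terse at exactly this point.
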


\begin{lemm}
\label{lemm:tech_resid}
Under the conditions of Theorem \ref{theo:ij}, \eqref{eq:ij_resid} holds.
\proof
Without loss of generality, we establish \eqref{eq:ij_resid} for $R_1$. Using the ANOVA decomposition \eqref{eq:anova}, we can write our term of interest as
\begin{align*}
R_1 &= \EE[Z^* \subset \hF]{T - \proj{T} \cond Z_1^* =  Z_1}  - \EE[Z^* \subset \hF]{T - \proj{T}} \\
&= \p{\frac{s - 1}{n - 1} - \binom{s}{2} \big/ \binom{n}{2}} \sum_{i = 2}^n T_2 \p{Z_1, \, Z_i} \\
&\ \ \ \ \ \ \ \ + \p{\binom{s-1}{2} \big/ \binom{n - 1}{2} - \binom{s}{2} \big/ \binom{n}{2}} \sum_{2 \leq i < j \leq n} T_2 \p{Z_i, \,  Z_j} \\
&+  \p{\binom{s-1}{2} \big/ \binom{n - 1}{2} - \binom{s}{3} \big/ \binom{n}{3}} \sum_{2 \leq i < j \leq n} T_3 \p{Z_1, \, Z_i, \,  Z_j} \\
&\ \ \ \ \ \ \ \ + \p{\binom{s-1}{3} \big/ \binom{n - 1}{3} - \binom{s}{3} \big/ \binom{n}{3}} \sum_{2 \leq i < j < k \leq n} T_3 \p{Z_i, \,  Z_j, \, Z_k} \\
&+ \ldots
\end{align*}
Because all the terms in the ANOVA expansion are mean-zero and uncorrelated, we see using notation from \eqref{eq:anova_var} that
\begin{align*}
\EE{R_1^2}
&= \p{n - 1} \p{\frac{s - 1}{n - 1} - \binom{s}{2} \big/ \binom{n}{2}}^2 V_2 \\
&\ \ \ \ \ \ \ \ +\binom{n - 1}{2}\p{\binom{s-1}{2} \big/ \binom{n - 1}{2} - \binom{s}{2} \big/ \binom{n}{2}}^2 V_2\\
&+  \binom{n - 1}{2}\p{\binom{s-1}{2} \big/ \binom{n - 1}{2} - \binom{s}{3} \big/ \binom{n}{3}}^2 V_3 \\
&\ \ \ \ \ \ \ \ + \binom{n - 1}{3}\p{\binom{s-1}{3} \big/ \binom{n - 1}{3} - \binom{s}{3} \big/ \binom{n}{3}}^2 V_3 \\
&+ \ldots
\end{align*}
Recall that
$$ \sum_{k = 1}^s \binom{s}{k} V_k = \Var{T\p{Z_1, \, ..., \, Z_s}}. $$
The above sum is maximized when all the variance is contained in second-order terms, and $\binom{s}{2} V_2 = \Var{T}$. This implies that
\begin{align*}
\EE{R_1^2}
& \lesssim \p{n - 1} \p{\frac{s - 1}{n - 1} - \binom{s}{2} \big/ \binom{n}{2}}^2 \binom{s}{2}^{-1} \Var{T\p{Z_1, \, ..., \, Z_s}} \\
&\sim \frac{2}{n} \Var{T\p{Z_1, \, ..., \, Z_s}},
\end{align*}
thus completing the proof.
\endproof
\end{lemm}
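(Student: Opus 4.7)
The plan is to control $R_1$ (WLOG take $i = 1$) by expanding $T$ in the Efron–Stein ANOVA basis and exploiting the orthogonality of the resulting components. Applying \eqref{eq:anova} to $T$ as a symmetric function of its $s$ arguments, and subtracting the Hájek projection $\proj{T}$, kills the constant and first-order terms, leaving
\begin{equation*}
T - \proj{T} = \sum_{k \geq 2} \sum_{i_1 < \ldots < i_k} T_k\p{Z_{i_1}, \ldots, Z_{i_k}}.
\end{equation*}
I will then compute $\EE[Z^* \subset \hF]{T_k(Z^*_{i_1}, \ldots, Z^*_{i_k}) \cond Z^*_1 = Z_1}$ term-by-term. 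Since $Z^*$ is sampled without replacement, for any fixed $k$-subset $\{Z_{j_1}, \ldots, Z_{j_k}\} \subset \{Z_1, \ldots, Z_n\}$ the probability that all $k$ points appear in the subsample equals $\binom{s-1}{k-1}/\binom{n-1}{k-1}$ if $Z_1 \in \{Z_{j_1}, \ldots, Z_{j_k}\}$ and $\binom{s-1}{k}/\binom{n-1}{k}$ otherwise, while the unconditional probability is $\binom{s}{k}/\binom{n}{k}$. Taking differences produces an explicit formula for $R_1$ as a linear combination of two types of ANOVA sums: those over $k$-index sets containing $1$, and those over $k$-index sets not containing $1$.

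The ANOVA orthogonality $\EE{T_k(Z_{i_1}, \ldots, Z_{i_k}) \cdot T_\ell(Z_{j_1}, \ldots, Z_{j_\ell})} = 0$ whenever $\{i_1, \ldots, i_k\} \neq \{j_1, \ldots, j_\ell\}$ then collapses the second moment of $R_1$ into a diagonal sum
\begin{equation*}
\EE{R_1^2} = \sum_{k \geq 2} c_k(s, n) V_k,
\end{equation*}
where each $c_k(s, n)$ is an explicit combinatorial coefficient obtained by squaring the aforementioned differences of ratios and multiplying by the number of corresponding index sets. In particular, the $V_2$ contribution has coefficient $c_2(s,n) = (n-1)[(s-1)/(n-1) - \binom{s}{2}/\binom{n}{2}]^2 + \binom{n-1}{2}[\binom{s-1}{2}/\binom{n-1}{2} - \binom{s}{2}/\binom{n}{2}]^2$, which simplifies asymptotically to $\sim 2(s-1)^2/n$, so $c_2(s,n)/\binom{s}{2} \sim 4/n$.

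To turn the ANOVA sum into the claimed bound, I will use the constraint $\sum_{k \geq 1} \binom{s}{k} V_k = \Var{T}$ and optimize: the worst case for the ratio $\EE{R_1^2}/\Var{T}$ subject to this constraint is achieved by concentrating all variance in the index $k^*$ that maximizes $c_k(s,n)/\binom{s}{k}$. A direct ratio comparison shows this maximum occurs at $k^* = 2$, because for $k > 2$ each factor $\binom{s-1}{k-1}/\binom{n-1}{k-1} - \binom{s}{k}/\binom{n}{k}$ is of strictly smaller order in $s/n$ relative to $\binom{s}{k}/\binom{n}{k}$ than the $k=2$ analogue. Substituting the $V_2$-only extremal configuration then yields $\EE{R_1^2} \lesssim (2/n) \Var{T(x; Z_1, \ldots, Z_s)}$, as required.

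The main obstacle is the last step: verifying that $c_k(s,n)/\binom{s}{k}$ is in fact maximized at $k = 2$, rather than at some higher-order interaction. This amounts to an elementary but delicate combinatorial calculation in which one must expand $[\binom{s-1}{k-1}/\binom{n-1}{k-1} - \binom{s}{k}/\binom{n}{k}]^2$ to leading order in $s/n$ and compare across $k$. Once this monotonicity is established, the extremal argument collapses the multi-parameter variance optimization into a clean worst-case bound.
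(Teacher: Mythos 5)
Your proposal follows essentially the same route as the paper: expand $T - \proj{T}$ in the Efron--Stein ANOVA basis, compute the sampling-without-replacement inclusion probabilities to get explicit coefficients on each $T_k$ term conditionally on $Z_1^* = Z_1$ (split into index sets containing and not containing index 1), invoke ANOVA orthogonality to diagonalize $\EE{R_1^2}$, and then maximize subject to $\sum_k \binom{s}{k} V_k = \Var{T}$ by concentrating all variance in $V_2$. Two small remarks. First, your intermediate asymptotic $c_2(s,n)/\binom{s}{2} \sim 4/n$ is off by a factor of two: with $A_2 = \frac{s-1}{n-1} - \frac{s(s-1)}{n(n-1)} = \frac{(s-1)(n-s)}{n(n-1)}$ and the $B_2$ term contributing only a relative $O(1/n)$ correction, one gets $c_2 \sim (s-1)^2(n-s)^2/[n(n-1)(n-2)]$ and hence $c_2/\binom{s}{2} \sim 2/n$; fortunately your final stated bound $\EE{R_1^2} \lesssim \frac{2}{n}\Var{T}$ agrees with the paper. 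Second, you correctly flag that the extremal claim requires verifying that $c_k(s,n)/\binom{s}{k}$ is maximized at $k = 2$; the paper asserts this without proof, and your proposal identifies it as an obstacle without fully discharging it either, so on that point you are no less complete than the published argument (the verification does go through: the ratio $\gamma_{k+1}/\gamma_k = (k+1)(s-k)/[k(n-k)] < 1$ once $s/n$ is small, making the dominant term decreasing in $k$).
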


\begin{proof}[Proof of Proposition \ref{prop:finite_sample}]
Let $N^*_i$ denote whether or not the $i$-training example was used for a subsample, as in \eqref{eq:hvij_plain}. For trivial trees
$$ T(x; \, \xi, \, Z_{i_1}, \, ..., \, Z_{i_s}) = \frac{1}{s} \sum_{j = 1}^s Y_{i_j} $$
we can verify that for any $i = 1, \, ..., \, n$, \smash{$\EE[*]{\hmu^*}\EE{N^*_i} = s/n \ \bY$},
\begin{align*}
&\EE[*]{\hmu^* \, N^*_1}
= \frac{s}{n} \p{\frac{Y_i}{s} + \frac{s - 1}{s} \, \frac{n \, \bY - Y_i}{n - 1}} 
= \frac{1}{n} \, \frac{n - s}{n - 1} \, Y_i + \frac{s - 1}{n - 1} \, \bY, \text{ and  } \\
 &\Cov[*]{\hmu^*, \, N^*_i}
 = \frac{1}{n} \, \frac{n - s}{n - 1} \, Y_i + \p{\frac{s - 1}{n - 1} - \frac{s}{n}} \bY 
 = \frac{1}{n - 1} \, \frac{n - s}{n} \p{Y_i - \bY}.
\end{align*}
Thus, we find that
\begin{align*}
\hVIJ
&= \frac{n - 1}{n} \p{\frac{n}{n - s}}^2 \sum_{i = 1}^n \Cov[*]{\hmu^*, \, N^*_i}^2 \\
&= \frac{1}{n \, \p{n - 1}} \sum_{i = 1}^n \p{Y_i - \bY}^2 \\
&= \hV_{simple},
\end{align*}
as we sought to verify.
\end{proof}

\subsection{Extension to Causal Forests}
\label{sec:proof_causal}

\begin{proof}[Proof of Theorem \ref{theo:cmp_forest}]
Our argument mirrors the proof of Theorem \ref{theo:intro}.
The main steps involve bounding the bias of causal forests with an analogue to Theorem \ref{theo:bias} and their incrementality using an analogue to Theorem \ref{theo:pnn}. In general, we find that the same arguments as used with regression forests go through, but the constants in the results get worse by a factor $\varepsilon$ depending on the amount of overlap \eqref{eq:overlap}.
Given these results, the subsampling-based argument from Section \ref{sec:hajek} can be reproduced almost verbatim, and the final proof of Theorem \ref{theo:cmp_forest} is identical to that of Theorem \ref{theo:intro} presented at the beginning of Appendix \ref{sec:proofs}.

{\bf Bias.} \sloppy{Under the conditions of Lemma \ref{lemm:diameter}, suppose that \smash{$\EE{Y^{(0)} \cond X = x}$} and \smash{$\EE{Y^{(1)} \cond X = x}$} are Lipschitz continuous, that the trees $\Gamma$ comprising the random forest are honest, and, moreover, that the overlap condition \eqref{eq:overlap} holds for some $\varepsilon > 0$.}
These conditions also imply that \smash{$|\mathbb{E}[Y^{(0)} \cond X = x]|, \, |\mathbb{E}[Y^{(1)} \cond X = x]| \leq M$} for some constant $M$, for all $x \in [0, \, 1]^d$.
Then, provided that $\alpha \leq 0.2$, the bias of the random forest at $x$ is bounded by
$$ \abs{\EE{\htau\p{x}} - \tau\p{x}} \lesssim 2M \, d \, \p{\frac{\varepsilon \, s}{2k - 1}}^{-\frac{1}{2} \frac{\log\p{\p{1 - \alpha}^{-1}}}{\log\p{\alpha^{-1}}} \, \frac{\pi}{d}}. $$
To establish this claim, we first seek with an analogue to Lemma \ref{lemm:diameter}, except now $s$ in \eqref{eq:diam_pf_1} is replaced by $s_{\min}$, i.e., the minimum of the number of cases (i.e., observations with $W_i = 1$) or controls (i.e., observations with $W_i = 0$) in the sample. A straight-forward computation then shows that $s_{\min}/s \gtrsim \varepsilon$, and that a variant of  \eqref{eq:diam_pf_2} where we replace $s$ with $\varepsilon s$ still holds for large $s$.
Next, to bound the bias itself, we start by applying unconfoundedness as in \eqref{eq:unconf_use}; then, the argument of Theorem \ref{theo:bias} goes through without modifications, provided we replace every instance of ``$s$'' with ``$\varepsilon s$''.

{\bf Incrementality.} Suppose that the conditions of Lemma \ref{lemm:pnn} hold and that $\Gamma$ is an honest $k$-regular causal tree in the sense of Definitions \ref{defi:honest_cmp} and \ref{defi:regular_cmp}. Suppose moreover that \smash{$\EE{Y^{(0/1)} \cond X = x}$} and \smash{$\Var{Y^{(0/1)} \cond X = x}$} are all Lipschitz continuous at $x$,
and that $\Var{Y \cond X = x} > 0$. Suppose, finally, that the overlap condition \eqref{eq:overlap} holds with $\varepsilon > 0$. Then $T$ is $\nu\p{s}$-incremental at $x$ with
$$ \nu\p{s} = \varepsilon \, {C_{f, \, d}} \, \big / \, {\log\p{s}^{d}}, $$
where $C_{f, \, d}$ is the constant from Lemma \ref{lemm:pnn}.

To prove this claim, we again focus on the case where $f(x) = 1$, in which case we use $C_{f, \, d} = 2^{-(d + 1)}\p{d - 1}!$.
We begin by setting up notation as in the proof of Lemma \ref{lemm:pnn}. We write our causal tree as $\Gamma\p{x; \, Z} = \sum_{i = 1}^s S_i Y_i$, where
$$ S_i =
\begin{cases}
\abs{\cb{i : X_i \in L(x; \, Z)}, \, W_i = 1}^{-1} & \text{ if } X_i \in L(x; \, Z) \eqand W_i = 1, \\
-\abs{\cb{i : X_i \in L(x; \, Z)}, \, W_i = 0}^{-1} & \text{ if } X_i \in L(x; \, Z) \eqand W_i = 0, \\
0 & \text{ else,}
\end{cases}$$
where $L(x; \, Z)$ denotes the leaf containing $x$, and let
$$P_i^W = 1\p{\cb{\text{$X_i$ is a $k$-PNN of $x$ among points with treatment status $W_i$}}}.$$
Finally, in a break from Lemma \ref{lemm:pnn}, define $w_{\min}(x; \, Z)$ as the minority class within the leaf $L(x; \, Z)$; more formally,
$$w_{\min} = 1\p{\cb{\abs{\cb{i : X_i \in L(x; \, Z)}, \, W_i = 1} \leq \abs{\cb{i : X_i \in L(x; \, Z)}, \, W_i = 0}}}. $$
By regularity of $\Gamma$, we know that the leaf $L(x; \, Z)$ can contain at most $2k - 1 $ examples from its minority class, and so $P_i^W = 0$ and $W = w_{\min}$ together imply that $S_i = 0$. Thus, we can verify that
$$ \EE{\abs{S_1} \, 1\p{\cb{W_1 = w_{\min}}} \cond Z_1} \leq \frac{1}{k} \, \EE{P_1^W \cond Z_1}. $$
We are now ready to use the same machinery as before. The random variables $P_1^W$ now satisfy
$$ \PP{\EE{P_1^W \cond Z_1} \geq \frac{1}{s^2 \, \PP{W = W_1}^2}} \lesssim k \, \frac{2^{d+1} \log\p{s}^d}{\p{d - 1}!} \, \frac{1}{s \, \PP{W = W_1}}; $$
by the above argument and $\varepsilon$-overlap \eqref{eq:overlap}, this immediately implies that
$$ \PP{\EE{\abs{S_1} \, 1\p{\cb{W_1 = w_{\min}} \cond Z_1}} \geq \frac{1}{k \, \varepsilon^2 \, s^2}}
\lesssim k \,  \frac{2^{d+1} \log\p{s}^d}{\p{d - 1}!} \, \frac{1}{\varepsilon s}. $$
By construction, we know that
$$ \EE{ \EE{\abs{S_1} \, 1\p{\cb{W_1 = w_{\min}}} \cond Z_1}} =  \EE{\abs{S_1} \, 1\p{\cb{W_1 = w_{\min}}}} = \frac{1}{s}, $$
which by the same argument as before implies that
$$ \EE{\EE{\abs{S_1} \, 1\p{\cb{W_1 = w_{\min}}} \cond Z_1}^2}
\gtrsim \frac{\p{d - 1}!}{2^{d+1} \log\p{s}^d} \, \frac{\varepsilon}{k \, s}. $$
By monotonicity, we then conclude that
$$ \EE{\EE{S_1 \cond Z_1}^2} =  \EE{\EE{\abs{S_1} \cond Z_1}^2}
\gtrsim \frac{\p{d - 1}!}{2^{d+1} \log\p{s}^d} \, \frac{\varepsilon}{k \, s}. $$
The second part of the proof follows from a straight-forward adaptation of Theorem \ref{theo:pnn}.
\end{proof}
\end{appendix}

\end{document}